\documentclass[a4paper,11pt]{article}

\usepackage[T1]{fontenc}
\usepackage[utf8]{inputenc}
\usepackage{fullpage}
\usepackage{times}
\usepackage{microtype}
\usepackage{amsmath,amssymb,amsthm,mathtools}
\usepackage{booktabs}
\usepackage{float}
\usepackage{enumitem}
\usepackage[small]{caption}
\usepackage{authblk}
\usepackage{natbib}
\usepackage[hidelinks]{hyperref}
\usepackage[capitalize,nameinlink]{cleveref}
\usepackage{xcolor}
\usepackage{algorithm}
\usepackage{algpseudocode}

\setlist[enumerate,1]{label=(\roman*),leftmargin=2.2em,itemsep=0.25em,topsep=0.4em}

\newtheorem{theorem}{Theorem}[section]
\newtheorem{proposition}[theorem]{Proposition}
\newtheorem{lemma}[theorem]{Lemma}
\newtheorem{corollary}[theorem]{Corollary}
\theoremstyle{definition}
\newtheorem{definition}[theorem]{Definition}
\theoremstyle{remark}

\newcommand{\eps}{\varepsilon}

\newcommand{\E}{\mathbb{E}}
\newcommand{\Prb}{\mathbb{P}}
\newcommand{\one}{\mathbf{1}}
\newcommand{\calA}{\mathcal{A}}

\newcommand{\calE}{\mathcal{E}}

\newcommand{\DA}{\mathsf{DA}}
\newcommand{\Like}{\mathsf{Like}}
\newcommand{\Rand}{\mathsf{Rand}}

\title{\bf Online Fair Division Against an Oblivious Adversary}
\author[]{Saar Cohen}
\author[]{Nicholas Teh}
\author[]{Michael Wooldridge}
\affil[]{University of Oxford, UK}
\date{\vspace{-10mm}}

\begin{document}
\maketitle

\begin{abstract}
We study the online allocation of indivisible goods among $n$ agents, where each good must be
allocated immediately and irrevocably upon arrival. Against an adaptive adversary, Neoh and Teh [2026] proved that no algorithm
can guarantee a positive approximation to proportionality up to one good (PROP1) that is
independent of the number of goods, and the same holds for proportionality up to $k$ goods
(PROP$k$) for any fixed $k$. We instead consider an oblivious adversary, which fixes the
input in advance. Choo et al. [2026] showed that the uniformly random allocation returns a
$\Theta(1/\log(n/\delta))$-PROP1 allocation with probability at least $1-\delta$. We improve
this to $\Omega(1/\log\log(n/\delta))$; our algorithm does not take $\delta$ as input, so the
same algorithm achieves this bound for every $\delta\in(0,1)$.
Moreover, with the same probability, a variant of our algorithm gives every agent almost her
entire proportional share after adding doubly logarithmically many goods from outside her
bundle, and even without adding any good when no single good is too valuable relative to this
share.
In contrast, for envy-freeness up to one good (EF1), we show that, for every
$\alpha\in(0,1]$, every randomized algorithm has an input on which its probability of returning
an $\alpha$-EF1 allocation is at most $e^{-\Omega(n)}$. For envy-freeness up to any good
(EFX), this probability is at most $1/n!$ with only $n+1$ goods, a bound that is optimal
within a factor of $(n+1)/2$. For the maximin share (MMS), this
probability is at most $5/6$, however small $\alpha$ is. Allowing more removals gives a positive
envy-freeness guarantee: allocating each good to a uniformly random agent among those with
positive values achieves, with high probability, an approximation factor arbitrarily close to
one for envy-freeness up to logarithmically many goods, and logarithmically many goods are
necessary for this rule.
\end{abstract}

\section{Introduction}
\label{sec:introduction}

Consider a compute cluster that shares its GPUs among several machine learning teams
\citep{mahajanEtAl2020themis}. Blocks of spare capacity become available unpredictably, for
instance as higher-priority production workloads finish, and each block must be assigned to a
team as soon as it appears: an idle accelerator is wasted capacity, and preempting a training
run once it has started is costly. The teams do not value a block equally, since a model that
trains efficiently on one type of hardware may gain little from another, and the scheduler
cannot know which blocks will become available next. The same tension arises on many other online platforms. A recommender system must decide which content producer receives
the exposure generated by each incoming user session, and the value of a session to a producer
is typically estimated by a learned relevance model; fairness toward producers has become an
important concern on such platforms \citep{patroEtAl2020fairrec}. Similarly, a ride-hailing
platform must assign each ride request to a driver before the next request is known, while
drivers care about the income they accumulate over time \citep{suhrEtAl2019two}. In each case,
the platform would like to treat its agents (e.g., teams, producers, or drivers) fairly, yet it
must commit to every decision without seeing the future.

The \emph{online fair division} model captures these settings \citep{aleksandrovEtAl2015online,aleksandrovWalsh2020survey}.
Indivisible goods arrive one at a time; when a good arrives, the agents' values for it are
revealed, and the algorithm must allocate it to some agent immediately and irrevocably. The
model was originally motivated by a food bank that distributes donations as they come in
\citep{aleksandrovEtAl2015online}; as the examples above illustrate, the same constraints now arise
routinely in automated systems that allocate resources in real time.

What should fairness mean in this setting? A classical answer is \emph{proportionality}:
each of the $n$ agents should receive at least a $1/n$ fraction of her value for the entire
set of goods. With indivisible goods, this is too much to ask (if there is a single good,
only one agent can have it) so the standard relaxation is \emph{proportionality up to one
good} (PROP1), which requires every agent to reach her proportional share once a single good
from outside her bundle is added to it. When all goods are known in advance, PROP1 allocations
always exist \citep{conitzerEtAl2017public}. However, in online settings, even PROP1 is demanding: when the algorithm
allocates a good, it knows neither how many goods remain nor how valuable they will be.

The extent of this difficulty depends on the adversary that chooses the input. If the adversary is
\emph{adaptive}, choosing each good after observing how the earlier goods were allocated, then
strong impossibility results apply. \citet{neohTeh2026closing} recently showed that no online algorithm can
guarantee any positive approximation to PROP1 that is independent of the number of goods, and
that this remains true for \emph{proportionality up to $k$ goods} (PROP$k$) for every fixed
$k$. A factor that depends on the number $m$ of goods is attainable \citep{duSun2026online},
but such a factor necessarily vanishes as the number of goods grows.

There are two natural ways around this impossibility. The first is to give the algorithm
information about the future: knowing every agent's total value suffices for exact PROP1
\citep{neohEtAl2026additional}, and knowing only each agent's largest value for a single good
already yields a constant approximation \citep{neohTeh2026closing}. The second, which we pursue,
requires no information at all. We weaken the adversary and allow the algorithm to randomize.
An \emph{oblivious} adversary may choose any input, even a worst-case one, but must fix the
entire input in advance, so the input cannot react to the algorithm's random choices.
Randomization is essential here: the choices of a deterministic algorithm can be computed
ahead of time, so an oblivious adversary can fix in advance exactly the input that an adaptive
adversary would generate, and the impossibility above therefore applies to every deterministic
algorithm.

The oblivious model fits the applications above well. Spare capacity appears when production
workloads finish, users open an app, and riders request trips, all because of demand outside the
platform; the resulting sequence may follow any pattern, but it does not observe the random bits
the platform uses to allocate.\footnote{Arrivals that
respond to realized allocations, for example through feedback loops in recommendation, are
better captured by an adaptive adversary, to which the impossibility of
\citet{neohTeh2026closing} applies.} Unlike stochastic models of arrivals
\citep{benadeEtAl2024fair,gaoEtAl2021market}, it makes no distributional assumptions, which
is valuable when the arrival process is nonstationary or poorly understood.

Randomization, however, raises a question of its own: what exactly should a randomized
guarantee promise? The weakest answer is \emph{ex-ante} proportionality, which asks only that
each agent's \emph{expected} utility be at least her proportional share. This is easy to
achieve (e.g., giving every good to a uniformly random agent suffices) but it says little about
the allocation that is actually implemented, in which some agent may receive far less than
her share. A producer who received no exposure today, or a team whose jobs never ran this week,
is not consoled by the fact that the platform's policy was fair in expectation. We therefore
study \emph{realized} guarantees. An allocation is \emph{$\alpha$-PROP1} if every agent
obtains at least an $\alpha$ fraction of her proportional share once a single good from outside
her bundle is added, and we ask that the realized allocation be $\alpha$-PROP1 for a large
factor $\alpha$, either with probability at least $1-\delta$ or in expectation. In the latter
case, the approximation ratio of an allocation is determined by its worst-off agent, and we
take this minimum \emph{before} the expectation. The guarantee thus ensures that all agents
are treated fairly in the same allocation, which separate guarantees on each agent's expected
utility do not.

Against an oblivious adversary, \citet{chooEtAl2026approximate} showed that allocating each good to a uniformly
random agent, which we call \emph{uniformly random allocation}, returns a $\Theta(1/\log(n/\delta))$-PROP1 allocation with probability at least
$1-\delta$, and that this dependence on $\delta$ is tight for that rule. Because the rule makes
every decision independently of the past, it cannot compensate an agent who has been unlucky so
far, which suggests that an algorithm with memory might do better. Beyond PROP1, one would also
like to know whether almost the entire proportional share is attainable when no single good is
too valuable, or when a few more outside goods may be added, and whether any of this extends to
envy-based fairness or to the maximin share. These considerations lead to the question at the
heart of this paper:
\begin{quote}
\emph{Against an oblivious adversary, to what extent can randomized online algorithms guarantee fairness?}
\end{quote}

\subsection{Our Results}
\label{subsec:contributions}

Our results show that the answer depends heavily on the fairness notion. For proportionality,
an algorithm that takes earlier allocations into account achieves guarantees whose dependence on
$n/\delta$ is exponentially better than that of the uniformly random allocation. For envy-freeness
and the maximin share, by contrast, no algorithm can return an approximately fair allocation with
probability close to one on every input, however small the approximation factor is. Approximate
envy-freeness becomes attainable once logarithmically many goods may be removed from the other
agent's bundle. All of our algorithms are fully online: they know the number of agents $n$, but
not the number of goods, the values of future goods, or any agent's total value. We describe our
results in the order in which they appear in the paper.

\paragraph{Approximate PROP1.}
In \cref{sec:main-result}, we present an algorithm called \emph{Deficit Allocation} ($\DA$).
Unlike the uniformly random allocation, $\DA$ takes the earlier allocations into account, so that
an agent who has so far received little relative to her share is more likely to receive later
goods. We show that, for every $\delta\in(0,1)$, $\DA$ returns an
$\Omega(1/\log\log(n/\delta))$-PROP1 allocation with probability at least $1-\delta$
(\cref{thm:main}). The algorithm does not take $\delta$ as input, so the same algorithm achieves
this bound for every $\delta$; for instance, its approximation ratio is $\Omega(1/\log\log n)$
with probability at least $1-1/n$, and also in expectation. Compared with the
$\Theta(1/\log(n/\delta))$ guarantee of the uniformly random allocation
\citep{chooEtAl2026approximate}, this reduces the dependence on
$n/\delta$ from logarithmic to doubly logarithmic; it also improves on the concurrent guarantees of \citet{chenLong2026randomized} (see
\cref{subsec:related-work}). If each good is valued positively by at most $\kappa$ agents, the
approximation ratio improves to $\Omega(\min\{1,n/(\kappa\log\log(n/\delta))\})$, again without
$\DA$ knowing $\kappa$. On the other hand, unlike the uniformly random allocation and the algorithm
of \citet{chenLong2026randomized}, $\DA$ is not ex-ante proportional; we show this with an
example involving two agents and two goods (\cref{prop:not-ex-ante}).

\paragraph{Near-proportionality and PROP$k$.}
The PROP1 guarantee of $\DA$ follows from a bound on each agent's utility, which we strengthen in
\cref{sec:extensions}. As a benchmark, consider the fractional allocation that divides each good
equally among the agents who value it positively. Let $\mu_i$ be agent $i$'s utility in this
\emph{fractional benchmark}, which is at least her proportional share, and let $\nu_i$ be her
largest value for a single good. For every fixed $\eps\in(0,1)$, we give a variant of $\DA$ that,
with probability at least $1-\delta$, gives every agent $i$ utility at least
$(1-\eps)\mu_i-O_\eps(\nu_i\log\log(n/\delta))$ (\cref{thm:near-proportional}). In other words,
each agent receives a $(1-\eps)$ fraction of her fractional utility $\mu_i$, minus at most
$O_\eps(\log\log(n/\delta))$ times the value of her most valuable good. The subtracted term does
not depend on the number of goods, so it is negligible whenever $\mu_i$ is much larger than
$\nu_i\log\log(n/\delta)$.

This bound has two consequences for proportionality. First, if every agent's proportional share
is at least $K_\eps\nu_i\log\log(n/\delta)$, where $K_\eps$ depends only on $\eps$, then, with
probability at least $1-\delta$, every agent receives a $(1-\eps)$ fraction of her proportional
share without adding any good (\cref{cor:small-goods}). For fixed $\eps$, this replaces the
corresponding condition for the uniformly random allocation \citep{chooEtAl2026approximate}, in
which the proportional share must be of order $\log(n/\delta)$ times $\nu_i$, by a doubly
logarithmic one. Second, without any condition on the values, the variant returns a
$(1-\eps)$-PROP$k$ allocation for $k=O_\eps(\log\log(n/\delta))$ with probability at least
$1-\delta$, and $\DA$ itself returns an $\Omega(\min\{1,k/\log\log(n/\delta)\})$-PROP$k$
allocation for every $k$ (\cref{thm:propk}). Neither algorithm is given $k$.

\paragraph{Lower bounds for EF1, EFX, and MMS.}
Proportionality compares each agent's utility with her own share, whereas envy-based notions
compare it with her value for other agents' bundles. An allocation is \emph{envy-free up to one
good} (EF1) if no agent envies another agent after some good is removed from the latter agent's
bundle \citep{liptonEtAl2004approximately,budish2011competitive}. Offline, EF1 allocations always
exist \citep{liptonEtAl2004approximately}, so one may hope for an online guarantee similar to
that for PROP1. In \cref{sec:lower-bounds}, we show that no such guarantee is possible: for every
$\alpha\in(0,1]$ and every randomized online algorithm, there is an input on which the algorithm
returns an $\alpha$-EF1 allocation with probability at most $e^{-\Omega(n)}$
(\cref{thm:ef1-confidence}). This holds even if the input has only $O(n)$ goods, every value is
positive, only four distinct values occur, and the algorithm knows the number of goods in
advance. Consequently, the worst-case expected EF1 approximation ratio of every algorithm is at
most $e^{-\Omega(n)}$ (\cref{cor:ef1-expected}), whereas the corresponding PROP1 ratio of $\DA$
is $\Omega(1/\log\log n)$.

For \emph{envy-freeness up to any good} (EFX), which requires that no agent envies another agent
after any single good is removed from the latter agent's bundle
\citep{caragiannisEtAl2019unreasonable}, the lower bound is even stronger and requires only one
more good than there are agents. With at most $n$ goods, giving the goods to distinct agents
yields an EFX allocation. With $n+1$ goods, however, every randomized online algorithm has an
input on which it returns an $\alpha$-EFX allocation with probability at most $1/n!$, for every
$\alpha\in(0,1]$, even if it knows the number of goods (\cref{thm:efx-confidence}). This bound is
tight up to a factor of $(n+1)/2$. By comparison, on every input with $n+1$ goods, an online
algorithm can deterministically return a PROP1 allocation. The success probability for EFX
remains exponentially small in $n$ even if the algorithm knows every agent's largest value for a
single good and only three positive values occur (\cref{thm:efx-known-max}).

The \emph{maximin share} (MMS) of an agent is the value she can secure by partitioning the goods
into $n$ bundles and receiving her least preferred bundle \citep{budish2011competitive}. Extending
a two-agent construction implicit in the work of \citet{zhouEtAl2023multiagent} to every $n$, we
show that, for every $\alpha\in(0,1]$, some input limits the probability of returning an
$\alpha$-MMS allocation to $1-\lfloor n/2\rfloor/(2n)$, which equals $3/4$ for even $n$ and is at
most $5/6$ for every $n$ (\cref{thm:mms-confidence}). This lower bound relies on goods that are
large relative to the agents' shares: since the maximin share never exceeds the proportional
share, the condition in \cref{cor:small-goods} also guarantees a $(1-\eps)$-MMS allocation with
probability at least $1-\delta$.

\paragraph{Approximate EF$k$.}
The EF1 lower bound raises the question of whether approximate envy-freeness becomes attainable
when more goods may be removed. An allocation is \emph{envy-free up to $k$ goods} (EF$k$) if no
agent envies another agent after at most $k$ goods are removed from the latter agent's bundle. In
\cref{sec:efk}, we show that the $\Like$ rule \citep{aleksandrovEtAl2015online}, which gives each
good independently to a uniformly random agent among those who value it positively, returns a
$(1-\eps)$-EF$k$ allocation with probability at least $1-\delta$ for
$k=O(\eps^{-1}\log(n/\delta))$, regardless of the number of goods and their values
(\cref{thm:efk}). Since $\Like$ is also ex-ante envy-free, it satisfies this realized guarantee and
exact ex-ante envy-freeness at the same time. For $\Like$, this number of removed goods is necessary
over a range of parameters, even for binary valuations in which each good is valued positively by
exactly two agents (\cref{prop:like-efk-lower}). These bounds show that the worst-case expected EF$k$ approximation ratio of $\Like$ is $1-\Theta(\log n/k)$ for $k\ge\log n$ (\cref{cor:efk-expected}). Whether an algorithm that takes earlier allocations into account, as
$\DA$ does for proportionality, can reduce the number of removed goods remains open.

\subsection{Our Techniques}
\label{subsec:overview}

The fractional benchmark suggests a natural goal for an online algorithm: allocate whole goods
while keeping every agent's received utility close to her fractional utility $\mu_i$. Goods of
very different values should not be counted alike, however, so we first group each agent's
positive values into \emph{dyadic scales} $[2^d,2^{d+1})$. Within a scale, any two goods differ
in value by less than a factor of two, so the number of goods an agent receives at that scale
determines their total value up to this factor.

For each agent and scale, $\DA$ maintains a nonnegative \emph{deficit}, which measures how much
of the agent's fractional amount at that scale is not yet matched by goods she has received. Each good at the scale
increases the deficit by the agent's fractional amount of the good, and receiving a good
decreases it by a fixed constant, but never below zero. Summing these updates shows that the
agent's total fractional amount at a scale is at most that constant times the number of goods
she has received there, plus her final deficit. It matters that a positive deficit survives when
the agent receives a good: an algorithm that tracked only whether she had been served would
forget how much of her fractional amount is still unmatched.

When several agents compete for a good, $\DA$ favors those with larger deficits, selecting them
with probabilities proportional to doubly exponential weights. The crux of the analysis is that
this makes large deficits extremely unlikely: the probability that a deficit is at least $y$ is
doubly exponentially small in $y$. This is the source of the exponential improvement, since a
union bound over the $n$ agents at confidence $\delta$ then costs only $O(\log\log(n/\delta))$.
The phenomenon is reminiscent of the power of two choices in balls-into-bins processes, where
favoring less loaded bins reduces the maximum load from nearly logarithmic to doubly logarithmic
\citep{azarEtAl1999balanced}. To combine an agent's scales, we observe that the lower endpoints
$2^d$ form a geometric sequence, so their sum is at most twice her largest value for a single
good. A convexity argument then combines the deficits across scales without any union bound
over scales or arrivals. The resulting loss is additive and depends on neither the number of
goods nor the number of scales: it is $O(\log\log(n/\delta))$ times the agent's largest value
for a single good. Finer scales and smaller decreases of the deficits turn the constant factor
into $1-\eps$, which gives the near-proportional guarantee.

For PROP$k$, it would not suffice to divide this additive loss by $k$, since a single outside
good may be far more valuable than all the others. Instead, we treat separately the fewer than
$k$ goods above the scale containing the agent's $k$th largest value; her own bundle and the
outside goods she may add account for these. Applying the deficit bound only to the lower
scales yields an additive error proportional to the $k$th largest value rather than the largest
one, and since the $k$ largest values sum to at most the agent's utility after adding $k$
outside goods, the relative error decreases as $1/k$.

Our lower bounds all conceal information that the algorithm needs at the moment it commits.
For EF1, we first present unit goods, worth one to every agent, and then goods that particular
agents value much more highly than the others do. The remaining goods are associated with
randomly chosen groups of agents, and each independently has either large or small values. The
algorithm then faces a dilemma. Leaving many agents without an initial good requires many of
their groups' final goods to have large values. Giving initial goods to almost all of these
agents creates the opposite difficulty: only a few final goods with large values can then be
allocated without creating envy that persists after one good is removed. Either way, success
requires many independent choices to fall in the algorithm's favor, which happens with
exponentially small probability. The MMS construction exploits a similar tension with just two
possible endings: an allocation that succeeds when the final goods are worthless leaves too many
agents needing those goods when they turn out to be valuable. For EFX, we choose a random
ordering of the agents and set the values so that every successful allocation must give the
first $n-1$ goods to the agents in this order. Each agent's identity, however, is revealed only
by the good arriving immediately after the one she must receive, so the algorithm must guess
correctly at each of the first $n-1$ arrivals.

Our EF$k$ analysis, in contrast, concerns independent allocation. For each pair of agents, we
order the goods by decreasing value to the first agent; this order is used only in the
analysis. Because up to $k$ goods may be removed from the second agent's bundle, it suffices
that, for every $r$, the first agent receives nearly as many of the first $r$ goods in this
order as the second, with an allowance of $k$ goods. A single probability estimate controls all
of these comparisons, and summing them with weights given by the differences between
consecutive values proves approximate EF$k$.

\subsection{Related Work}
\label{subsec:related-work}

We focus here on the work most closely related to ours. In \cref{app:further-related-work}, we
discuss further related models, including offline allocation with both ex-ante and realized
guarantees, restricted valuations and stochastic arrivals, online welfare maximization, temporal
voting, and repeated fair division. There, we also compare our deficit bound with online correlated selection, whose
induction argument we extend.

\paragraph{Randomized online fair division.}
Our work builds most directly on that of \citet{chooEtAl2026approximate}, who establish
high-probability PROP1 guarantees for the uniformly random allocation against an oblivious
adversary and show that their dependence on $\delta$ is tight for that rule, though not for all
randomized online algorithms. Their small-good analysis also gives near-proportional utilities
when each agent's proportional share is logarithmically larger than her largest value for a
single good. We improve both bounds by correlating allocations across arrivals.
\citet{neohTeh2026closing} prove the impossibility against adaptive adversaries that motivates
our model, and analyze the $\Like$ rule, which allocates uniformly among agents with positive
values, against an oblivious adversary. Since $\DA$ uses the same fractional allocation as its
benchmark, it retains the advantage of $\Like$ when few agents value each good positively.

In concurrent and independent work, \citet{chenLong2026randomized} also study realized fairness
against an oblivious adversary. Their algorithm, Layered Scale Quota, returns an
$\Omega(\sqrt{\log n}/\log(n/\delta))$-PROP1 allocation with probability at least $1-\delta$,
and its expected realized PROP1 guarantee is $\Omega(1/\sqrt{\log n})$. \Cref{thm:main}
improves these bounds to $\Omega(1/\log\log(n/\delta))$ and $\Omega(1/\log\log n)$,
respectively. On the other hand, their algorithm is exactly ex-ante proportional, whereas $\DA$
is not (\cref{prop:not-ex-ante}). For envy-based fairness, they show that, for every
$\alpha\in(0,1]$, some input limits the probability of returning an $\alpha$-EF1 allocation to
$(n+1)/(2n)$, and that the worst-case expected realized EFX guarantee is zero, even with
identical valuations. \Cref{thm:ef1-confidence} strengthens their EF1 bound from a constant to
$e^{-\Omega(n)}$. Their EFX construction allows an unrestricted number of goods, whereas
\cref{thm:efx-confidence} bounds the success probability with exactly $n+1$ goods, and our
three-value construction (\cref{thm:efx-known-max}) also allows the number of goods and all maximum
values to be known.

\paragraph{Information about the future.}
Advance information permits stronger deterministic guarantees. \citet{neohEtAl2026additional}
obtain exact PROP1 when every agent's total value is known, and show that offline guarantees for
share-based notions carry over when every agent's multiset of values is known. For known maximum
values, \citet{duSun2026online} and \citet{neohTeh2026closing} independently obtain $1/2$-PROP1, and \citet{neohTeh2026closing} improve this
to $19/30$-PROP1. Without any information, \citet{duSun2026online} give a deterministic
$\Omega(1/\log(nm))$-PROP1 guarantee against an adaptive adversary, without knowing $m$. Our
algorithms require none of this information, and our guarantees are independent of $m$. For
EFX, \citet{melissourgosProtopapas2026predictions} study predictions of future valuations when
total values are known. Their impossibility without predictions already uses $n+1$ goods for
$n\ge3$, but chooses later goods according to earlier allocations; our bound of $1/n!$
quantifies the success probability of arbitrary randomized algorithms on a fixed input, and
\cref{thm:efx-known-max} gives an exponential bound even when maximum values are known. When
future values and their order are known, \citet{elkindEtAl2025temporal} study EF1 after every
arrival, and \citet{choi2026tfdmm} extend the analysis to the case of \emph{mixed manna} (i.e., both goods and chores). A different relaxation allows earlier allocations to be changed:
\citet{heEtAl2019future} study how many such changes are needed to maintain EF1, whereas our
allocations are irrevocable.

\paragraph{Additive envy and deterministic guarantees.}
A complementary line of work measures envy additively rather than multiplicatively.
\citet{benadeEtAl2024fair} study online envy minimization and its compatibility with efficiency
under different assumptions on the arriving goods, and \citet{halpernEtAl2025envy} relate online
additive envy minimization to multicolor discrepancy. Their lower bound concerns additive envy
with inverse-polynomial probability; it does not show that an $\alpha$-EF1 allocation can be
returned only with exponentially small probability. Our positive EF$k$ result also concerns a
different quantity: it compares utilities multiplicatively after removing actual goods from the
envied bundle, and an additive envy bound in units of the largest single-good value does not by
itself bound the value that can be removed using $k$ goods from that bundle.
\citet{kahanaEtAl2026perpetual} also use deficits, to obtain deterministic fairness guarantees at
every arrival for both item allocation and public decisions. For proportionality, their additive
loss after $m$ arrivals is $O(\log n+\sqrt{m\log n/n})$ times the largest value for a good
outside the agent's bundle. Their EF$k$ guarantee assumes a known finite set of possible values,
and the number of removals grows with $m$. By allowing a fixed multiplicative loss and a
prescribed error probability, we obtain bounds that are independent of $m$ and require no
advance information about the possible values. When all goods and their values are known in
advance, \citet{Goldberg2026} study the computational complexity of minimizing the largest additive envy between any pair of agents, summed over all arrivals.

\section{Preliminaries}
\label{sec:preliminaries}
For a positive integer $k$, let $[k]=\{1,\dots,k\}$. In asymptotic bounds, $O_\eps(\cdot)$ hides
constants that depend only on $\eps$, and we denote $\log\log x$ for $\max\{1,\ln\ln x\}$, so
that bounds involving $\log\log(n/\delta)$ are meaningful for all $n\ge2$ and $\delta\in(0,1)$.
There are $n\ge 2$ \emph{agents}, indexed by $[n]$, and a finite sequence $I=(g_1,\dots,g_m)$ of
\emph{indivisible goods}, which we call the \emph{input}.  Each agent $i$ has a nonnegative additive
valuation $v_i$.  When good $g_t$ arrives, the vector $(v_i(g_t))_{i\in[n]}$ is revealed, and the
algorithm must allocate $g_t$ to some agent immediately and irrevocably; goods cannot be discarded.
The algorithm is \emph{fully online}: it knows only $n$ in advance, and in particular does not know
the number of goods $m$, the future valuation vectors, the maximum value of a good, or any agent's
total value.  We assume an \emph{oblivious adversary}: the input is fixed before the algorithm runs
and does not depend on the algorithm's random choices.  An algorithm is \emph{non-wasteful} if it
allocates each good to an agent who values it positively whenever such an agent exists.

For each $t\in[m]$, let $N_t=\{i\in[n]:v_i(g_t)>0\}$ be the set of agents who value
$g_t$ positively, and let $\kappa=\max_{t\in[m]}|N_t|$ be the largest number of agents who value
a single good positively.  Clearly $\kappa\le n$, and $\kappa$ may be much smaller than $n$ if
each good is of interest to only a few agents.  Our algorithms do not need to know $\kappa$.
We assume that some agent values some good positively, so that $\kappa\ge1$; otherwise, every
allocation trivially satisfies all fairness notions below.

\paragraph{Approximate proportionality.}
Let $M=\{g_1,\dots,g_m\}$, and for each agent $i$, let $V_i=v_i(M)$ be her value for all goods,
so that her proportional share is $V_i/n$.  Given an allocation $A=(A_1,\dots,A_n)$ of $M$,
agent $i$ receives bundle $A_i$, and her utility is $v_i(A_i)$. Let
$\nu_i=\max(\{v_i(g):g\in M\}\cup\{0\})$ be her largest value for a single good. For a positive
integer $k$, let
$b_i^{(k)}=\max\{v_i(S):S\subseteq M\setminus A_i,\ |S|\le k\}$ be her largest value for at
most $k$ goods outside her bundle, and let $b_i=b_i^{(1)}$ be her largest value for a single
good outside her bundle.

\begin{definition}[PROP$k$]
\label{def:propk}
Let $k$ be a positive integer and $\alpha\in[0,1]$. An allocation $A$ is
\emph{$\alpha$-proportional up to $k$ goods} ($\alpha$-PROP$k$) if $v_i(A_i)+b_i^{(k)}\ge\alpha V_i/n$
for every agent $i$, that is, every agent receives at least an $\alpha$ fraction of her
proportional share after adding at most $k$ goods from outside her bundle. For $k=1$, this is
$\alpha$-PROP1. When $\alpha=1$, we omit $\alpha$.
\end{definition}

To measure how close an allocation is to satisfying such a notion, let $h(x,0)=1$ and
$h(x,d)=\min\{1,x/d\}$ for $x\ge 0$ and $d>0$.  The \emph{PROP$k$ approximation ratio} of $A$ is
\[
 \rho_{\text{PROP}k}(A)=\min_{i\in[n]}h(v_i(A_i)+b_i^{(k)},V_i/n),
\]
so $A$ is $\alpha$-PROP$k$ if and only if $\rho_{\text{PROP}k}(A)\ge\alpha$. An agent who values
no good positively is always satisfied, which is why we set $h(x,0)=1$.

For a randomized fully online algorithm $\calA$ and an input $I$, let $\rho_{\text{PROP1}}(\calA,I)$
denote the PROP1 approximation ratio of the (random) allocation that $\calA$ produces on $I$.  The \emph{worst-case
expected realized guarantee} of $\calA$ is
$R^{\calA}_{\text{PROP1}}(n)=\inf_I \E[\rho_{\text{PROP1}}(\calA,I)]$,
where the infimum is over all inputs with $n$ agents.  We stress that the minimum over agents in
the definition of $\rho_{\text{PROP1}}$ is taken \emph{before} the expectation; thus $R^{\calA}_{\text{PROP1}}(n)$
measures the expected ratio of the worst-off agent in the realized allocation, rather than the
worst expected ratio of any single agent.

\paragraph{Fractional benchmark.}
We compare received utilities with the fractional allocation that divides each good equally
among the agents who value it positively. For each agent $i$ and each $t\in[m]$, let
$x_{it}=1/|N_t|$ if $i\in N_t$, and $x_{it}=0$ otherwise; this is agent $i$'s fractional amount
of $g_t$. Her utility in this fractional allocation is $\mu_i=\sum_{t=1}^m x_{it}v_i(g_t)$. If
$N_t\ne\varnothing$, the amounts for $g_t$ sum to one; an all-zero good contributes nothing
to any benchmark. Since $x_{it}=1/|N_t|\ge1/\kappa$ whenever $i\in N_t$, we have
$\mu_i\ge V_i/\kappa\ge V_i/n$. In particular, approximating $\mu_i$ also approximates the
proportional share, with a larger benchmark when few agents value each good positively.

\paragraph{Approximate envy-freeness and MMS.}
We also consider envy-freeness up to $k$ goods (EF$k$), envy-freeness up to any good (EFX),
and the maximin share (MMS).  For agents $i\ne j$ and a positive integer $k$, let
$b_{ij}^{(k)}=\max\{v_i(S):S\subseteq A_j,\ |S|\le k\}$ be agent $i$'s largest value for at
most $k$ goods in $A_j$.

\begin{definition}[EF$k$]
\label{def:efk}
Let $k$ be a positive integer and $\alpha\in[0,1]$. An allocation $A$ is
\emph{$\alpha$-envy-free up to $k$ goods} ($\alpha$-EF$k$) if $v_i(A_i)\ge\alpha(v_i(A_j)-b_{ij}^{(k)})$
for every pair of agents $i\ne j$, that is, each agent's utility is at least an $\alpha$
fraction of her value for any other agent's bundle after at most $k$ goods are removed from that
bundle. For $k=1$, this is $\alpha$-EF1.
\end{definition}

The removed goods may depend on the pair. If $A_j$ is empty, then $v_i(A_j)-b_{ij}^{(k)}=0$.
Envy-freeness up to any good (EFX) requires the comparison to hold after removing any single good,
rather than one chosen good.

\begin{definition}[EFX]
\label{def:efx}
Let $\alpha\in[0,1]$. An allocation $A$ is \emph{$\alpha$-envy-free up to any good}
($\alpha$-EFX) if $v_i(A_i)\ge\alpha v_i(A_j\setminus\{g\})$ for every pair of agents $i\ne j$ and
every good $g\in A_j$.
\end{definition}

In particular, $\alpha$-EFX implies $\alpha$-EF1. Our EFX lower bounds use strictly positive
values, so they also apply when the definition considers only positively valued goods.

Agent $i$'s \emph{maximin share} is the value she can secure by partitioning the goods into
$n$ bundles and receiving her least preferred bundle \citep{budish2011competitive}, that is,
$\mathsf{MMS}_i=\max_{(P_1,\dots,P_n)}\min_{j\in[n]}v_i(P_j)$, where the maximum is over all
partitions of $M$ into $n$ bundles.

\begin{definition}[MMS]
\label{def:mms}
Let $\alpha\in[0,1]$. An allocation $A$ is \emph{$\alpha$-MMS} if $v_i(A_i)\ge\alpha\cdot\mathsf{MMS}_i$ for
every agent $i$.
\end{definition}

The approximation ratios are defined as for PROP$k$:
$\rho_{\text{EF}k}(A)=\min_{i\ne j}h(v_i(A_i),v_i(A_j)-b_{ij}^{(k)})$,
$\rho_{\text{EFX}}(A)=\min_{i\ne j,\,g\in A_j}h(v_i(A_i),v_i(A_j\setminus\{g\}))$, where a minimum
over an empty set equals one, and $\rho_{\text{MMS}}(A)=\min_{i\in[n]}h(v_i(A_i),\mathsf{MMS}_i)$. For each
of these notions $F$, the allocation $A$ is $\alpha$-$F$ if and only if $\rho_F(A)\ge\alpha$, and,
as for PROP$k$, we omit $\alpha$ when $\alpha=1$. The
convention $h(x,0)=1$ makes agents with zero maximin share automatically satisfied.

We write $\rho_F(\calA,I)$ and $R_F^{\calA}(n)=\inf_I\E[\rho_F(\calA,I)]$ for
$F\in\{\text{PROP}k,\text{EF}k,\text{EFX},\text{MMS}\}$ just as for PROP1. All these ratios belong to $[0,1]$, and all
probability guarantees concern simultaneous fairness of the final allocation.  For
$F\in\{\text{PROP}k,\text{EF}k,\text{EFX},\text{MMS}\}$ and $\alpha\in(0,1]$, we refer to
$\Prb[\rho_F(\calA,I)\ge\alpha]$, the probability that $\calA$ returns an $\alpha$-$F$
allocation on $I$, as its \emph{success probability} for $\alpha$-$F$ on $I$.

\paragraph{Ex-ante fairness.}
The notions above concern the realized allocation. For randomized algorithms, we also consider
the following notions, which concern expected utilities.

\begin{definition}[Ex-ante fairness]
\label{def:ex-ante}
A randomized algorithm is \emph{ex-ante proportional} if $\E[v_i(A_i)]\ge V_i/n$ for every input and
every agent $i$, and \emph{ex-ante envy-free} if $\E[v_i(A_i)]\ge\E[v_i(A_j)]$ for every input and
every pair of agents $i\ne j$, where $A$ is the allocation returned by the algorithm and the
expectations are over its random choices.
\end{definition}

\section{Approximate PROP1 via Deficit Allocation}
\label{sec:main-result}

The uniformly random allocation decides each good without regard to the earlier allocations, so
it does not compensate an agent who has so far received less than her fractional amount. Our algorithm, \emph{Deficit
Allocation} ($\DA$), addresses this limitation. For each agent and each range of values, it keeps
track of how much of the agent's fractional amount has not yet been matched by goods she has
received, and it favors agents for whom this amount is large when allocating a new good.

For agent $i$, good $g_t$ belongs to scale $d\in\mathbb Z$ if
$v_i(g_t)\in[2^d,2^{d+1})$. The algorithm maintains a nonnegative deficit $D_{i,d}$ for each
agent--scale pair encountered so far, initialized to zero when the pair first occurs. When a good
$g_t$ with $N_t\ne\varnothing$ arrives, the relevant scale for agent $i\in N_t$ is
$d_i=\lfloor\log_2 v_i(g_t)\rfloor$. We use the weight $w(y)=\exp(H(y)/3)$ for $y\ge0$,
where $H(y)=(4/3)^y-1$. The good is allocated to agent $i\in N_t$ with probability proportional to
$x_{it}w(D_{i,d_i})$, using the deficits \emph{before} they are updated for the current good. Each
relevant deficit then increases by $x_{it}$, and the recipient's deficit decreases by six,
but not below zero. All other deficits remain unchanged. \Cref{alg:da} summarizes the algorithm.

\begin{algorithm}[ht]
\caption{Deficit Allocation ($\DA$)}
\label{alg:da}
\begin{algorithmic}[1]
\Require Number of agents $n\ge2$.
\State Initialize each $D_{i,d}$ to zero when the pair $(i,d)$ first occurs.
\For{each arriving good $g_t$}
    \State Observe $N_t$.
    \If{$N_t=\varnothing$}
        \State Allocate $g_t$ to an arbitrary agent and proceed to the next good.
    \EndIf
    \For{each $i\in N_t$}
        \State Let $d_i=\lfloor\log_2 v_i(g_t)\rfloor$ and $x_{it}=1/|N_t|$.
    \EndFor
    \State Choose $j\in N_t$ with probability
    $\displaystyle\frac{x_{jt}w(D_{j,d_j})}{\sum_{i\in N_t}x_{it}w(D_{i,d_i})}$.
    \State Allocate $g_t$ to agent $j$.
    \For{each $i\in N_t$}
        \State $D_{i,d_i}\gets\max\{0,D_{i,d_i}+x_{it}-6\one\{i=j\}\}$.
    \EndFor
\EndFor
\end{algorithmic}
\end{algorithm}

The algorithm is fully online and non-wasteful. It maintains only one number for each encountered
agent--scale pair. The decrease of six allows a received good to account for a constant fractional
amount at its scale. Importantly, receiving a good does not necessarily set the deficit to zero:
any amount left after the decrease continues to affect later allocations.

We now show that $\DA$ returns an approximately PROP1 allocation with high probability, and
that the approximation factor improves when each good is valued positively by only a few agents.

\begin{theorem}
\label{thm:main}
For every $\delta\in(0,1)$ and every input, with probability at least $1-\delta$, $\DA$ returns
an $\Omega(\min\{1,n/(\kappa\log\log(n/\delta))\})$-PROP1 allocation. In particular, this
allocation is $\Omega(1/\log\log(n/\delta))$-PROP1.
\end{theorem}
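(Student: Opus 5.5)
The plan has three steps: reduce each agent's utility to her final deficits via a telescoping identity, prove a doubly exponential tail bound on a single deficit, and then combine scales by convexity and finish with a union bound over agents.

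\textbf{Step 1: the accounting identity.} Fix an agent $i$ and a scale $d$. Let $F_{i,d}=\sum_{t:\,i\in N_t,\,d_i=d}x_{it}$ be her total fractional amount at scale $d$, and let $c_{i,d}$ be the number of goods at that scale she receives. Each update is $D\gets\max\{0,D+x_{it}-6\one\{i=j\}\}$. Summing these updates gives
\[
F_{i,d}\le 6c_{i,d}+D^{\mathrm{fin}}_{i,d}.
\]
Every such good is worth less than $2^{d+1}$ to her, so its contribution to $\mu_i$ is at most $2^{d+1}F_{i,d}$. Each received good at the scale is worth at least $2^d$, so $v_i(A_i)\ge\sum_d 2^d c_{i,d}$. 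Combining these,
\[
\mu_i\le 12\,v_i(A_i)+2\sum_d 2^d D^{\mathrm{fin}}_{i,d}.
\]
The scales $d$ that occur satisfy $2^d\le\nu_i$, so the weights satisfy $\sum_d 2^d\le 2\nu_i$.

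\textbf{Step 2: doubly exponential tail for a single deficit (main obstacle).} I aim to show that for each $(i,d)$ and each $y\ge0$, $\Prb[D^{\mathrm{fin}}_{i,d}\ge y]\le C\exp(-c\,H(y))$, or an equivalent moment bound $\E[\exp(\lambda H(D^{\mathrm{fin}}_{i,d}))]=O(1)$ for a suitable $\lambda$. The intended route is a potential (supermartingale) argument on $\Phi=\phi(D_{i,d})$ with $\phi$ doubly exponential. At an arrival where $i\in N_t$, the deficit rises by $x_{it}\le 1$. With probability $x_{it}w(D)/\sum_k x_{kt}w(D_k)$ it drops by $6$.

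The difficulty is that the competitors' weights can be huge, so the selection probability of $i$ is not bounded below. I would handle this by induction on the deficit level, in the style of online correlated selection. The inductive claim is that, at any time, the probability that any other agent's relevant deficit exceeds level $y$ is doubly exponentially small. Then, whenever $i$'s deficit is large, $w(D_i)$ dominates the typical competitor weight, so $i$ is selected with probability close to one.

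The factor $4/3$ in $H$ and the $1/3$ in $w$ are presumably tuned so that climbing from level $y$ to level $y+1$ requires roughly $H$-many consecutive non-selections, which squares the tail probability. The decrease of six gives the necessary slack. The hard part is decoupling $i$ from the other agents' deficits without a union bound over arrivals. I would try to prove the potential inequality conditionally on the history, using the inductive tail bound for competitors only through an expectation of $\sum_k x_{kt}w(D_k)$ relative to $x_{it}$.

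\textbf{Step 3: combine scales and agents.} Let $p_d=2^d/\sum_{d'}2^{d'}$. By Jensen applied to the convex increasing function $\exp(\lambda H(\cdot))$,
\[
\E\Bigl[\exp\Bigl(\lambda H\Bigl(\sum_d p_d D^{\mathrm{fin}}_{i,d}\Bigr)\Bigr)\Bigr]\le\sum_d p_d\,\E\bigl[\exp(\lambda H(D^{\mathrm{fin}}_{i,d}))\bigr]=O(1).
\]
This is the step that avoids any union bound over scales or arrivals. Markov's inequality then gives $\sum_d p_d D^{\mathrm{fin}}_{i,d}=O(\log\log(n/\delta))$ with probability at least $1-\delta/n$. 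A union bound over the $n$ agents shows that, with probability at least $1-\delta$, every agent $i$ satisfies
\[
v_i(A_i)\ge\tfrac1{12}\mu_i-O\bigl(\nu_i\log\log(n/\delta)\bigr).
\]

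To finish, recall $\mu_i\ge V_i/\kappa$ and $b_i\ge\nu_i$ (if the top good lies in $A_i$, the statement is easier). Set $L=\log\log(n/\delta)$ and compare $\nu_i L$ with $\mu_i$. If $\nu_i L\le c\mu_i$ for a small constant $c$, then $v_i(A_i)=\Omega(V_i/\kappa)$. Otherwise $b_i\ge\nu_i=\Omega(\mu_i/L)\ge\Omega(V_i/(\kappa L))$. In both cases,
\[
v_i(A_i)+b_i=\Omega\bigl(\tfrac{V_i}{n}\min\{1,n/(\kappa L)\}\bigr),
\]
which proves the theorem. The second claim follows since $\kappa\le n$.
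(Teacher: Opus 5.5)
\textbf{Verdict: genuine gap in Step 2.} Your Steps 1 and 3 and the final PROP1 derivation coincide with the paper's argument: the accounting $Y_{i,d}\le6N_{i,d}+D_{i,d}$, Jensen with weights proportional to $2^d$, Markov, and a union bound over agents only. (Your claim $b_i\ge\nu_i$ is false in general, but your parenthetical covers the other case; the paper simply uses $\nu_i\le v_i(A_i)+b_i$.)

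The problem is Step 2. It carries the entire probabilistic content of the theorem, you only sketch it, and the sketch would not close as described.
\begin{enumerate}
\item Your inductive hypothesis is a \emph{marginal} tail bound on each competitor's deficit. But on the bad event, agent $i$'s selection is governed by quantities like $\E[\one\{D_{i,d}\ge a\}\,w(D_k)]$. The deficits are positively correlated: whenever some agent wins $g_t$, every other member of $N_t$ has her relevant deficit raised at once. Marginal bounds on $D_k$ therefore say nothing about $w(D_k)$ on the event that $D_{i,d}$ is large.
\item Conditioning on the history does not help. Given the past, all deficits are fixed and $i$'s selection probability can be arbitrarily small, so no per-step supermartingale inequality for $\phi(D_{i,d})$ alone can hold.
\item The intuition that $i$ is then selected with probability close to one fails when $x_{it}$ is small. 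What must balance is the increase $H(r)-H(r-x_{it})$ against a selection probability of order $x_{it}w(\cdot)$.
\item An induction on deficit levels in the balls-into-bins style would need union bounds over arrivals, which reintroduces dependence on $m$.
\end{enumerate}

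The missing idea, supplied by \cref{lem:joint-deficit}, has three parts.
\begin{enumerate}
\item \emph{Joint hypothesis.} Strengthen the hypothesis to a product-form bound over \emph{arbitrary sets} $S$ of agent--scale pairs, $\Prb[D_e\ge r_e\ \forall e\in S]\le\exp(-\sum_{e\in S}H(r_e))$, with no independence assumed, and induct on rounds.
\item \emph{Selection outside $S$.} Let $X$ indicate that the shifted thresholds $a_e=\max\{0,r_e-x_e\}$ are met after round $t-1$, and set $P=\exp(-\sum_{e\in S}H(a_e))$. Applying the hypothesis to $S\cup\{j\}$ and integrating gives $\E[Xw(D_j)]\le CP$ with $C=1/(1-\theta)$. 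This is finite precisely because $w=e^{\theta H}$ with $\theta<1$. Joint concavity of $(a,b)\mapsto ab/(za+b)$, where $z=\sum_{e\in S}x_ew(a_e)$, together with Jensen, then bounds the probability of selection outside $S$ on the event.
\item \emph{Selection inside $S$.} This needs a separate argument, because a recipient keeps any deficit in excess of six. If $e$ is selected and still has $D_e^t\ge r_e>0$, its previous deficit was at least $a_e+c-\xi$. A truncated version of the same integral then costs a factor $\tau=\exp(-(1-\theta)\gamma(\beta^{c-\xi}-1))$; this is where the decrease of six enters.
\end{enumerate}
A weighted AM--GM step and the conditions in \eqref{eq:deficit-conditions} then close the induction. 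Without this joint strengthening, or some substitute that controls competitors' weights on $i$'s bad event, your Step 2 remains an unproved heuristic.
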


Because $\DA$ uses neither $\delta$ nor $\kappa$, the same algorithm satisfies
\cref{thm:main} for every $\delta\in(0,1)$ and benefits from a small $\kappa$ without knowing
it.

The proof has two steps. First, the deficit updates give a deterministic comparison between each
agent's fractional utility, her received utility, and her final deficits. Second, we show that a
weighted sum of these deficits is small with high probability, using a bound on the deficits
that we prove in \cref{app:deficits}.

\begin{proof}[Proof of \cref{thm:main}]
Let $B_{n,\delta}=\ln(1+2\ln(2n/\delta))/\ln(4/3)$, so that $H(B_{n,\delta})=2\ln(2n/\delta)$,
and let $A$ be the allocation returned by $\DA$. We first show that, with probability at least
$1-\delta$, every agent $i$ satisfies
\begin{equation}
 \mu_i\le12v_i(A_i)+4B_{n,\delta}\nu_i,
 \label{eq:additive}
\end{equation}
and then derive the PROP1 guarantee from \eqref{eq:additive}. All sums over scales below range
over the scales encountered on the fixed input, and $D_{i,d}$ denotes the final deficit.

For each agent $i$ and scale $d$, let $Y_{i,d}$ be the total fractional amount assigned to that
pair and $N_{i,d}$ the number of goods allocated to $i$ at that scale. Summing the deficit
updates, and observing that truncation at zero can only increase the resulting deficit, gives
$Y_{i,d}\le6N_{i,d}+D_{i,d}$. Every good at scale $d$ is worth at least $2^d$ and less than
$2^{d+1}$ to agent $i$. Hence
\begin{align}
 \mu_i
 \le2\sum_d2^dY_{i,d}\le12\sum_d2^dN_{i,d}+2\sum_d2^dD_{i,d} &\le12v_i(A_i)+2\sum_d2^dD_{i,d}.
 \label{eq:deficit-accounting}
\end{align}
The coefficient of $v_i(A_i)$ is constant: only the final deficits need a probabilistic bound.

\Cref{lem:joint-deficit} in \cref{app:deficits}, applied with the parameters in
\cref{app:simple-parameters}, shows that, on every input, the final deficits satisfy
\begin{equation}
 \Prb[D_{i,d}\ge y]\le e^{-H(y)}
 \quad\text{for all }y\ge0.
 \label{eq:marginal-deficit}
\end{equation}
In fact, the lemma gives the corresponding product bound for any set of agent--scale pairs,
without assuming independence. The bound is doubly exponential in $y$.
Its proof must account for two ways a deficit can remain large: the agent might not receive
the current good, or she might receive it and still have a large deficit afterward. In the
latter case, her previous deficit must have been larger by at least five, which the same bound
makes sufficiently unlikely.

To combine the scales, define the convex and increasing function $\phi(y)=\exp(H(y)/2)$.
Integrating \eqref{eq:marginal-deficit} gives
\begin{equation}
 \E[\phi(D_{i,d})]
 \le1+\int_0^\infty\phi'(y)e^{-H(y)}\,dy=2.
 \label{eq:deficit-moment}
\end{equation}
Fix an agent with $\nu_i>0$. Every scale $d$ encountered for agent $i$ satisfies $2^d\le\nu_i$,
so the geometric sum satisfies $\sum_d2^d\le2\nu_i$. Apply convexity with weights
$2^d/(2\nu_i)$, placing any remaining weight at zero. Since $\phi(0)=1$,
\eqref{eq:deficit-moment} yields $\E[\phi(\sum_d2^dD_{i,d}/(2\nu_i))]\le2$. Markov's
inequality and $H(B_{n,\delta})=2\ln(2n/\delta)$ now imply
\begin{equation}
 \Prb\left[\sum_d2^dD_{i,d}>2B_{n,\delta}\nu_i\right]
 \le2e^{-H(B_{n,\delta})/2}=\frac{\delta}{n}.
 \label{eq:weighted-deficit-tail}
\end{equation}
An agent with $\nu_i=0$ has $\mu_i=0$, so she satisfies \eqref{eq:additive}
deterministically. A union bound over agents, followed by \eqref{eq:deficit-accounting}, proves
that, with probability at least $1-\delta$, \eqref{eq:additive} holds for every agent.
There is no union bound over scales or arrivals.

It remains to derive the PROP1 guarantee. A most valuable good for agent $i$ either belongs to
her bundle or lies outside it, so $\nu_i\le v_i(A_i)+b_i$. Hence, whenever \eqref{eq:additive} holds
for every agent, each agent satisfies $\mu_i\le(12+4B_{n,\delta})(v_i(A_i)+b_i)$. Since
$\mu_i\ge V_i/\kappa$, every agent then satisfies
$v_i(A_i)+b_i\ge n/(\kappa(12+4B_{n,\delta}))\cdot V_i/n$, so the allocation is
$\min\{1,n/(\kappa(12+4B_{n,\delta}))\}$-PROP1. Finally, $2n/\delta>4$ implies
$B_{n,\delta}>4$, so $12+4B_{n,\delta}<7B_{n,\delta}$. Moreover, $n/\delta>2$ implies
$1+2\ln(2n/\delta)<e^2\ln(n/\delta)$, so
$B_{n,\delta}<(2+\ln\ln(n/\delta))/\ln(4/3)\le11\log\log(n/\delta)$. Therefore the allocation is
$\Omega(\min\{1,n/(\kappa\log\log(n/\delta))\})$-PROP1, and since $\kappa\le n$, it is also
$\Omega(1/\log\log(n/\delta))$-PROP1. This completes the proof.
\end{proof}

Since $\rho_{\text{PROP1}}$ is nonnegative, applying \cref{thm:main} with $\delta=1/2$ shows
that, on every input, the expected PROP1 approximation ratio of $\DA$ is
$\Omega(\min\{1,n/(\kappa\log\log n)\})$; in particular,
$R^{\DA}_{\text{PROP1}}(n)=\Omega(1/\log\log n)$.
The proof also gives the explicit factor $\min\{1,n/(\kappa(12+4B_{n,\delta}))\}$. In particular,
$\DA$ returns an exact PROP1 allocation with probability at least $1-\delta$ whenever
$\kappa(12+4B_{n,\delta})\le n$. Moreover, \eqref{eq:additive} gives more than a PROP1
approximation. For example, if $\mu_i\ge8B_{n,\delta}\nu_i$ for every agent, then it gives
$v_i(A_i)\ge\mu_i/24$ for every agent on the same event, without adding any good. In
\cref{subsec:near-proportional}, we improve this constant to any fixed factor below one.

\section{Near-Proportionality and \texorpdfstring{PROP$k$}{PROPk}}
\label{sec:extensions}

The proof of \cref{thm:main} gives a constant fraction of the fractional benchmark, apart from a
small additive loss. We now improve that constant to any prescribed factor below one and
show that allowing a few outside goods removes the small-good condition. We then return to
the distinction between realized and ex-ante proportionality.

\subsection{Preserving almost the entire fractional share}
\label{subsec:near-proportional}

Two constants enter \eqref{eq:deficit-accounting}: goods at the same dyadic scale can differ
by a factor of two, and each received good decreases the deficit by six. Neither constant is
intrinsic to the approach. We use narrower value intervals to reduce the first, and divide each
fractional amount among several copies of the deficit to reduce the second.

For $\eps\in(0,1)$, the refined algorithm $\calA_\eps$ modifies $\DA$ as follows.
Replace the dyadic intervals by $[(1+\eps/2)^d,(1+\eps/2)^{d+1})$ for $d\in\mathbb Z$, and
maintain $L=\lceil8/\eps\rceil$ deficits $D_{i,d,1},\dots,D_{i,d,L}$ for every encountered
agent--scale pair. Each of these starts at zero. When a good $g_t$ with $N_t\ne\varnothing$
arrives, let $d_i=\lfloor\ln v_i(g_t)/\ln(1+\eps/2)\rfloor$ for each $i\in N_t$. Divide
$x_{it}$ equally among agent $i$'s $L$ deficits at scale $d_i$. Select a pair
$(i,\ell)\in N_t\times[L]$ with probability proportional to
\[
 \frac{x_{it}}{L}\,w_\eps(D_{i,d_i,\ell}),
 \quad
 w_\eps(y)=
 \exp\left(\frac{\eps}{8}
       \left((1+\eps^2/512)^y-1\right)\right),
\]
and allocate the whole good to $i$.
Increase every relevant deficit by $x_{it}/L$, and decrease the selected deficit by
$1+\eps/2$, truncating at zero. As before, all-zero goods are assigned arbitrarily without
changing the deficits. Like $\DA$, the algorithm $\calA_\eps$ is fully online and non-wasteful;
it depends on $\eps$, but not on $\delta$, the number of goods, or any future values.

The copies serve only to make each increase at most $1/L\le\eps/8$; goods are never divided.
This permits the decrease on selection to be close to one while preserving the doubly
exponential deficit bound. Summing over all copies still accounts for the original fractional
allocation. As a result, $\calA_\eps$ preserves almost the entire fractional benchmark.

\begin{theorem}
\label{thm:near-proportional}
For every fixed $\eps\in(0,1)$, every $\delta\in(0,1)$, and every input, with probability at
least $1-\delta$, $\calA_\eps$ returns an allocation $A$ in which every agent $i$ has utility
$v_i(A_i)\ge(1-\eps)\mu_i-O_\eps(\nu_i\log\log(n/\delta))$.
\end{theorem}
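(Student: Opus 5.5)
We follow the two-step structure of the proof of \cref{thm:main}: a deterministic accounting inequality, then a probabilistic bound on a weighted sum of final deficits.

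\emph{Accounting step.} For agent $i$, scale $d$ and copy $\ell$, let $Y_{i,d,\ell}$ be the total amount added to $D_{i,d,\ell}$ and $N_{i,d,\ell}$ the number of times this copy is selected. Summing updates and noting that truncation only increases the final value gives
\[
 Y_{i,d,\ell}\le(1+\eps/2)N_{i,d,\ell}+D_{i,d,\ell}.
\]
Summing over $\ell$ recovers the fractional amount $Y_{i,d}$ at scale $d$, while $\sum_\ell N_{i,d,\ell}$ counts the goods $i$ receives at that scale. Each good at scale $d$ is worth between $(1+\eps/2)^d$ and $(1+\eps/2)^{d+1}$, so
\[
 \mu_i\le(1+\eps/2)^2 v_i(A_i)+(1+\eps/2)\sum_d(1+\eps/2)^d\sum_\ell D_{i,d,\ell}.
\]
Since $(1+\eps/2)^{-2}\ge1-\eps$, it then suffices to show that the weighted deficit sum is $O_\eps(\nu_i\log\log(n/\delta))$ for all agents simultaneously with probability $1-\delta$.

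\emph{Tail step.} I would invoke \cref{lem:joint-deficit} with the parameters matching $\calA_\eps$. These are: increments at most $1/L\le\eps/8$, decrease $1+\eps/2$, and weight $w_\eps=\exp(H_\eps/\ldots)$, where $H_\eps(y)=c_\eps((1+\eps^2/512)^y-1)$. The lemma should yield $\Prb[D_{i,d,\ell}\ge y]\le e^{-H_\eps(y)}$. I expect this verification to be the main obstacle. The lemma's inductive argument must still close: when the agent is selected and her deficit stays above $y$, the previous deficit was above $y+1+\eps/2-\eps/8$. The margin over an increment is now only about $\eps/2$ rather than $5$, which is presumably why the base of $H_\eps$ shrinks to $1+\eps^2/512$. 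I would check the lemma's hypotheses, namely the ratio of the slack to the increment and the choice of exponent scale relative to the weight. If they do not match directly, I would redo the induction with the new constants, tracking the product bound over the $L$ copies.

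\emph{Combining step.} Given the marginal tail, define a convex increasing $\phi_\eps=\exp(H_\eps/2)$. Integration gives $\E[\phi_\eps(D_{i,d,\ell})]\le2$. The weights $(1+\eps/2)^d$ over scales with $(1+\eps/2)^d\le\nu_i$ sum to at most $(1+2/\eps)\nu_i$. Including the $L$ copies, the normalization is $C_\eps\nu_i$ with $C_\eps=O(\eps^{-2})$. Jensen's inequality, with leftover weight placed at zero, gives $\E[\phi_\eps(S_i/(C_\eps\nu_i))]\le2$ for the weighted sum $S_i$. Markov's inequality at the level $B$ with $H_\eps(B)=2\ln(2n/\delta)$ then gives failure probability $\delta/n$ per agent. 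Moreover,
\[
 B=O\bigl(\eps^{-2}\log(\eps^{-1}\log(n/\delta))\bigr)=O_\eps(\log\log(n/\delta)).
\]
A union bound over agents finishes the argument; agents with $\nu_i=0$ are trivially fine.
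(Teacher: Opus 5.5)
\textbf{Gap: the tail step is not carried out.} Your plan has the same architecture as the paper's proof in \cref{app:near-proportional}. Your accounting step, with coefficient $(1+\eps/2)^2$, is correct. So is your combining step: total weight at most $L(1+2/\eps)\nu_i$, then $e^{H_\eps/2}$, Jensen, Markov and a union bound.

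The problem is the tail step, which is the only part that is new relative to \cref{thm:main}. You leave it as ``I would check the lemma's hypotheses \ldots\ if they do not match directly, I would redo the induction.'' You also never fix $\gamma$ or the exponent in $w_\eps$. So you never actually obtain $\Prb[D_{i,d,\ell}\ge y]\le e^{-H_\eps(y)}$, and everything afterwards rests on it.

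This check is not a formality. With $c-\xi\le1+\eps/2$, the third condition in \eqref{eq:deficit-conditions} fails for the $\DA$ parameters. For any parameters it can hold only with slack of order $\eps^2$, which forces $\gamma(\beta-1)=O(\eps)$ and $\theta=O(\eps^2)$.

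Your concern about tracking a product bound over the $L$ copies is unnecessary. \cref{lem:joint-deficit} is stated for an arbitrary finite set $\calE$, so each copy $(i,d,\ell)$ is simply an element with amount $x_{it}/L\le1/L$. These amounts still sum to one in each round. Moreover, $\calA_\eps$ selects with probability proportional to $(x_{it}/L)\,w_\eps(D_{i,d_i,\ell})$, which is exactly the lemma's rule.

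\textbf{How to close it.} The algorithm fixes $\beta$ and $\theta\gamma$: writing $w_\eps=e^{\theta H_\eps}$ with $H_\eps(y)=\gamma(\beta^y-1)$ requires $\beta=1+\eps^2/512$ and $\theta\gamma=\eps/8$. Taking the first condition with equality gives $\theta=\eps^2/512$ and $\gamma=64/\eps$. Set $c=1+\eps/2$ and $\xi=1/L\le\eps/8$, and write $s=c-\xi\in[1+3\eps/8,3/2]$.
\begin{itemize}
\item Bernoulli's inequality gives $(1-\theta)\beta^s\ge(1-\theta)(1+s\theta)\ge1$, which is the second condition.
\item It also gives $(1-\theta)\gamma(\beta^s-1)\ge\gamma\theta(1-\theta)s\ge z:=\frac{\eps}{8}(1+\frac{\eps}{4})$, so $\tau\le e^{-z}$.
\item Using $C\le1+2\theta$, $\ln C\le2\theta$, $z\le3\eps/16$ and $1-e^{-z}\ge z-z^2/2$, you get $1-\ln C-C\tau\ge z-z^2/2-4\theta\ge\eps/8+3\eps^2/512\ge\gamma(\beta-1)$. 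This is the third condition.
\end{itemize}
With this $\gamma$, your level is $B=\ln(1+\frac{\eps}{32}\ln(2n/\delta))/\ln(1+\eps^2/512)=O_\eps(\log\log(n/\delta))$. The rest of your argument then goes through as written.
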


The narrower value intervals and the smaller decrease give a coefficient
$(1+\eps/2)^2$ in place of twelve in \eqref{eq:deficit-accounting}. The same convexity argument
as in the proof of \cref{thm:main} then proves \cref{thm:near-proportional}; the parameter
checks and full proof appear in \cref{app:near-proportional}.

For fixed $\eps$, the additive loss does not depend on the number of goods or value scales.
Moreover, the theorem controls actual utilities, not utilities after adding an outside good.
Its most direct fairness consequence is the following.

\begin{corollary}
\label[corollary]{cor:small-goods}
For every $\eps\in(0,1)$, there is a constant $K_\eps>0$ such that the following holds. For every
$\delta\in(0,1)$ and every input in which each agent $i$ satisfies
$V_i/n\ge K_\eps\nu_i\log\log(n/\delta)$, with probability at least $1-\delta$,
$\calA_{\eps/2}$ returns an allocation $A$ in which every agent $i$ has utility
$v_i(A_i)\ge(1-\eps)V_i/n$.
\end{corollary}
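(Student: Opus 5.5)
The plan is to apply \cref{thm:near-proportional} with $\eps/2$ in place of $\eps$, and then absorb its additive loss into a small fraction of the proportional share. Concretely, \cref{thm:near-proportional} yields a constant $C_\eps$, depending only on $\eps$, with the following property. For every $\delta\in(0,1)$ and every input, with probability at least $1-\delta$, the allocation $A$ returned by $\calA_{\eps/2}$ satisfies
\[
v_i(A_i)\ge(1-\eps/2)\mu_i-C_\eps\nu_i\log\log(n/\delta)
\]
for every agent $i$ simultaneously. Here I rely on the $O_{\eps/2}(\cdot)$ term being uniform over $n$, $\delta$ and the input, so it can be written as an explicit constant times $\nu_i\log\log(n/\delta)$. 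The convention $\log\log x\ge1$ guarantees that this term is a genuine multiple of $\nu_i$.

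On this event, I use the fractional-benchmark bound from the preliminaries, $\mu_i\ge V_i/\kappa\ge V_i/n$. It gives $v_i(A_i)\ge(1-\eps/2)V_i/n-C_\eps\nu_i\log\log(n/\delta)$.

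Next, set $K_\eps=2C_\eps/\eps$ (any positive value if $C_\eps=0$). The hypothesis $V_i/n\ge K_\eps\nu_i\log\log(n/\delta)$ then gives $C_\eps\nu_i\log\log(n/\delta)\le(\eps/2)V_i/n$. Substituting this yields $v_i(A_i)\ge(1-\eps)V_i/n$ for every agent on the same event. Agents with $\nu_i=0$ have $V_i=0$ and are trivially fine, and no further union bound is needed because \cref{thm:near-proportional} is already simultaneous over agents.

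The only delicate point is the bookkeeping of constants. $K_\eps$ must depend on $\eps$ alone, so I need the hidden constant in \cref{thm:near-proportional} to be independent of $\delta$, $n$ and the input. That is how the theorem is stated, since the bound holds for every input with a fixed $O_\eps$ constant, so I expect no real obstacle. If anything requires care, it is making sure that the theorem is invoked at parameter $\eps/2$, which is exactly why the corollary is stated for $\calA_{\eps/2}$.
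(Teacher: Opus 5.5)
Your proposal is correct and follows the paper's proof: apply the near-proportional bound at accuracy $\eps/2$, use $\mu_i\ge V_i/n$, and absorb the additive loss into $(\eps/2)V_i/n$ via the hypothesis. The paper settles the uniformity point you flag by invoking the explicit form \eqref{eq:near-prop-explicit}, namely $v_i(A_i)\ge(1-\eps/2)\mu_i-C_{\eps/2}B_{n,\delta}\nu_i$, together with $B_{n,\delta}\le 11\log\log(n/\delta)$, which yields $K_\eps=22C_{\eps/2}/\eps$.
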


\begin{proof}
By \eqref{eq:near-prop-explicit} in \cref{app:near-proportional}, applied with accuracy
$\eps/2$, with probability at least $1-\delta$, every agent $i$ satisfies
$v_i(A_i)\ge(1-\eps/2)\mu_i-C_{\eps/2}B_{n,\delta}\nu_i$, where $C_{\eps/2}$ depends only on $\eps$
and $B_{n,\delta}$ is as in the proof of \cref{thm:main}. Let $K_\eps=22C_{\eps/2}/\eps$. Since
$B_{n,\delta}\le11\log\log(n/\delta)$ by the proof of \cref{thm:main}, the condition
$V_i/n\ge K_\eps\nu_i\log\log(n/\delta)$ gives $C_{\eps/2}B_{n,\delta}\nu_i\le\eps V_i/(2n)$.
Since $\mu_i\ge V_i/n$,
\[
 v_i(A_i)\ge(1-\eps/2)\mu_i-\eps V_i/(2n)
 \ge(1-\eps)V_i/n
\]
for all agents on the same event of probability at least $1-\delta$.
\end{proof}

For comparison, the small-good analysis of \citet[Theorem~4.3]{chooEtAl2026approximate} gives
the same near-proportional utility conclusion for the uniformly random allocation under
\[
 \frac{V_i}{n}\ge\frac{8\ln(n/\delta)}{3\eps^2}\nu_i
 \quad\text{for every }i.
\]
For every fixed accuracy, \cref{cor:small-goods} reduces the dependence on $n$ and $\delta$
from logarithmic to doubly logarithmic. We do not optimize the dependence on $\eps$.
The conclusion also implies a $(1-\eps)$-MMS allocation, since $\mathsf{MMS}_i\le V_i/n$.

\subsection{Proportionality up to \texorpdfstring{$k$}{k} goods}
\label{subsec:propk}

The small-good condition in \cref{cor:small-goods} is not needed if an agent may add more
than one outside good. We first quantify how the guarantee of $\DA$ improves with the
number of allowed goods, and then show that the refined algorithm can approach the entire
fractional share with only doubly logarithmically many outside goods.

\begin{theorem}
\label{thm:propk}
Let $k$ be a positive integer and $\delta\in(0,1)$. On every input, the following statements
hold.
\begin{enumerate}
\item With probability at least $1-\delta$, $\DA$ returns an
$\Omega(\min\{1,nk/(\kappa\log\log(n/\delta))\})$-PROP$k$ allocation.
\item For every $\eps\in(0,1)$, if $k\ge K_\eps\log\log(n/\delta)$, where $K_\eps$ is the
constant in \cref{cor:small-goods}, then with probability at least $1-\delta$, $\calA_{\eps/2}$
returns an allocation $A$ in which every agent $i$ satisfies
$v_i(A_i)+b_i^{(k)}\ge(1-\eps)\mu_i$.
\end{enumerate}
\end{theorem}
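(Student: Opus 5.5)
The plan follows the overview in \cref{subsec:overview}: split each agent's goods at the scale of her $k$th largest value. Fix an agent $i$ with $\nu_i>0$. Let $u_i^{(k)}$ be her $k$th largest value over all goods (zero if fewer than $k$ goods are positive), and let $d^*=\lfloor\log_2 u_i^{(k)}\rfloor$. Scales $d>d^*$ are \emph{high}, and the others are \emph{low}. Every good at a high scale is worth more than $u_i^{(k)}$, so there are fewer than $k$ such goods.

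First, the high goods. Each high good lies either in $A_i$ or outside it. The outside ones number fewer than $k$, so their total is at most $b_i^{(k)}$. Hence the fractional contribution of the high goods, which is at most their full value, is at most $v_i(A_i^{\mathrm{high}})+b_i^{(k)}$.

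Second, the low scales. I reuse the accounting \eqref{eq:deficit-accounting} restricted to these scales, which gives
\[
\mu_i^{\mathrm{low}}\le 12v_i(A_i^{\mathrm{low}})+2\sum_{d\le d^*}2^dD_{i,d}.
\]
The convexity argument in the proof of \cref{thm:main} then applies with the normaliser $\nu_i$ replaced by $u_i^{(k)}$, since $\sum_{d\le d^*}2^d\le2u_i^{(k)}$. The moment bound \eqref{eq:deficit-moment} holds for each pair regardless of which pairs are selected.

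\textbf{Main obstacle.} The threshold $d^*$ is a deterministic function of the fixed input, not of the randomness, because it uses values over all of $M$. This is where obliviousness is used, and I should check it carefully. With it, \eqref{eq:weighted-deficit-tail} gives
\[
\sum_{d\le d^*}2^dD_{i,d}\le2B_{n,\delta}u_i^{(k)}
\]
with probability at least $1-\delta/n$. A union bound over agents then gives, with probability at least $1-\delta$, for every $i$,
\[
\mu_i\le12(v_i(A_i)+b_i^{(k)})+4B_{n,\delta}u_i^{(k)}.
\]

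Third, I bound $u_i^{(k)}$ in terms of $v_i(A_i)+b_i^{(k)}$. The $k$ largest values are split between $A_i$ and the outside goods, and the outside part has at most $k$ goods. So $k\,u_i^{(k)}$, which is at most the sum of the $k$ largest values, is at most $v_i(A_i)+b_i^{(k)}$. Substituting yields
\[
\mu_i\le(12+4B_{n,\delta}/k)(v_i(A_i)+b_i^{(k)}).
\]
Using $\mu_i\ge V_i/\kappa$ and $B_{n,\delta}\le11\log\log(n/\delta)$ then gives part (i).

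For part (ii), I repeat the argument for $\calA_{\eps/2}$. I use the explicit bound \eqref{eq:near-prop-explicit} localized to the low scales, so that the high scales contribute at most their value and the low scales lose $(1-\eps/2)$ times their share plus $C_{\eps/2}B_{n,\delta}u_i^{(k)}$; this presumes that the appendix proof localizes the same way. This gives
\[
v_i(A_i)+b_i^{(k)}\ge(1-\eps/2)\mu_i-C_{\eps/2}B_{n,\delta}\,u_i^{(k)}.
\]
I then use $u_i^{(k)}\le\mu_i/k$, since the $k$ largest values sum to at most $V_i\le\kappa\mu_i$; one needs the right normalisation here. If this falls short by a factor of $\kappa$, I instead use $u_i^{(k)}\le(v_i(A_i)+b_i^{(k)})/k$ and rearrange. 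Either way, $k\ge K_\eps\log\log(n/\delta)$ makes the error at most $(\eps/2)(v_i(A_i)+b_i^{(k)})$, and then dividing through gives the factor $(1-\eps)$.
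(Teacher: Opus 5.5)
Your proposal is correct and follows essentially the paper's proof: the same split at the scale of the $k$th largest value (your $u_i^{(k)}$, the paper's $a_i$, an input-determined threshold), the convexity--Markov bound on the low-scale deficits normalised by this value, the bound $\sum_{t:g_t\in G_i}x_{it}v_i(g_t)\le v_i(A_i\cap G_i)+b_i^{(k)}$ for the fewer than $k$ high goods, and $k\,u_i^{(k)}\le v_i(A_i)+b_i^{(k)}$. In part (ii), only your fallback is valid: $u_i^{(k)}\le\mu_i/k$ fails in general (only $u_i^{(k)}\le\kappa\mu_i/k$ holds; take $k$ goods worth one to every agent), so you must use $u_i^{(k)}\le(v_i(A_i)+b_i^{(k)})/k$ and rearrange to $(1+C_{\eps/2}B_{n,\delta}/k)(v_i(A_i)+b_i^{(k)})\ge(1-\eps/2)\mu_i$, which is exactly what the paper does.
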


Neither algorithm is given $k$. At $k=1$, the first bound agrees with \cref{thm:main}. Since
$\kappa\le n$, it is always $\Omega(\min\{1,k/\log\log(n/\delta)\})$. Because $\mu_i\ge V_i/n$, the
second bound gives $(1-\eps)$-PROP$k$ on every input already for
$k=\lceil K_\eps\log\log(n/\delta)\rceil=O_\eps(\log\log(n/\delta))$. It even gives
exact PROP$k$ whenever $\kappa\le(1-\eps)n$, since then $(1-\eps)\mu_i\ge V_i/n$. Thus, if every
good is valued at zero by at least an $\eps$ fraction of the agents, this many outside goods
suffice for exact PROP$k$.

The proof accounts separately for the larger goods. Merely dividing the additive error
$4B_{n,\delta}\nu_i$ in \eqref{eq:additive} by $k$ would not be valid: one outside good may be
much more valuable than all the others. Instead, we use the $k$th largest value to bound the
error on the lower scales.

\begin{proof}
Let $B_{n,\delta}$ be as in the proof of \cref{thm:main}, and fix an agent $i$. If she values
fewer than $k$ goods positively, then $v_i(A_i)+b_i^{(k)}=V_i\ge\mu_i$, so both conclusions hold
deterministically for her. Otherwise, let $a_i>0$ be her $k$th largest value among all goods,
and let $G_i$ be the set of goods that she values at a scale $d$ with $2^d>a_i$. This set
depends only on the fixed input. Every good in $G_i$ is worth strictly more than $a_i$ to her,
so $|G_i|\le k-1$.

Apply the accounting in \eqref{eq:deficit-accounting} only to the scales $d$ with
$2^d\le a_i$; these contain all goods outside $G_i$ that agent $i$ values positively.
For the unchanged run of $\DA$, it gives
\begin{equation}
 \sum_{t:g_t\notin G_i}x_{it}v_i(g_t)
 \le12v_i(A_i\setminus G_i)+2\sum_{d:\,2^d\le a_i}2^dD_{i,d}.
 \label{eq:propk-lower-scales}
\end{equation}
The coefficients $2^d$ in the last sum add up to at most $2a_i$. Consequently,
\eqref{eq:deficit-moment} and convexity, with weights $2^d/(2a_i)$ and any remaining
weight on zero, give $\E[\phi(\sum_{d:\,2^d\le a_i}2^dD_{i,d}/(2a_i))]\le2$.
As in \eqref{eq:weighted-deficit-tail}, Markov's inequality implies
$\sum_{d:\,2^d\le a_i}2^dD_{i,d}\le2B_{n,\delta}a_i$ with probability at least $1-\delta/n$.
All deficit sums here range over encountered scales only.

Since $x_{it}\le1$ and $|G_i|<k$, the contribution of the larger goods satisfies
\begin{equation}
 \sum_{t:g_t\in G_i}x_{it}v_i(g_t)
 \le v_i(A_i\cap G_i)+b_i^{(k)}.
 \label{eq:propk-large-goods}
\end{equation}
Also, the $k$ most valuable goods have total value at least $ka_i$. Those assigned to $i$
contribute at most $v_i(A_i)$, and the others form a set of at most $k$ outside goods. Thus
\begin{equation}
 ka_i\le v_i(A_i)+b_i^{(k)}.
 \label{eq:propk-kth-value}
\end{equation}
Combining these observations with \eqref{eq:propk-lower-scales}, we obtain, with probability
at least $1-\delta/n$,
\[
 \mu_i\le12v_i(A_i)+b_i^{(k)}+4B_{n,\delta}a_i
 \le\left(12+\frac{4B_{n,\delta}}k\right)(v_i(A_i)+b_i^{(k)}).
\]
By a union bound over agents and $\mu_i\ge V_i/\kappa$, the allocation is
$\min\{1,n/(\kappa(12+4B_{n,\delta}/k))\}$-PROP$k$ with probability at least $1-\delta$. Since
$12+4B_{n,\delta}/k\le16\max\{1,B_{n,\delta}/k\}$ and $n/\kappa\ge1$, this factor is at least
$\frac1{16}\min\{1,nk/(\kappa B_{n,\delta})\}$. As $B_{n,\delta}\le11\log\log(n/\delta)$, this
proves the first statement.

For the refined algorithm, the same argument uses its narrower scales and deficit bound.
The calculation in \cref{app:propk-refined}, applied with accuracy $\eps/2$, gives
\[
 \left(1+\frac{C_{\eps/2}B_{n,\delta}}k\right)(v_i(A_i)+b_i^{(k)})
 \ge(1-\eps/2)\mu_i
\]
for all agents with probability at least $1-\delta$, where $C_{\eps/2}$ is the constant in
\eqref{eq:near-prop-explicit}. Since $K_\eps=22C_{\eps/2}/\eps$ and
$B_{n,\delta}\le11\log\log(n/\delta)$, the condition $k\ge K_\eps\log\log(n/\delta)$ gives
$k\ge2C_{\eps/2}B_{n,\delta}/\eps$, and therefore
\[
 v_i(A_i)+b_i^{(k)}\ge\frac{1-\eps/2}{1+\eps/2}\mu_i\ge(1-\eps)\mu_i.
\]
This proves the second statement.
\end{proof}

\subsection{Exact ex-ante proportionality}
\label{subsec:ex-ante}

The realized guarantees above do not imply exact ex-ante proportionality, and the simple
algorithm $\DA$ does not satisfy it. The following example shows how favoring an
accumulated deficit at one scale can reduce another agent's expected utility.

\begin{proposition}
\label[proposition]{prop:not-ex-ante}
The algorithm $\DA$ is not ex-ante proportional, even for two agents and two goods.
\end{proposition}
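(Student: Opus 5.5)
We exhibit an explicit two-agent, two-good input on which one agent's expected utility under $\DA$ falls strictly below her proportional share. The idea is to make the two goods fall into the \emph{same} dyadic scale for one agent but into \emph{different} scales for the other. Then the deficit accumulated on the first good helps the first agent on the second good, but never helps the second agent.

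Concretely, take $n=2$ and goods $g_1,g_2$ with
\[
v_1(g_1)=v_1(g_2)=1, \qquad v_2(g_1)=2, \qquad v_2(g_2)=1 .
\]
So $V_1=2$ and $V_2=3$, and agent $2$'s proportional share is $3/2$. Both agents value both goods positively, so $N_1=N_2=\{1,2\}$ and all fractional amounts are $x_{it}=1/2$. For agent $1$ both goods lie at scale $0$. For agent $2$, $g_1$ lies at scale $1$ and $g_2$ at scale $0$.

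We then trace \cref{alg:da} through the two arrivals.

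\emph{First arrival.} All deficits are zero, so $g_1$ goes to each agent with probability $1/2$. Each relevant deficit increases by $1/2$, and the recipient's deficit is truncated back to $0$.

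\emph{Case agent $1$ receives $g_1$.} We have $D_{1,0}=0$ and $D_{2,1}=1/2$. Agent $2$'s deficit $D_{2,0}$ at the scale relevant for $g_2$ is still $0$. Hence $g_2$ is allocated uniformly, and agent $2$ gets it with probability $1/2$.

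\emph{Case agent $2$ receives $g_1$.} We have $D_{1,0}=1/2$ and $D_{2,0}=0$. Agent $1$ then receives $g_2$ with probability
\[
p=\frac{w(1/2)}{w(1/2)+w(0)} .
\]
Since $H(0)=0$ and $H(1/2)=(4/3)^{1/2}-1>0$, we have $w(1/2)>1=w(0)$, and therefore $p>1/2$.

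Combining the two cases, agent $2$'s expected utility is
\[
\E[v_2(A_2)]=\tfrac12\cdot\tfrac12\cdot1+\tfrac12\bigl(2+(1-p)\bigr)=\tfrac74-\tfrac p2 .
\]
Because $p>1/2$, this is strictly less than $3/2=V_2/n$. Hence $\DA$ is not ex-ante proportional.

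The only step requiring care is the scale bookkeeping: one must check that agent $2$'s deficit created by $g_1$ sits at scale $1$ and therefore does not enter the selection probability for $g_2$, which uses $D_{2,0}$. Everything else is direct arithmetic.
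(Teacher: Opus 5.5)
Your proof is correct and follows the paper's approach: an explicit two-agent, two-good instance in which one agent values both goods at the same scale and the other does not, so only the former ever benefits from an accumulated deficit. The paper places the disadvantaged agent's higher value on the second good rather than the first and obtains $\E[v_1(A_1)]=1+1/(1+w(1/2))<3/2$; your mirrored instance is equally valid, and your computation $\E[v_2(A_2)]=\tfrac74-\tfrac p2<\tfrac32$ is right.
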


\begin{proof}
Consider two agents and two goods with
\[
 (v_1(g_1),v_2(g_1))=(1,1),
 \quad
 (v_1(g_2),v_2(g_2))=(2,1).
\]
Each agent receives the first good with probability $1/2$. Agent $1$ values the second good
at a new scale, so her relevant deficit for that good is zero in either case. If agent $1$
received the first good, then agent $2$'s deficit is $1/2$, and agent $1$ receives the second
with probability $1/(1+w(1/2))$. If agent $2$ received the first good, then both relevant
deficits are zero, and agent $1$ receives the second with probability $1/2$. Therefore
\[
 \E[v_1(A_1)]
 =\frac12+2\left(\frac{1}{2(1+w(1/2))}+\frac14\right)
 =1+\frac{1}{1+w(1/2)}
 <\frac32=\frac{V_1}{2},
\]
where the strict inequality follows from $w(1/2)>1$.
\end{proof}

\section{Lower Bounds Against an Oblivious Adversary}
\label{sec:lower-bounds}

The preceding results give positive realized PROP1 guarantees with probability arbitrarily
close to one. We now show that the same conclusion does not extend to EF1, MMS, or EFX.
For EF1, the success probability for any prescribed positive factor can be exponentially
small in $n$. For MMS, it stays bounded away from one, regardless of how small the positive
factor is. For EFX, it can be at most $1/n!$ with only $n+1$ goods.

Each bound uses a finite distribution over inputs, chosen before the algorithm runs. We first
bound the probability that a deterministic rule succeeds on a random input from this
distribution. Fixing the random choices of a randomized algorithm gives such a rule, so
averaging over these choices gives the same bound for the randomized algorithm on a random
input. Hence at least one of these inputs has success probability at most this bound. In
particular, the input is fixed in advance and is not selected by observing an execution of the
algorithm.

\subsection{Exponentially small success probability for \texorpdfstring{EF1}{EF1}}
\label{subsec:ef1-lower-bound}

In contrast to PROP1, we now show that no algorithm can return an approximately EF1 allocation
with high probability. Our bound on the success probability is uniform in the approximation
factor: the constant in its exponent does not depend on $\alpha$, so the factor may itself
depend on $n$.

\begin{theorem}
\label{thm:ef1-confidence}
For all sufficiently large $n$, every $\alpha\in(0,1]$, and every randomized fully online
algorithm $\calA$, there is a fixed input $I$ with $O(n)$ goods such that
$\Prb[\rho_{\text{EF1}}(\calA,I)\ge\alpha]\le e^{-\Omega(n)}$. This holds even if every agent values every good strictly positively, all
values belong to four levels in $(0,1]$, and the algorithm knows the number of goods in advance.
\end{theorem}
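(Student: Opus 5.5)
The plan is to use Yao's principle, exactly as described at the start of \cref{sec:lower-bounds}. I will fix a finite distribution $\mathcal D$ over inputs with $n$ agents and $O(n)$ goods, chosen in advance. Every input in its support will have the same number of goods, all values strictly positive, and all values in a four-level set $\{1,\beta,\beta^2,\beta^3\}$ with $\beta=\beta(\alpha)\ll\alpha$ small. I will then show that every deterministic online rule succeeds for $\alpha$-EF1 on a random $I\sim\mathcal D$ with probability $e^{-\Omega(n)}$. The key requirement is that the constant in the exponent must not depend on $\alpha$. The parameter $\alpha$ enters only through $\beta$, which I choose so that any envy of order one cannot be fixed by goods of value $O(\beta)$ once it is multiplied by $\alpha$.

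The construction follows the three phases in \cref{subsec:overview}.

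Phase 1 presents $n/2$ \emph{unit goods}, worth $1$ to everyone. A deterministic rule leaves a set $U$ of at least $n/2$ agents without a unit good, and $U$ is determined before any randomness of the input is revealed.

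Phase 2 presents \emph{personal goods}: agent $i$ values her personal good at $\beta$ and everyone else values it at $\beta^3$. This forces structure: an agent who received a unit good and also takes others' personal goods creates envy toward herself.

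Phase 3 draws $\Theta(n)$ random groups (say pairs or triples of agents, chosen uniformly and independently) and one final good per group. Independently for each final good, a fair coin decides whether the group's members value it at the high level or the low level. All non-members value it at the lowest level.

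The analysis should capture the dilemma. If many agents of $U$ remain without a valuable good when the final goods arrive, then for each of them some group's final good must turn out large and be given to her. Otherwise she envies an agent holding a unit good by an amount that exceeds $\alpha$ times her own bundle, even after one good is removed. If instead the rule distributes Phase-2 and Phase-3 goods so that almost all of $U$ is served early, then every later high-valued final good given to any member creates envy in the other members of its group that one removal cannot cancel.

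I will formalize this as a sequence of $\Omega(n)$ ``critical'' arrivals. At each such arrival, the rule's action is determined by the past, but the coin of the current or a later good is still independent of the past. Conditioned on the history, the rule then fails with probability at least a constant $c>0$. Multiplying these conditional probabilities along the history (a martingale/chain-rule argument, not independence of the rule's choices) gives success probability at most $(1-c)^{\Omega(n)}$. A Chernoff bound on the random groups handles the atypical group structures: with probability $1-e^{-\Omega(n)}$, the groups touch $\Omega(n)$ distinct agents of every large subset.

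Finally, I average over the randomness of $\calA$. This shows that some fixed input in the support of $\mathcal D$ has success probability at most $e^{-\Omega(n)}$. Because the number of goods is the same on every input in the support, this also covers algorithms that know the number of goods.

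The main obstacle is the case analysis that makes each critical arrival cost a constant factor regardless of the rule's adaptivity, and that does so uniformly in $\alpha$. I expect the delicate part to be choosing the four levels and the group sizes so that three things hold at once:
\begin{enumerate}
\item an unserved agent can only be rescued by a high final good of her own group;
\item a served agent's receipt of a high final good always creates envy that is irremovable up to one good;
\item neither effect can be neutralized by the low-valued goods, whose total value must stay below $\alpha$ times the envy gap.
\end{enumerate}
I would first try pairs as groups with levels $1\gg\beta\gg\beta^2\gg\beta^3$ and $\beta\le\alpha/(10n)$. I would then verify the dilemma by splitting on whether $|U'|\ge n/8$, where $U'\subseteq U$ is the set of agents still unserved at the start of Phase 3.
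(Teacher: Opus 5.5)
\textbf{There is a genuine gap: the construction you sketch does not contain the dilemma your argument needs.} Your outer framework matches the paper's. That framework is a finite family of inputs fixed before the run, all with the same number of goods, a bound for every deterministic rule, averaging over the algorithm's coins, and a common rescaling of values.

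\textbf{Claim (ii) is false as stated.} Suppose a high final good of a group goes to a member $i$. Another member $j$ has envy that survives one removal only if $A_i$ also holds a \emph{second} good that $j$ values highly, such as a unit good or another high final good of a group containing $j$. Otherwise, removing the final good itself cancels the envy.

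\textbf{Claim (i) has the same problem.} Envy toward a bundle whose only valuable good is a single unit good vanishes once that good is removed.

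\textbf{A rule that always succeeds on your family.} Nothing in Phases 1 and 2 is hidden, since each personal good announces its owner. Consider the deterministic rule that:
\begin{itemize}
\item gives the $n/2$ unit goods to distinct agents;
\item gives each personal good to its owner;
\item gives each final good to a member of its pair, keeping the numbers of high and of low final goods of each pair balanced between the two members, and breaking ties for high goods in favor of a member without a unit good.
\end{itemize}
Under this rule, for every ordered pair $i\ne j$, the number of goods in $A_j$ worth $1$ to $i$ is at most one more than the number of such goods in $A_i$. Agent $i$ also keeps her personal good worth $\beta$, which absorbs the leftover lower-level values. So the rule returns an $\alpha$-EF1 allocation on every input of your family, at least for every $\alpha\le1/2$. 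Hence no success bound below one follows, and the step ``fail with constant probability given the history'' has nothing to rest on. Note also that the current good's coin is visible when it arrives, so only later coins are hidden from the rule.

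\textbf{The missing idea is that serving agents early must carry a cost that depends on information hidden at that time.} In the paper, there are exactly $n$ initial goods:
\begin{itemize}
\item $n-r$ unit goods;
\item one targeted good for each member of a uniformly random hidden set $S$ of $r=8q$ targets, worth $K=4/\alpha$ to its target and $\eps$ to everyone else.
\end{itemize}
The targets also value every active final good at $K$. Every non-target gets a uniform label in $[q]$, so the groups are large.

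The argument then runs as follows.
\begin{itemize}
\item By pigeonhole on the $n$ initial goods, a non-target with no initial good must receive the active good of her group. So $u$ such agents cost a factor $2^{-u}$.
\item The unit goods precede all hidden choices, so the set $E$ of agents without unit goods is independent of $S$, and few targets lie in $E$.
\item Serving agents of $E$ with targeted goods therefore leaves many targets in $C$ without their own targeted good. One of them gets no active good either, which rules out any bundle with two active goods.
\item A target in $C$ whose targeted good shares a bundle with an active good must receive an active good herself.
\item An active good can go to $C$ only if its group contains no agent of $E\setminus S$.
\end{itemize}
Together these give $s\le u+2k+2b$. The proof then concludes with one-shot tail bounds on $k_0$, $b$ and $s$, plus the factor $2^{-u}$, rather than a per-arrival chain rule.

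Your personal goods are worth $\beta^3$ to everyone except their announced owner and are not hidden, so they can play none of these roles.
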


The proof gives a more precise trade-off: if $q$ is a positive multiple of $16$ and
$n\ge1024q$, then some input with exactly $n+q$ goods limits the success probability to
$4e^{-q/128}$. For sufficiently large $n$, taking $q=16\lfloor n/16384\rfloor=\Theta(n)$ gives
\cref{thm:ef1-confidence}. More generally, for any fixed $a>0$,
only $O_a(\ln n)$ goods beyond the first $n$ suffice to make the success probability at most
$n^{-a}$, for all sufficiently large $n$.

The proof hides which groups of agents value a final good highly, rather than just how many
goods remain. Leaving many agents without an initial good requires many independent choices
about the final goods to be favorable. Giving initial goods to almost all of these agents has a
different consequence: only a small number of valuable final goods can be allocated without
creating envy that persists after one good is removed.

\begin{proof}
We prove the bound $4e^{-q/128}$ stated above. Fix a positive multiple $q$ of $16$ with
$n\ge1024q$, fix $\alpha$, and set
\[
 r=8q,\quad
 \eps=\frac{\alpha}{4r},\quad
 \zeta=\frac{\alpha\eps}{4q},\quad
 K=\frac4\alpha.
\]
These choices ensure the following inequalities, which we use to compare utilities in the proof:
\begin{equation}
 q\zeta<\alpha\eps,\quad
 r\eps+q\zeta<\alpha,\quad
 1+r\eps+q\zeta<\alpha K.
 \label{eq:ef1-scales}
\end{equation}
Choose a uniform $r$-element set $S\subseteq[n]$ of \emph{target agents}. Independently,
assign to each agent $i\notin S$ a uniform group label $\ell_i\in[q]$, and choose independent
fair bits $X_1,\dots,X_q$. All these choices are made before the run. Present the following goods.
\begin{enumerate}
\item $n-r$ \emph{unit goods}, each worth $1$ to every agent.
\item One \emph{targeted good} $g_a$ for each $a\in S$, in increasing order of $a$, worth
$K$ to $a$ and $\eps$ to every other agent.
\item Exactly $q$ final goods $f_1,\dots,f_q$. If $X_j=0$, then $f_j$ is worth $\zeta$ to
every agent. If $X_j=1$, then
\begin{equation}
 v_i(f_j)=
 \begin{cases}
 K,&i\in S,\\
 1,&i\notin S\text{ and }\ell_i=j,\\
 \zeta,&i\notin S\text{ and }\ell_i\ne j.
 \end{cases}
 \label{eq:ef1-final-values}
\end{equation}
Call $f_j$ \emph{active} when $X_j=1$.
\end{enumerate}
The first $n$ goods, which we call \emph{initial goods}, reveal neither the labels nor the bits. We first analyze these values and
rescale them to $(0,1]$ at the end of the proof.

Fix a deterministic rule. Let $E$ be the set of agents receiving no unit good, and let
$C=[n]\setminus E$. Since $|E|\ge r$, choose an $r$-element subset $E_0\subseteq E$ determined
only by the allocation of the unit goods, and write $k_0=|S\cap E_0|$. Since the unit goods arrive
first and are the same on every input, the set $E_0$ is independent of $S$. The estimate in
\cref{app:ef1-estimates} gives us
\begin{equation}
 \Prb[k_0>q/8]\le e^{-q/128}.
 \label{eq:ef1-overlap}
\end{equation}
We henceforth consider $k_0\le q/8$.

\paragraph{The unit goods must have distinct recipients.}
Otherwise some bundle contains two unit goods. Every agent $i\in E\setminus S$ must then
receive the active good of her group: without it, her utility is at most
$r\eps+q\zeta<\alpha$, while removing one good from the bundle with two unit goods leaves
value at least $1$. But
\[
 |E\setminus S|\ge|E_0\setminus S|=r-k_0>q,
\]
and there are at most $q$ active goods. Thus the final allocation cannot be $\alpha$-EF1.
We may therefore assume that $|E|=r$ and each agent in $C$ has exactly one unit good.
Now $E_0=E$; write $k=|S\cap E|\le q/8$.

\paragraph{Many agents without initial goods.}
After the targeted goods, let $Z\subseteq E$ be the agents who received no targeted good; these
agents have no initial good. Write $z=|Z|$ and $u=|Z\setminus S|$. Then
\begin{equation}
 z\le u+k.
 \label{eq:ef1-z}
\end{equation}
For each fixed $S$, these sets and counts are determined before any label or bit is revealed.
Every agent $i\in Z\setminus S$ must receive the active good $f_{\ell_i}$ if the final
allocation is $\alpha$-EF1. Otherwise $v_i(A_i)\le q\zeta<\alpha\eps$. Since $i$ owns none of the first $n$ goods,
some other bundle contains two of them, each worth at least $\eps$ to $i$. This contradicts
$\alpha$-EF1.

Consequently, the $u$ agents in $Z\setminus S$ must have distinct labels, and all corresponding
bits must equal one. Conditional on the labels, this event has probability either zero or
$2^{-u}$. In particular,
\begin{equation}
 \Prb[\text{success and }u\ge q/16\mid S]\le2^{-q/16}.
 \label{eq:ef1-many-empty}
\end{equation}
This conclusion holds for any allocation of the final goods.

\paragraph{Few agents without initial goods.}
It remains to consider $u<q/16$. Let $b$ be the number of groups containing no agent in
$E\setminus S$, and let $s=\sum_{j=1}^qX_j$ be the number of active goods. We show that,
whenever the final allocation is $\alpha$-EF1,
\begin{equation}
 s\le u+2k+2b.
 \label{eq:ef1-active-count}
\end{equation}
We first prove that active goods have distinct recipients. At least $r-z$ targeted goods went
to $E$, one for each of its agents with a nonempty bundle. Thus at most $z$ targeted goods
went to $C$, so at most $z$ agents in $S\cap C$ received their own targeted good. At most
$q$ further targets can receive an active good. Since
\[
 |S\cap C|-z-q=r-k-z-q\ge r-2k-u-q>0,
\]
some target $a\in S\cap C$ receives neither $g_a$ nor any active good. Her utility is at most
$1+r\eps+q\zeta<\alpha K$. Every active good is worth $K$ to her, so a bundle containing two
active goods would contradict $\alpha$-EF1.

Next, at most $b$ active goods can go to $C$. Suppose that active $f_j$ goes to an agent in
$C$ and group $j$ contains an agent $i\in E\setminus S$. Agent $i$ does not receive
$f_j$, so her utility is at most $r\eps+q\zeta<\alpha$: all other final goods are worth
$\zeta$ to her. The recipient of $f_j$ also has a unit good, and $i$ values both goods at
$1$. This contradicts $\alpha$-EF1. Hence an active good can go to $C$ only if its group is
one of the $b$ groups containing no agent in $E\setminus S$.

Finally, at most $z+k+b$ active goods can go to $E$. Consider a recipient in $E$ who also
holds $g_a$ for some $a\in S\cap C$. The target $a$ must receive an active good as well:
otherwise her utility is at most $1+r\eps+q\zeta<\alpha K$, while the recipient's bundle
contains $g_a$ and an active good, both worth $K$ to $a$. Distinct such recipients correspond
to distinct targets in $C$, since each target has only one targeted good. Each of these
targets receives an active good, and we showed above that at most $b$ active goods go to $C$.
There are therefore at most $b$ such recipients. Every other recipient in $E$ either belongs to $Z$, accounting
for at most $z$ agents, or received a nonempty set of targeted goods, all with targets in $E$.
There are only $k$ such goods, so at most $k$ agents of the latter kind. Since active
goods have distinct recipients, at most $z+k+b$ go to $E$.

Adding the bounds for $C$ and $E$, and using \eqref{eq:ef1-z}, proves
\eqref{eq:ef1-active-count}. When also $b<q/16$, it implies
\begin{equation}
 s\le\frac{q}{16}+\frac{q}{4}+\frac{q}{8}=\frac{7q}{16}.
 \label{eq:ef1-few-active}
\end{equation}
This is substantially below the expected number $q/2$ of active goods.

\paragraph{Combining the probabilities.}
For each fixed $S$ with $k\le q/8$, the $r-k\ge63q/8$ agents in $E\setminus S$ have
independent uniform group labels. The estimates in \cref{app:ef1-estimates} give
\[
 \Prb[b\ge q/16\mid S]\le e^{-q/8},
 \quad
 \Prb[s\le7q/16]\le e^{-q/128}.
\]
If the final allocation is $\alpha$-EF1 and $k_0\le q/8$, then the unit goods have distinct
recipients, and \eqref{eq:ef1-few-active} shows that $u\ge q/16$, $b\ge q/16$, or $s\le7q/16$.
Together with \eqref{eq:ef1-overlap} and \eqref{eq:ef1-many-empty}, these bounds therefore imply
\[
 \Prb[\text{success}]
 \le e^{-q/128}+2^{-q/16}+e^{-q/8}+e^{-q/128}
 \le4e^{-q/128}.
\]
No independence between the four events in this sum is needed. This proves the bound for
every deterministic rule. Averaging over the random choices of $\calA$ and then choosing a
fixed input proves the bound for $\calA$.

To put all values in $(0,1]$, divide the values throughout the input family by $K$ before the
run. The four levels become
\begin{equation}
 \left\{
 \frac{\alpha^3}{64rq},\quad
 \frac{\alpha^2}{16r},\quad
 \frac{\alpha}{4},\quad1
 \right\}.
 \label{eq:ef1-four-values}
\end{equation}
The argument applies to every rule on the scaled family; it does not require the rule to be
unchanged under scaling. Neither the common scaling nor the common number of goods reveals the hidden
choices. This completes the proof.
\end{proof}

For fixed $\alpha$ and $q=\Theta(n)$, the smallest value in \eqref{eq:ef1-four-values} is
$\Theta_\alpha(n^{-2})$. Thus an exponentially small success probability does not require
exponentially separated item values. Moreover, the proof allows the remaining goods to be
allocated arbitrarily after the first $n$ allocations, so revealing all remaining goods at that
point does not improve the bound.

The theorem also bounds the worst-case expected realized guarantee. The input may depend on
$\alpha$; accordingly, we take the infimum over inputs before letting $\alpha$ tend to zero.

\begin{corollary}
\label[corollary]{cor:ef1-expected}
Every randomized fully online algorithm $\calA$ satisfies $R^{\calA}_{\text{EF1}}(n)\le
e^{-\Omega(n)}$. In particular, for any fixed $p>0$ and all sufficiently large $n$, no
algorithm can guarantee any positive EF1 approximation factor with probability at least $p$ on
every input.
\end{corollary}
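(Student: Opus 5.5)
The plan is to derive both claims directly from \cref{thm:ef1-confidence}, using only that the EF1 ratio lies in $[0,1]$. Fix $n$ large enough for the theorem to apply and an arbitrary $\alpha\in(0,1]$. Let $I_\alpha$ be the input given by \cref{thm:ef1-confidence}, so that $\Prb[\rho_{\text{EF1}}(\calA,I_\alpha)\ge\alpha]\le ce^{-\gamma n}$. Here $c=4$, and $\gamma>0$ is an absolute constant that does not depend on $\alpha$; this uniformity is the point stressed before the theorem.

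Since $\rho_{\text{EF1}}\le1$, we split the expectation on the event $\{\rho_{\text{EF1}}(\calA,I_\alpha)\ge\alpha\}$ and bound it as
\[
 \E[\rho_{\text{EF1}}(\calA,I_\alpha)]\le\alpha\cdot\Prb[\rho_{\text{EF1}}<\alpha]+1\cdot\Prb[\rho_{\text{EF1}}\ge\alpha]\le\alpha+ce^{-\gamma n}.
\]
Hence $R^{\calA}_{\text{EF1}}(n)=\inf_I\E[\rho_{\text{EF1}}(\calA,I)]\le\alpha+ce^{-\gamma n}$ for every $\alpha\in(0,1]$. The left-hand side does not depend on $\alpha$, so we let $\alpha\to0$ and obtain $R^{\calA}_{\text{EF1}}(n)\le ce^{-\gamma n}=e^{-\Omega(n)}$. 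Alternatively, one can take $\alpha=e^{-\gamma n}$ directly, which is legitimate because the theorem allows $\alpha$ to depend on $n$. Only finitely many small values of $n$ fall outside the theorem's range, and for those the trivial bound $R\le1$ is absorbed into the constant in $e^{-\Omega(n)}$.

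For the second claim, fix $p>0$ and suppose some algorithm guaranteed an EF1 factor $\alpha>0$ with probability at least $p$ on every input. This contradicts \cref{thm:ef1-confidence} as soon as $ce^{-\gamma n}<p$, which holds for all sufficiently large $n$. The threshold on $n$ depends only on $p$, not on $\alpha$.

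The only delicate point is the order of quantifiers. The input depends on $\alpha$, so we must take the infimum over inputs for each fixed $\alpha$ first and only then send $\alpha\to0$. This works precisely because the exponent constant is uniform in $\alpha$.
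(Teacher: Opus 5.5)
Your proposal is correct and follows essentially the same route as the paper. You split the expectation on the success event to get $\E[\rho_{\text{EF1}}(\calA,I_\alpha)]\le\alpha+4e^{-\Omega(n)}$, take the infimum over inputs for each fixed $\alpha$, and then let $\alpha\downarrow0$ using the $\alpha$-uniform exponent; the paper's bound $\alpha+(1-\alpha)\Prb[\rho_{\text{EF1}}\ge\alpha]$ is marginally sharper but plays the same role, and both read the second claim directly off \cref{thm:ef1-confidence}.
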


\begin{proof}
For $n\ge16384$, take $q=16\lfloor n/16384\rfloor$. For each $\alpha\in(0,1)$, the proof of
\cref{thm:ef1-confidence} gives an input $I_\alpha$ on which
\[
 \E[\rho_{\text{EF1}}(\calA,I_\alpha)]
 \le\alpha+(1-\alpha)\Prb[\rho_{\text{EF1}}(\calA,I_\alpha)\ge\alpha]
 \le\alpha+4(1-\alpha)e^{-q/128}.
\]
Taking the infimum over inputs and then letting $\alpha\downarrow0$ proves the first
statement. The second statement follows directly from \cref{thm:ef1-confidence}.
\end{proof}

Together with \cref{thm:main}, this gives an exponential separation between the best achievable
expected realized guarantees for PROP1 and EF1. The difference is not merely that
EF1 requires pairwise comparisons: in the construction, a good useful for one agent can
create envy for another even when that second agent's own share is small.

\subsection{Success probability for \texorpdfstring{MMS}{MMS}}
\label{subsec:mms-lower-bound}

For MMS, we use one hidden target and two possible endings, differing only in whether the
final goods have positive or zero values. The case $n=2$ is obtained from the two-agent
construction in \citet[Theorem~A.1]{zhouEtAl2023multiagent} by choosing the target and the
ending before the run. We extend it to every number of agents by adjusting the numbers of
unit, targeted, and final goods so that the relevant maximin shares are positive
with the zero-valued ending and sufficiently large with the positive-valued ending.

\begin{theorem}
\label{thm:mms-confidence}
For every $n\ge2$, every $\alpha\in(0,1]$, and every randomized fully online algorithm
$\calA$, there exists a fixed input $I$ such that
$\Prb[\rho_{\text{MMS}}(\calA,I)\ge\alpha]\le1-\lfloor n/2\rfloor/(2n)$. In particular, the bound is $3/4$ when $n$ is even and $3/4+1/(4n)$ when $n$ is odd.
The input can have exactly $2n-\lfloor n/2\rfloor$ goods, with this number known to the
algorithm, and all item values in $[0,1]$.
\end{theorem}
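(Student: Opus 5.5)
We apply Yao's principle, as for \cref{thm:ef1-confidence}. Before the run, we draw a uniformly random \emph{target} $t\in[n]$ and an independent fair bit $X$ selecting the \emph{ending}. We then show that every deterministic rule fails on this random input with probability at least $\lfloor n/2\rfloor/(2n)$. Averaging over the algorithm's coins and fixing an input then gives the theorem. The bound has the form $\frac12\cdot\frac{\lfloor n/2\rfloor}{n}$, which suggests the intended structure: in one of the two endings, the algorithm fails whenever the target lies in a set of $p=\lfloor n/2\rfloor$ agents. That set must be fixed by the rule's behaviour before anything about $t$ is revealed.

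Write $p=\lfloor n/2\rfloor$. I would present the following goods.
\begin{enumerate}
\item $n-p$ unit goods, worth $1$ to every agent; these reveal nothing.
\item $p$ further initial goods, whose values to $t$ differ from their values to the other agents; this is the place where the target enters.
\item $n-p$ final goods, worth $0$ to everyone if $X=0$, and positive, chosen so that only $t$ (or agents related to $t$) value them highly, if $X=1$.
\end{enumerate}
This gives $n+(n-p)=2n-p$ goods in total, matching the statement.

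In the zero ending, each agent values exactly the $n$ initial goods positively, so $\mathsf{MMS}_i>0$. Hence any $\alpha$-MMS allocation must give the initial goods to $n$ distinct agents. This pins down the rule's behaviour on the initial goods and, in particular, determines a set $P$ of $p$ agents who receive no unit good. Since the unit goods are identical across inputs, $P$ is independent of $t$.

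In the positive ending, I would choose the values so that the following holds. If $t\in P$, or more generally if $t$ falls in a size-$p$ set determined before $t$ is revealed, then $t$'s share of the initial goods is too small. At the same time, the $n-p$ final goods cannot all be routed so that $t$ and the other agents whose maximin shares the final goods raise reach $\alpha\cdot\mathsf{MMS}$. The counts $n-p$ unit, $p$ middle, and $n-p$ final goods should be tuned so that the maximin shares in this ending exceed what a one-initial-good bundle provides by more than a factor $1/\alpha$. For example, a partition with $n-p$ bundles each holding a final good and $p$ bundles each holding at least two initial goods (possible since $n\ge2p$) shows that $\mathsf{MMS}_t$ is large relative to any single initial good.

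The key steps, in order, are:
\begin{enumerate}
\item Define the input family with values depending on $\alpha$, and compute the relevant maximin shares in both endings.
\item Show that success in the zero ending forces distinct recipients of the initial goods, which yields the set $P$.
\item Show that success in the positive ending fails whenever $t\in P$, given that behaviour. Since the rule cannot distinguish the endings before the final goods, its behaviour on the initial goods is common to both.
\item Conclude: either the rule fails in the zero ending, or it fails in the positive ending with probability at least $p/n$ over $t$. This gives failure probability at least $\frac12\cdot\frac pn$.
\item Rescale all values into $[0,1]$ and check the stated parity formulas.
\end{enumerate}

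The main obstacle is the value calibration in the positive ending, uniformly in $\alpha$. The target's values for the initial goods must be small enough that holding one initial good (rather than a final good) is below $\alpha\cdot\mathsf{MMS}_t$ for arbitrarily small $\alpha$. At the same time, non-target agents must keep positive maximin shares in the zero ending, and the total count must remain $2n-p$. I would first try target values of the form $\eps=\alpha/4$ on initial goods and $1$ on final goods, mimicking the scales in \eqref{eq:ef1-scales}. I would then verify, by the explicit partition above, that $\mathsf{MMS}_t\ge\min\{1,2\eps\}$ exceeds $\eps/\alpha$ whenever $t$ holds no final good. If that partition is too weak, I would adjust the number of middle goods, following the two-agent instance of \citet[Theorem~A.1]{zhouEtAl2023multiagent} more closely.
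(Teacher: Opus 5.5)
Your skeleton matches the paper. You choose a uniformly random hidden target and an independent fair choice between a zero-valued and a positive-valued ending before the run, use $2n-p$ goods, and observe that success in the zero ending forces every agent to receive exactly one of the $n$ initial goods. This fixes a set of $p$ agents before the target is revealed, and you count failing inputs target by target. Step (iv) should be phrased per target. If two unit goods share a recipient, the zero ending fails for every target and the failure probability is already at least $1/2$. Otherwise, for each target in the bad set, at most one of its two inputs may succeed.

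The gap is step (iii), which carries the whole argument, and the calibration you propose does not deliver it. Showing that $t$ needs a final good is no contradiction, because the algorithm may simply give her one. You need other agents whose claims already use up all $n-p$ final goods. Suppose the final goods are valued highly only by $t$, say at zero by the non-targets. Then the following rule succeeds on every input: give the unit goods to distinct agents, the middle goods one each to $P$, and the final goods to $t$. The non-targets' maximin shares are the same as in the zero ending, and their single initial goods already meet them. Your auxiliary check is also off. With $\eps=\alpha/4$, the bound $\mathsf{MMS}_t\ge\min\{1,2\eps\}=\alpha/2$ exceeds $\eps/\alpha=1/4$ only when $\alpha>1/2$. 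Placing a unit good in each of the $p$ bundles without a final good gives $\mathsf{MMS}_t\ge1$, but that again only shows that $t$ needs a final good.

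The missing idea is to use the allocation forced by the zero ending to create as many competing claims on the final goods as there are final goods. The paper uses $r=\lfloor n/2\rfloor$ unit goods, followed by $n-r$ targeted goods worth $K>1/\alpha$ to the target $a$ and $\eps<\alpha$ to everyone else. The positive final goods are worth $K$ to $a$ and $1$ to everyone else. Success in the zero ending puts the targeted goods one each on the $n-r$ agents without a unit good. In the positive ending, each of them holds only $\eps$ but has $\mathsf{MMS}\ge1$, so together they absorb all $n-r$ final goods. A target among the unit-good recipients then keeps value $1$, although $\mathsf{MMS}_a\ge K$ because $2(n-r)\ge n$ goods are worth $K$ to her. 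Thus the paper's bad set is the set of unit-good recipients, not its complement.

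Your orientation ($n-p$ unit goods first, bad set $P$) can also be completed, but only by reversing your choice of who values which goods. Let the middle goods be worth $\eps$ to $t$ and $K$ to everyone else, and let the positive final goods be worth $K$ to everyone. Then each of the $n-p$ unit holders values $n$ goods at $K$, so she has $\mathsf{MMS}\ge K$ but holds only $1$. Meanwhile $t\in P$ holds $\eps$ against $\mathsf{MMS}_t\ge1$. So $n-p+1$ agents each need one of the $n-p$ final goods. In both versions the final goods must be valuable to the non-targets, which is exactly what your proposed calibration excludes.
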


\begin{proof}
Put $r=\lfloor n/2\rfloor$, and choose $0<\eps<\alpha$ and $K>1/\alpha$.
Before the run, choose a \emph{target} agent $a\in[n]$ uniformly at random and, independently, choose
each of the two endings with probability $1/2$. Every input starts with $r$ unit goods
worth $1$ to every agent, followed by $n-r$ targeted goods worth $K$ to $a$ and $\eps$ to
every other agent. The final $n-r$ goods are either all worth zero, or each worth $K$ to
$a$ and $1$ to every other agent. All $2n$ inputs have the same number of goods, and all choices
defining the input are made before any allocation is observed.

Fix a deterministic rule. Its allocation of the unit goods is independent of the target and the
ending. For each target, its allocation of the targeted goods is the same on both inputs.

The input with the zero-valued ending has exactly $n$ goods that every agent values
positively. Placing one such good in each of $n$ bundles shows that every agent has a strictly
positive maximin share.  Thus, on the input with the zero-valued ending, every agent must
receive one of these $n$ goods in any successful allocation. If some agent receives two unit
goods, then no input with the zero-valued ending can succeed; only the $n$ inputs with the
positive-valued ending can succeed, giving success probability at most $1/2\le1-r/(2n)$.

Otherwise, let $C$ be the set of the $r$ recipients of the unit goods, and let $E=[n]\setminus C$.
Fix a target $a\in C$, and suppose that the input with the zero-valued ending succeeds.
Then the $n-r$ targeted goods must be allocated one each to the agents in $E$.  In particular, the target $a$
receives no targeted good.

Now consider the positive-valued ending with the same target. For every agent $i\in E$,
the $r$ unit goods and the $n-r$ final goods are all worth $1$, so $\mathsf{MMS}_i\ge1$.  Her utility before the final
goods arrive is only $\eps<\alpha$.  Therefore she must receive a final good in any
$\alpha$-MMS allocation.  There are exactly $|E|=n-r$ final goods, so they must all go
to $E$, leaving the target $a$ with utility $1$.

On the other hand, agent $a$ values all $2(n-r)\ge n$ targeted and final goods at $K$.
A partition into $n$ bundles, each containing at least one of these goods, shows that
$\mathsf{MMS}_a\ge K$.  Hence
\[
v_a(A_a)=1<\alpha K\le\alpha\cdot\mathsf{MMS}_a,
\]
so the allocation on this input is not $\alpha$-MMS.  For every target in $C$, at most one of its two inputs can
therefore succeed.  Allowing both inputs to succeed for each target outside $C$, at most $r+2(n-r)=2n-r$ of the $2n$ inputs can succeed.  This gives the bound $1-r/(2n)$ for every deterministic
rule.  Averaging over the random choices of $\calA$ and then selecting a fixed input proves
the bound for $\calA$.  Finally, a common scaling of the whole family by $1/K$ puts
all values in $[0,1]$ without changing the argument.
\end{proof}

The bound in \cref{thm:mms-confidence} does not deteriorate as $\alpha$ tends to zero.
This differs from the classical upper bound on the expected realized guarantee
$R^{\calA}_{\text{MMS}}(n)=O(n^{-1/2})$ that follows, for identical valuations, from the randomized
machine-covering lower bound of \citet{azarEpstein1998covering}.  The latter only gives a
success-probability bound of $O(1/(\alpha\sqrt n))$, which becomes vacuous for sufficiently
small $\alpha$.  Thus \cref{thm:mms-confidence} bounds the success probability uniformly over all positive
factors, including the range in which the bound on the expected realized guarantee is vacuous.

\subsection{EFX with only \texorpdfstring{$n+1$}{n+1} goods}
\label{subsec:efx-lower-bound}

With at most $n$ goods, assigning each good to a different agent is exact EFX. One
additional good changes the achievable probability from one to at most $1/n!$, for any
prescribed positive approximation factor.

\begin{theorem}
\label{thm:efx-confidence}
For every $n\ge2$, every $\alpha\in(0,1]$, and every randomized fully online algorithm
$\calA$, there is a fixed input $I$ with exactly $n+1$ goods such that
\begin{equation}
 \Prb[\rho_{\text{EFX}}(\calA,I)\ge\alpha]\le\frac1{n!}.
 \label{eq:efx-confidence}
\end{equation}
All values are strictly positive and at most one, and the number of goods may be known in advance.
Conversely, some randomized fully online algorithm returns an exact EFX allocation with
probability at least $2/((n+1)n!)$ on every input with exactly $n+1$ goods.
\end{theorem}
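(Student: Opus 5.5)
The plan follows the averaging template of \cref{sec:lower-bounds}. I will build a family of $n!$ inputs indexed by a uniformly random permutation $\sigma$ of the agents. I then show that a fixed deterministic rule succeeds on at most one input of the family, which gives \eqref{eq:efx-confidence} after averaging over the algorithm's coins and choosing a fixed input.

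The structural fact behind the construction is this. With exactly $n+1$ strictly positively valued goods, any $\alpha$-EFX allocation gives every agent at least one good. Indeed, if some agent is empty, another agent holds two goods; removing one of them still leaves a positively valued good, which the empty agent values above $0=\alpha\cdot v_i(\varnothing)$. So exactly one agent receives two goods and every other agent receives one. Moreover, each agent $i$ whose bundle is a single good must value it at least $\alpha$ times her larger value for the two goods of the doubled bundle.

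For the values, I plan to use several geometrically separated levels, with ratios much larger than $1/\alpha$, chosen so that the only $\alpha$-EFX allocations give $g_t$ to $\sigma(t)$ for $t\le n-1$. Roughly, good $g_{t+1}$ is valued at a high level by $\sigma(t)$, or by the prefix $\sigma(1),\dots,\sigma(t)$, and at a low level by the others. This way, any deviation at step $t$ leaves some agent holding a single low good while another bundle contains two goods that she values highly. The values of $g_1,\dots,g_t$ should depend only on $\sigma(1),\dots,\sigma(t-1)$, so that the identity of $\sigma(t)$ is revealed only by $g_{t+1}$. Given this, the deterministic rule's choice at step $t$ is a function of $\sigma(1),\dots,\sigma(t-1)$ alone. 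Success requires this choice to equal $\sigma(t)$, and inductively this pins down $\sigma(1),\dots,\sigma(n-1)$, and hence $\sigma$. So at most one of the $n!$ inputs succeeds.

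The main obstacle is tuning the values so that every deviation from the prescribed order violates $\alpha$-EFX. This must hold regardless of how the last two goods are placed, and that flexibility is exactly what the single doubled bundle provides. I would first try a single high value $1$ and a single low value $\eta<\alpha$ with the prefix pattern above, and add a middle level only if the case analysis demands it. I would then verify by the bundle-structure fact that the doubled bundle must be $\{g_n,g_{n+1}\}$ or must be forced onto $\sigma(n)$. Strict positivity and values in $(0,1]$ come for free, and the number of goods is the same across the family.

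For the converse, I would use a randomized algorithm that guesses the structure of a fixed offline EFX allocation, which exists for additive valuations. It mixes two strategies. In the first, it sends $g_1,\dots,g_n$ to distinct agents according to a uniform bijection and then places $g_{n+1}$ optimally, since at that point all values are known. In the second, it guesses which pair among the first $n$ goods shares a bundle, together with a bijection. Some EFX allocation has one of these shapes, and I will choose the mixing weight so that the worse case succeeds with probability at least $2/((n+1)n!)$. I expect this counting to be routine, but the exact mixing, and whether the last good's free placement can absorb more shapes, needs checking to reach precisely the stated factor $(n+1)/2$.
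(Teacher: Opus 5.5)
\paragraph{Lower bound.} Your skeleton matches the paper's. Every $\alpha$-EFX allocation of $n+1$ positive goods has one two-good bundle and $n-1$ singletons. The values of $g_1,\dots,g_t$ reveal only $\sigma(1),\dots,\sigma(t-1)$. Forcing $g_t\to\sigma(t)$ for $t<n$ then lets a deterministic rule succeed on at most one permutation.

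The valuations you defer as ``the main obstacle'' are the whole content, and the patterns you sketch point the wrong way. Take $n=2$ and write $(a_1,b_1)$ for the common values of $g_1$. On the input with $\sigma=(1,2)$, ruling out $A_1=\{g_3\},A_2=\{g_1,g_2\}$ and $A_1=\{g_2\},A_2=\{g_1,g_3\}$ forces $v_1(g_2),v_1(g_3)<\alpha a_1$. Ruling out $A_1=\{g_2,g_3\},A_2=\{g_1\}$ forces $b_1<\alpha\max\{v_2(g_2),v_2(g_3)\}$. So $\sigma(1)$ must value every later good far \emph{below} $g_1$, while $\sigma(2)$ values some later good far above it. Both ``high for $\sigma(t)$'' and ``high for the prefix'' are the reverse of this.

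Your first attempt, two levels $L<H$, cannot work at all. The first condition forces $a_1=H$, the symmetric condition for $\sigma=(2,1)$ forces $b_1=H$, and then $H<\alpha H$.

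The missing idea is a \emph{drop} in $\sigma(t)$'s value at $g_{t+1}$, combined with levels that grow along the arrivals. The paper sets $v_{\pi_r}(g_t)=c_t=(\alpha^2/8)^{n-t}$ for $t\le r$ and $\alpha c_r/4$ for $t>r$. The argument at the first deviation $t$ then runs as follows.
\begin{enumerate}
\item Agent $\pi_t$ can collect at most $\alpha c_t/2<\alpha c_t$, so the recipient $\pi_r$ of $g_t$, with $r>t$, must hold $g_t$ alone.
\item Even after her own drop, $\pi_r$ values every later good at least $\alpha c_{t+1}/4=2c_t/\alpha$.
\item There remain $n+1-t$ goods for only $n-t$ empty bundles, so some later good shares a bundle not belonging to $\pi_r$, and $\alpha$-EFX fails.
\end{enumerate}
This chain over all pairs $t<r$ is why the construction needs $n$ geometrically separated levels rather than two or three.

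\paragraph{Converse.} The claim that an EFX allocation ``exists for additive valuations'' is open in general. For $m=n+1$ you need the paper's picking argument: fix the pair's recipient $i$, let the other agents in turn take a most valuable remaining good, and give the last two goods to $i$.

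More importantly, you need that argument's \emph{count}: the $n$ choices of $i$ yield $n$ distinct exact EFX allocations of the one-pair shape. From a single EFX allocation, your mixture can certify only $1/((\binom n2+1)n!)$, which is below $2/((n+1)n!)$ for every $n\ge2$.

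With the count, your mixture reaches the stated bound if you put weight $2/(n+1)$ on the first strategy.
\begin{enumerate}
\item If some EFX allocation puts $g_{n+1}$ in the pair, the uniform bijection matches it with probability $1/n!$, and placing the $(n+1)$st arrival as if it were last completes it.
\item Otherwise all $n$ allocations have their pair among $g_1,\dots,g_n$, and the second strategy hits one with probability at least $n/(\binom n2 n!)=2/((n-1)n!)$.
\end{enumerate}
Both cases give exactly $2/((n+1)n!)$. The paper instead samples uniformly among all $\binom{n+1}2 n!$ one-pair allocations. That puts the same mass $2/(n+1)$ on the first shape and needs no adaptive step.
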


The construction chooses a random ordering of the agents. At each of the first $n-1$
arrivals, the next agent in this ordering must receive the good, but her identity is not
revealed until the following arrival. A single incorrect choice rules out an $\alpha$-EFX
allocation.

\begin{proof}
Fix $n$ and $\alpha$, and let
\[
 c_t=\left(\frac{\alpha^2}{8}\right)^{n-t}
 \quad\text{for all }t\in[n].
\]
Before the run, choose a uniform random permutation $(\pi_1,\ldots,\pi_n)$ of the agents.
For $t\in[n]$ and $r\in[n]$, set
\begin{equation}
 v_{\pi_r}(g_t)=
 \begin{cases}
 c_t,&t\le r,\\
 \alpha c_r/4,&t>r.
 \end{cases}
 \label{eq:efx-permutation-values}
\end{equation}
Give $g_{n+1}$ the same valuation vector as $g_n$. These choices specify the entire input.
For $t<n$, the agents $\pi_t,\ldots,\pi_n$ have the same value for each of
$g_1,\ldots,g_t$, and $\pi_t$'s value first decreases at $g_{t+1}$.

In any $\alpha$-EFX allocation of these $n+1$ strictly positive goods, every agent must
receive a good. Otherwise, an agent with an empty bundle envies a bundle containing at least
two goods after either of them is removed. Thus exactly one bundle contains two goods and all
others contain one.

We show that every successful allocation must give $g_t$ to $\pi_t$ for all $t<n$.
Suppose otherwise, and let $t<n$ be the smallest index such that $\pi_t$ does not receive
$g_t$. Agent $\pi_t$ receives none of the
first $t$ goods. Each later good is worth $\alpha c_t/4$ to her, so
\[
 v_{\pi_t}(A_{\pi_t})\le\frac{\alpha c_t}{2}<\alpha c_t.
\]
Consequently, the recipient of $g_t$ cannot receive any other good: removing that other
good would leave value $c_t$ to $\pi_t$. Since the earlier goods went to
$\pi_1,\ldots,\pi_{t-1}$, the recipient of $g_t$ is some $\pi_r$ with $r>t$, and her
final utility is $v_{\pi_r}(A_{\pi_r})=c_t$.

Every later good $g$ has value at least $\alpha c_{t+1}/4$ to $\pi_r$. Indeed, its value
is at least $c_{t+1}$ until her value decreases, and is $\alpha c_r/4$ thereafter. Hence
\[
 \alpha v_{\pi_r}(g)\ge\frac{\alpha^2 c_{t+1}}4
 =2c_t>v_{\pi_r}(A_{\pi_r}).
\]
The first $t$ goods have distinct recipients, and there are $n+1-t$ goods still to arrive
but only $n-t$ empty bundles. Some later good must therefore share a bundle with another
good. This bundle cannot belong to $\pi_r$, whose good must remain alone. Removing the
other good from that bundle contradicts the displayed inequality and $\alpha$-EFX.
This proves the required assignments of the first $n-1$ goods.

Fix a deterministic rule. Conditional on its earlier required assignments being correct,
$\pi_t$ is uniform among the $n-t+1$ agents whose values have not yet decreased. Their
revealed values do not distinguish them. The probability that all required assignments
are correct is therefore at most
\[
 \prod_{t=1}^{n-1}\frac1{n-t+1}=\frac1{n!}.
\]
Averaging over the random choices of $\calA$ and then choosing a fixed input proves
\eqref{eq:efx-confidence}. Knowing the common number of goods does not reveal the permutation.

For the positive guarantee, use the counting argument of
\citet[Proposition~3.6]{neohTeh2025efx}. Every input with $n+1$ goods has at least $n$ exact EFX
allocations with one two-good bundle and $n-1$ singleton bundles. To see this, choose any
agent to receive the pair. With all values known, let the other agents, in any order,
each choose a most valuable remaining good, and give the last two goods to the chosen
agent. Each of the other agents values her own good at least as much as either good in the
pair. Different choices of the pair's recipient give different allocations.

There are $\binom{n+1}{2}n!$ allocations of this form. An online algorithm can choose one
uniformly in advance as a sequence of recipients for the first $n+1$ arrival positions,
and allocate any later goods arbitrarily. On every input with $n+1$ goods, its exact EFX
success probability is at least
\[
 \frac{n}{\binom{n+1}{2}n!}=\frac{2}{(n+1)n!}.
\]
The algorithm uses only $n$, not advance knowledge of the number of goods.
\end{proof}

With the same number of goods, exact PROP1 is deterministically achievable: give the first $n$
goods to distinct agents and allocate the last good arbitrarily. Every agent then has at
most $n$ outside goods, so
\[
 V_i\le v_i(A_i)+nb_i\le n(v_i(A_i)+b_i).
\]
Thus the difference between PROP1 and EFX already occurs when there is only one more
good than agents.

The same bounds hold for the optimal worst-case expected realized EFX guarantee restricted
to $n+1$ goods:
\begin{equation}
 \frac{2}{(n+1)n!}
 \le\sup_{\calA}\inf_{I:\,m=n+1}\E[\rho_{\text{EFX}}(\calA,I)]
 \le\frac1{n!}.
 \label{eq:efx-expected-short}
\end{equation}
The lower bound follows from the exact EFX probability in \cref{thm:efx-confidence}.
For the upper bound, fix $\calA$. For each $\alpha\in(0,1)$, the theorem gives an input with
expected ratio at most $\alpha+(1-\alpha)/n!$. As in \cref{cor:ef1-expected}, take the
infimum over inputs before letting $\alpha$ tend to zero. By Stirling's formula, both the
optimal success probability for each positive factor and the expected guarantee in
\eqref{eq:efx-expected-short} are $\exp(-n\ln n+n+O(\ln n))$.

Knowing the agents' maximum values would reveal the permutation in
\eqref{eq:efx-permutation-values}. The next result shows that exponentially small success
probability persists even with this information, using only three positive values.

\begin{theorem}
\label{thm:efx-known-max}
For every $n\ge2$, every $\alpha\in(0,1]$, and every randomized online algorithm $\calA$,
there is a fixed input $I$ with exactly $n+1$ goods such that
\begin{equation}
 \Prb[\rho_{\text{EFX}}(\calA,I)\ge\alpha]
 \le\frac{1+\lfloor n^2/4\rfloor}{\binom n{\lfloor n/2\rfloor}}
 =O(n^{5/2}2^{-n}).
 \label{eq:efx-known-max}
\end{equation}
All values belong to $\{\alpha^2/16,\alpha/4,1\}$, and the algorithm may be told the
number of goods and every agent's exact maximum value $\nu_i=1$ before execution.
\end{theorem}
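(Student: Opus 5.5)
The bound $(1+\lfloor n^2/4\rfloor)/\binom n{\lfloor n/2\rfloor}$ suggests the proof plan. I would hide a uniformly random set $S\subseteq[n]$ of size $s=\lfloor n/2\rfloor$ and build a family of $\binom ns$ inputs. I would then show that every deterministic rule succeeds on at most $1+s(n-s)=1+\lfloor n^2/4\rfloor$ of them. Averaging over the rule's random choices and fixing an input then gives \eqref{eq:efx-known-max}, exactly as in the proofs of \cref{thm:efx-confidence} and \cref{thm:mms-confidence}. The asymptotic form $O(n^{5/2}2^{-n})$ follows from $\binom n{\lfloor n/2\rfloor}=\Theta(2^n/\sqrt n)$.

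The construction begins with $s$ \emph{unit goods}, worth $1$ to every agent. These reveal nothing about $S$, and they already make every agent's maximum value equal to $1$. Telling the algorithm $\nu_i=1$ and the number of goods $n+1$ therefore leaks nothing. The remaining $n+1-s$ goods carry the hidden information through the values $\alpha/4$ and $\alpha^2/16$.
\begin{itemize}
\item Outside $S$: an agent values the later goods so that she is forced to take a good there.
\item Inside $S$: an agent values the later goods at only $\alpha/4$ or $\alpha^2/16$, so that a member of $S$ left without a unit good ends with utility below $\alpha$ times her value for a unit good.
\end{itemize}
As in \cref{thm:efx-confidence}, every agent must receive a good and exactly one bundle contains two goods. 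So the $s$ unit goods go to a set $T$ of $s$ distinct agents, and $T$ is fixed by the rule independently of $S$. The key combinatorial claim is the following: if the allocation is $\alpha$-EFX, then $|T\setminus S|\le1$, and when $T\ne S$ the later structure pins down the single swapped pair.

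Counting the sets $S$ compatible with a fixed $T$ then gives at most $S=T$ plus the $s(n-s)$ sets obtained by exchanging one element of $T$ with one of $[n]\setminus T$. This is where $1+\lfloor n^2/4\rfloor$ comes from.

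To prove the claim, I would argue by EFX violations as in \cref{thm:efx-confidence}. Suppose two agents of $S$ miss unit goods. They have no way to be compensated: later goods are worth at most $\alpha/4$ to them, and two such agents together with the single two-good bundle force some agent of $S$ to have utility at most $\alpha/4+\alpha^2/16<\alpha$. Since unit goods are worth $1$ to that agent, I would place one later good in the same bundle as a unit good, or otherwise use a pigeonhole on bundles, to produce a bundle that still holds value $1$ for her after removing any single good. The levels $\alpha/4$ and $\alpha^2/16$ are chosen so that these chains of inequalities close, in analogy with the choice $c_t=(\alpha^2/8)^{n-t}$ earlier.

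I expect the main obstacle to be designing the later $n+1-s$ valuation vectors so that a single swap is tolerated but two are not. The design must also ensure that the later goods do not reveal $S$ before the unit goods are committed, which is automatic because the unit goods arrive first. I would first try having the later goods worth $1$ to agents outside $S$ and $\alpha/4$ to agents in $S$, with the final good worth $\alpha^2/16$ to everyone. I would then check the envy inequalities case by case, adjusting which later goods carry which level if the one-swap case fails.
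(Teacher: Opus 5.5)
\textbf{There is a genuine gap: the construction is missing, and the design you propose fails.}

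Your averaging and counting skeleton matches the paper's: a hidden uniform $\lfloor n/2\rfloor$-set $S$, the necessary condition that at most one member of $S$ misses the first-phase goods, and the count $1+\lfloor n^2/4\rfloor$. What you have not supplied is an input family for which that necessary condition actually holds.

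Your concrete design has $s$ unit goods, then $n-s$ middle goods worth $1$ outside $S$ and $\alpha/4$ inside $S$, then a final good worth $\alpha^2/16$ to all. On it, the following deterministic rule is exactly EFX for every $S$. Give the unit goods to distinct agents of a fixed set $T$, give one middle good to each agent outside $T$, and give the final good to any agent outside $T$.
\begin{itemize}
\item All bundles except $\{m,f\}$ are singletons, so they impose no constraint.
\item Every agent's utility is at least her value for a middle good, since the middle goods have identical valuation vectors.
\item Every agent's utility is also at least $\alpha/4\ge\alpha^2/16$.
\end{itemize}
So the success probability is $1$. This is not a matter of tuning the later goods. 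If the first goods are worth $1$ to everyone and all values are at most $1$, a recipient of one of them has utility at least $1$. In any $\alpha$-EFX allocation there is one two-good bundle and $n-1$ singletons, so removing a good from any bundle leaves at most one good, worth at most $1$. Hence first-phase recipients can never witness a violation. Nothing forces a later good into a bundle with a unit good either, since the rule can keep unit goods as singletons. Relatedly, ``exactly one bundle has two goods'' does not by itself give distinct recipients for the unit goods; that has to be derived from the presence of a deprived agent.

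The paper reverses the roles.
\begin{itemize}
\item The first $q=\lfloor n/2\rfloor$ goods are worth only $\alpha/4$ to everyone.
\item The $n-q$ middle goods are worth $\alpha^2/16$ to members of $S$ and $1$ to others.
\item The final good is worth $1$ to everyone, which also makes $\nu_i=1$.
\end{itemize}
Let $C$ be the set of recipients of the first $q$ goods, and suppose two members of $S$ miss them. One of them, $i$, also misses the final good, so her utility is at most $\alpha^2/8$. Then every initial good and the final good must be singletons, since otherwise removing the partner leaves value $\alpha/4$ or $1$ to $i$. Hence $|C|=q=|S|$, so some $a\in C\setminus S$ exists, with utility only $\alpha/4$. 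The $n-q$ middle goods must go to the $n-q-1$ agents outside $C$ other than the final good's recipient. So some bundle holds two middle goods, each worth $1$ to $a$, which is a violation.

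The witness is therefore a first-phase recipient outside $S$. This requires the initial goods to be cheap relative to the middle goods for non-members. It also requires the final good to be valuable to everyone, so that it cannot absorb the extra good. Your proposal has neither feature, and its unit-value first phase rules out the first.
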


The first $\lfloor n/2\rfloor$ goods have the same value to everyone. The next goods are much
more valuable to the agents outside a uniformly chosen set of size $\lfloor n/2\rfloor$, so the
members of this set must receive almost all of the first $\lfloor n/2\rfloor$ goods. A final
good worth one to every agent makes all maximum values equal, and the proof in
\cref{app:efx-known-max} shows that it can compensate at most one member of this set who
received none of the first goods. For exact EFX, the construction uses
only $\{1/16,1/4,1\}$, independently of $n$.

\section{Approximate Envy-Freeness up to \texorpdfstring{$k$}{k} Goods}
\label{sec:efk}

\Cref{thm:ef1-confidence} leaves open what can be achieved when more goods may be removed. We
now show that a logarithmic number of removals suffices for an approximation factor close to
one, even under independent allocation. Consider $\Like$, which independently chooses a uniform
recipient from $N_t$ whenever $N_t\ne\varnothing$, and assigns all-zero goods arbitrarily
\citep{aleksandrovEtAl2015online,neohTeh2026closing}. For comparison, $\Rand$ (uniformly random allocation) independently chooses a uniform recipient
from all $n$ agents. Only $\Like$ is non-wasteful.

\begin{theorem}
\label{thm:efk}
For every fixed input $I$, every $\eps\in(0,1)$, and every positive integer $k$,
\begin{equation}
 \Prb\left[\rho_{\text{EF}k}(\Like,I)<1-\eps\right]
 \le n(n-1)e^{-\eps k/2}.
 \label{eq:efk-probability}
\end{equation}
In particular, for every $\delta\in(0,1)$, $\Like$ returns a $(1-\eps)$-EF$k$ allocation
with probability at least $1-\delta$ whenever $k\ge\lceil\frac{2}{\eps}\ln\frac{n(n-1)}{\delta}\rceil$.
The same bounds hold for $\Rand$. Neither rule is given $\eps$, $k$, $\delta$, or the number of goods.
\end{theorem}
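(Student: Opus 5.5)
We will prove \eqref{eq:efk-probability} by fixing an ordered pair of agents $i\ne j$, showing that the pair violates the $(1-\eps)$-EF$k$ condition with probability at most $e^{-\eps k/2}$, and then taking a union bound over the $n(n-1)$ ordered pairs. The second statement then follows by solving $n(n-1)e^{-\eps k/2}\le\delta$ for $k$.

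\textbf{Reduction to counting comparisons.} Fix $i\ne j$. Only goods with $v_i(g_t)>0$ affect $v_i(A_i)$, $v_i(A_j)$ and $b_{ij}^{(k)}$. List these goods as $g_{(1)},\dots,g_{(p)}$ in nonincreasing order of $u_r=v_i(g_{(r)})$, and set $u_{p+1}=0$. This order is fixed by the input and is used only in the analysis. Let $P_r$ and $Q_r$ be the numbers of goods among the first $r$ in this order that are received by $i$ and by $j$, respectively. Abel summation gives
\[
 v_i(A_i)=\sum_{r=1}^p(u_r-u_{r+1})P_r .
\]
Removing the $k$ goods of $A_j$ that $i$ values most leaves value
\[
 v_i(A_j)-b_{ij}^{(k)}=\sum_{r=1}^p(u_r-u_{r+1})\max\{0,Q_r-k\}.
\]
All coefficients $u_r-u_{r+1}$ are nonnegative. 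Hence the pair condition holds whenever
\[
 P_r\ge(1-\eps)(Q_r-k)\quad\text{for every }r,
\]
which is the ``single probability estimate'' mentioned in \cref{subsec:overview}.

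\textbf{Martingale bound.} Under $\Like$, each good $g_{(r)}$ goes to $i$ with probability $a_r=1/|N_t|$, since $i\in N_t$. It goes to $j$ with probability $b_r\in\{0,a_r\}$, so $b_r\le a_r$, and the choices for different goods are independent. Under $\Rand$ we have $a_r=b_r=1/n$. In both cases we only use $b_r\le a_r$ and $a_r+b_r\le1$.

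Let $S_r=(1-\eps)Q_r-P_r$ and $M_r=e^{\lambda S_r}$ for a parameter $\lambda>0$. Then
\[
 \E[M_r\mid M_{r-1}]=M_{r-1}\bigl(1+a_r(e^{-\lambda}-1)+b_r(e^{\lambda(1-\eps)}-1)\bigr).
\]
Because $b_r\le a_r$ and the $b_r$ term can only help when it is negative, the bracket is at most $1+a_r\bigl(e^{-\lambda}+e^{\lambda(1-\eps)}-2\bigr)$. So $(M_r)$ is a nonnegative supermartingale provided
\[
 e^{-\lambda}+e^{\lambda(1-\eps)}\le2 .
\]
Ville's maximal inequality then gives
\[
 \Prb\bigl[\max_r S_r>(1-\eps)k\bigr]\le e^{-\lambda(1-\eps)k}.
\]

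\textbf{Main obstacle: choosing $\lambda$.} To reach exactly $e^{-\eps k/2}$ we need $\lambda(1-\eps)\ge\eps/2$, so the natural choice is $\lambda=\eps/(2(1-\eps))$. With this choice $e^{\lambda(1-\eps)}=e^{\eps/2}$, and the supermartingale condition becomes
\[
 e^{-\eps/(2(1-\eps))}+e^{\eps/2}\le2\quad\text{for all }\eps\in(0,1).
\]
I would verify this via $e^{-x}\le1-x+x^2/2$ and a convexity or Taylor comparison. I expect this step to be the main obstacle, since the inequality is tight to second order near $\eps=0$.

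If it fails near $\eps\to1$ (it should not, since the left-hand side tends to $0+e^{1/2}<2$), the fallback is to use the integrality of $P_r,Q_r$. Alternatively, one can settle for a slightly weaker constant and adjust the constant in the stated threshold for $k$.

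Once the pair bound holds, the union bound over ordered pairs gives \eqref{eq:efk-probability}. The same argument applies verbatim to $\Rand$. Neither rule uses $\eps$, $k$, $\delta$ or $m$.
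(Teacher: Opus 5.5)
This is the paper's proof up to rescaling. With $\alpha=1-\eps$ and $z=\eps/2$, your exponent $\lambda S_r$ with $\lambda=\eps/(2(1-\eps))$ is exactly the paper's $z\,(N_j(r)-N_i(r)/\alpha)$. The value-ordered revelation, the summation by parts, the maximal inequality and the union bound are also the same.

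The one step you leave open, $e^{\eps/2}+e^{-\eps/(2(1-\eps))}\le2$, is true on all of $(0,1)$, but the route you sketch does not establish it. Write $x=\eps/(2(1-\eps))$. Then the bound $e^{-x}\le1-x+x^2/2$ becomes too weak once $\eps$ exceeds about $0.73$. For $\eps\ge4/5$ we have $x\ge2$, so $1-x+x^2/2\ge1$, and the sum is bounded only by something larger than $2$.

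The paper closes this step in one line with the rational bounds $e^{z}\le(1-z)^{-1}$ and $e^{-y}\le(1+y)^{-1}$. They give
\[
 e^{\eps/2}+e^{-\eps/(2(1-\eps))}\le\frac{2}{2-\eps}+\frac{2(1-\eps)}{2-\eps}=2
\]
exactly, for every $\eps\in(0,1)$. This is the identity $(1+\alpha)/(1-z)=2$, which uses $\alpha+z=1-z$. Your fallback of weakening the constant is therefore unnecessary.

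The inequality is also not delicate near $\eps=0$: the left-hand side equals $2-\eps^2/4+O(\eps^3)$. Only the first-order terms cancel, so there is genuine second-order slack.
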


The number of removed goods depends on $n$, $\delta$, and the desired accuracy, but not on
the number of arriving goods or their values. We prove the bound by comparing how many of
an agent's most valuable goods go to her and to another agent. A single probability estimate
controls this comparison at every position in her value ordering.

\begin{proof}
Fix $i\ne j$ and put $\alpha=1-\eps$. Order the goods that $i$ values positively by decreasing
value to $i$, breaking ties by arrival index. Write the resulting values as
$a_1\ge\cdots\ge a_s>0$, and let $N_i(r)$ and $N_j(r)$ count the goods assigned to $i$ and
$j$ among the first $r$ positions. Both counts are zero at $r=0$. The case $s=0$ is immediate,
so assume $s>0$.

This order is used only in the proof. Because the allocations are independent on a fixed
input, we may reveal their recipients in this order. For the current good, let $p_i$ and
$p_j$ be its probabilities of going to $i$ and $j$. Under $\Like$, $p_i=1/|N_t|$ and $p_j$ is
either $1/|N_t|$ or zero; under $\Rand$, both equal $1/n$. In either case, $p_i\ge p_j$.
Set $S_r=N_j(r)-N_i(r)/\alpha$ and $z=\eps/2$.
Conditional on the recipients already revealed, the increment is $1$, $-1/\alpha$, or zero,
and its exponential satisfies
\begin{align}
 \E\left[e^{z(S_r-S_{r-1})}\mid S_0,\dots,S_{r-1}\right]
 &=1+p_j(e^z-1)+p_i(e^{-z/\alpha}-1)\notag\\
 &\le1+p_j(e^z+e^{-z/\alpha}-2)\le1.
 \label{eq:efk-exponential}
\end{align}
For the last inequality, $e^z\le(1-z)^{-1}$ and
$e^{-z/\alpha}\le(1+z/\alpha)^{-1}$ give
\[
 e^z+e^{-z/\alpha}
 \le\frac1{1-z}+\frac{\alpha}{\alpha+z}
 =\frac{1+\alpha}{1-z}=2,
\]
where $\alpha+z=1-z$ follows from $\alpha=1-\eps$ and $z=\eps/2$.

Thus $e^{zS_r}$ has conditional expectation at most its preceding value and begins at one.
Stop at the first position with $S_r>k$, or at $s$ if there is no such position. The stopped
sequence satisfies the same conditional expectation inequality, so its final expectation is
at most one. On the event that $S_r>k$ at some position, its final value exceeds $e^{zk}$.
Hence
\begin{equation}
 \Prb[\text{some }r\text{ has }S_r>k]\le e^{-zk}=e^{-\eps k/2}.
 \label{eq:efk-count-bound}
\end{equation}
Outside this event, every position satisfies $N_i(r)\ge\alpha\max\{0,N_j(r)-k\}$.

Remove the $k$ most valuable goods, according to $i$, from $A_j$, or all its goods if there
are fewer than $k$, with ties broken as above. Among the first $r$ positions, exactly
$\max\{0,N_j(r)-k\}$ goods in $A_j$ remain. Set $a_{s+1}=0$. Summing by parts gives us
\begin{align*}
 v_i(A_i)&=\sum_{r=1}^s(a_r-a_{r+1})N_i(r),\\
 v_i(A_j)-b_{ij}^{(k)}
 &=\sum_{r=1}^s(a_r-a_{r+1})\max\{0,N_j(r)-k\}.
\end{align*}
All coefficients are nonnegative, so $v_i(A_i)\ge\alpha(v_i(A_j)-b_{ij}^{(k)})$ except with
probability $e^{-\eps k/2}$. A union bound over the $n(n-1)$ ordered pairs proves
\eqref{eq:efk-probability} for both algorithms.
\end{proof}

The proof uses independence only to reveal allocations in decreasing value order; it does
not reorder the actual arrivals. The factor $1-\eps$ makes the count difference $S_r$ tend
to decrease, and \eqref{eq:efk-count-bound} controls all positions without a union bound over
goods. Moreover, $\Like$ retains its known exact ex-ante envy-freeness: the same comparison
$p_i\ge p_j$ gives $\E[v_i(A_i)]=\mu_i\ge\E[v_i(A_j)]$ for every pair. Thus the realized EF$k$
guarantee coexists with exact ex-ante fairness.

\subsection{Matching dependence for $\Like$}
\label{subsec:efk-like-lower}

For $\Like$, the logarithmic number of removals is necessary. The examples partition the
agents into disjoint pairs, and each good is worth one to the two members of a pair and zero
to everyone else.
Within each pair, approximate EF$k$ requires the two bundle sizes to remain close after
allowing $k$ removals. Independent allocation can produce a larger difference, and every
pair must satisfy the requirement in the same allocation.

\begin{proposition}
\label[proposition]{prop:like-efk-lower}
For every $n\ge2$, every $0<\eps\le1/4$, and every positive integer $k$ with $\eps k\ge4$,
there is a fixed binary-valued input $I$ with $\kappa=2$ such that
\begin{equation}
 \Prb\left[\rho_{\text{EF}k}(\Like,I)\ge1-\eps\right]
 \le\exp\left(-\frac12\left\lfloor\frac n2\right\rfloor e^{-13\eps k}\right).
 \label{eq:efk-like-lower}
\end{equation}
Consequently, in this parameter range, a guarantee of probability at least $1-\delta$ on
every input, for $0<\delta\le1/2$, requires $k\ge\frac1{13\eps}\ln\frac{\lfloor n/2\rfloor}{4\delta}$ whenever the right-hand side is positive.
\end{proposition}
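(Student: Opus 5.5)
The construction follows the hint before the statement. Partition $2\lfloor n/2\rfloor$ of the agents into $p=\lfloor n/2\rfloor$ disjoint pairs. For each pair $\{i,j\}$, present $m$ goods worth one to $i$ and $j$ and zero to everyone else, with $m$ chosen in terms of $k$ and $\eps$. Give any leftover agent nothing positive; she values nothing, so she is automatically satisfied. Then $\kappa=2$, and under $\Like$ each pair's goods are split by independent fair coins, independently across pairs.

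\emph{Reducing to a tail event.} For a pair with counts $X$ and $m-X$, where $X\sim\mathrm{Bin}(m,1/2)$, agent $i$ values $A_j$ at $m-X$. Removing $k$ goods leaves $\max\{0,m-X-k\}$. So $(1-\eps)$-EF$k$ for $i$ against $j$ fails exactly when $X<(1-\eps)(m-X-k)$. A sufficient condition for failure is that $X$ is sufficiently below $m/2$. Write $X=m/2-D$ and choose $m=\lceil ck/\eps\rceil$ for a suitable constant $c$. Failure then follows once $D\gtrsim k$, since $(1-\eps)(m/2+D-k)-(m/2-D)\approx 2D-k-\eps m/2$. Choosing $m=2k/\eps$, for example, makes the threshold $D\ge 3k/2$ or so, up to constants. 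Call this tail event $F$.

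\emph{Lower-bounding the tail.} I need a reverse Chernoff bound $\Prb[F]\ge e^{-13\eps k}$ in the regime $\eps\le1/4$, $\eps k\ge4$. The relevant deviation $D\asymp k$ with $m\asymp k/\eps$ gives $D^2/m\asymp\eps k$, so the Gaussian-type tail is $\exp(-\Theta(\eps k))$. I expect this to be the main obstacle: it needs an explicit constant valid for finite $m$. My first attempt will bound a single binomial point mass via Stirling-type estimates, $\binom{m}{x}2^{-m}\ge\frac{1}{\sqrt{2m}}e^{-2(m/2-x)^2/(m-\ldots)}$, or use the entropy form $\binom{m}{x}\ge 2^{mH(x/m)}/(m+1)$. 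The condition $\eps k\ge4$ lets polynomial prefactors like $1/\sqrt m$ or $1/(m+1)$ be absorbed into $e^{-c\eps k}$. The slack between the true constant and $13$ should cover this, and $\eps\le1/4$ keeps $D/m$ bounded away from $1/2$, so the entropy bound is of quadratic type. Rounding of $m$ and $D$ also needs care.

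\emph{Combining pairs.} Success requires no pair to fail, and the pairs are independent. Hence
\[
\Prb[\text{success}]\le(1-e^{-13\eps k})^{p}\le\exp(-p e^{-13\eps k}),
\]
which is stronger than the stated bound with factor $1/2$; that factor leaves extra slack for the tail constant. For the consequence, suppose the success probability is at least $1-\delta\ge1/2$ with $\delta\le1/2$. Then $\exp(-\tfrac12 p e^{-13\eps k})\ge1-\delta\ge e^{-2\delta}$, using $1-\delta\ge e^{-2\delta}$ for $\delta\le1/2$. So $pe^{-13\eps k}\le4\delta$, giving $k\ge\frac{1}{13\eps}\ln\frac{p}{4\delta}$.
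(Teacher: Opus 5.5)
Your construction, the reduction of a pair's failure to $X<(1-\eps)(m-X-k)$, the independence across pairs, and the derivation of the necessary bound on $k$ via $1-\delta\ge e^{-2\delta}$ all agree with the paper. The gap is in the step you flagged as the main obstacle: the route you propose for the reverse tail bound cannot work over the whole parameter range. You plan to lower-bound a single point mass of $\mathrm{Bin}(m,1/2)$ and absorb the prefactor $1/\sqrt m$ or $1/(m+1)$ into $e^{-c\eps k}$ using $\eps k\ge4$. But $m\asymp k/\eps=\eps k/\eps^2$, so $\ln m$ is not controlled by $\eps k$. Concretely, fix $\eps k=4$ and let $\eps\to0$, which the hypotheses allow. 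Then $m\ge8/\eps^2$, and every point mass of $\mathrm{Bin}(m,1/2)$ is at most $\binom{m}{\lfloor m/2\rfloor}2^{-m}=O(1/\sqrt m)=O(\eps)$, which eventually drops below the target $e^{-52}$. So no single-point estimate, whether Stirling or entropy-based, gives $\Prb[F]\ge e^{-c\eps k}$ uniformly. You must collect probability from a window of $\Omega(\sqrt m)$ values inside the failure region. Until that is done, your claimed strengthening to $(1-e^{-13\eps k})^{p}$ is also unsupported.

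The paper handles this with a change of measure. With $m_0=2\lceil k/\eps\rceil\ge2k/\eps$, failure is implied by $X<(1-\eps)m_0/2$. The paper compares $P=\mathrm{Bin}(m_0,1/2)$ with $Q=\mathrm{Bin}(m_0,1/2-\eps)$ on the window $\calE=\{|X-(1/2-\eps)m_0|\le\eps m_0/4\}$, which lies entirely below that threshold. Chebyshev and $\eps^2m_0\ge2\eps k\ge8$ give $Q(\calE)\ge1/2$. On $\calE$, the bound $\ln u\le u-1$ gives $\ln(P(X=x)/Q(X=x))\ge4\eps(x-m_0/2)\ge-5\eps^2m_0\ge-13\eps k$. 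Hence $P(\calE)\ge\tfrac12e^{-13\eps k}$ with no $m$-dependent prefactor, and this is where the factor $\tfrac12$ in the statement comes from.

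Alternatively, you could keep your Stirling estimate and sum it over about $\sqrt m$ consecutive integers just below $(1-\eps)m/2$. The number of points cancels the $1/\sqrt m$ prefactor. Widening the window by $\sqrt m$ adds only $O(\sqrt{\eps k}+1)$ to the exponent, and $\eps k\ge4$ absorbs this.
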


For the construction, give each pair $2\lceil k/\eps\rceil$ goods. The number of these goods
received by the first agent of the pair is binomial, and a deviation of order $k$ below its
mean violates $(1-\eps)$-EF$k$; this happens with probability at least
$\tfrac12e^{-13\eps k}$. Independence across pairs gives
\eqref{eq:efk-like-lower}; \cref{app:efk-like-lower} contains the probability calculation.
This matches the dependence on $n$, $\delta$, and $\eps$ in \cref{thm:efk} within the stated
range. The conclusion concerns $\Like$: Balanced $\Like$ already achieves exact EF1 on
binary-valued inputs \citep{aleksandrovEtAl2015online}.

The two bounds also determine how the expected realized EF$k$ ratio approaches one.

\begin{corollary}
\label[corollary]{cor:efk-expected}
For every $n\ge2$ and every positive integer $k$,
\begin{equation}
 R^{\Like}_{\text{EF}k}(n)
 \ge\max\left\{0,1-\frac{2(\ln(n(n-1))+1)}{k}\right\}.
 \label{eq:efk-expected}
\end{equation}
The same lower bound holds for $\Rand$. For all sufficiently large $n$ and every integer
$k\ge\ln n$,
\begin{equation}
 1-R^{\Like}_{\text{EF}k}(n)=\Theta\left(\frac{\ln n}{k}\right).
 \label{eq:efk-expected-order}
\end{equation}
\end{corollary}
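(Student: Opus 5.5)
The lower bound \eqref{eq:efk-expected} follows by integrating the tail bound of \cref{thm:efk}. Since $\rho_{\text{EF}k}\in[0,1]$, for every input $I$ we write
\[
1-\E[\rho_{\text{EF}k}(\Like,I)]=\int_0^1\Prb[\rho_{\text{EF}k}(\Like,I)<1-\eps]\,d\eps\le\int_0^1\min\{1,n(n-1)e^{-\eps k/2}\}\,d\eps .
\]
Split at $\eps_0=2\ln(n(n-1))/k$. The first part contributes at most $\eps_0$. Beyond $\eps_0$ the integrand is $e^{-(\eps-\eps_0)k/2}$, which integrates to at most $2/k$. Together this gives $1-\E[\rho]\le 2(\ln(n(n-1))+1)/k$, uniformly in $I$. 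Taking the infimum over inputs, together with $\rho\ge0$, yields \eqref{eq:efk-expected}. The identical argument applies to $\Rand$, since \cref{thm:efk} covers it.

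For \eqref{eq:efk-expected-order}, the upper bound $O(\ln n/k)$ on $1-R$ is immediate from \eqref{eq:efk-expected}, since $\ln(n(n-1))+1=O(\ln n)$. For the matching lower bound, we need an input on which $\E[\rho]\le1-c\ln n/k$. We use \cref{prop:like-efk-lower} with $\eps=c'\ln n/k$ for a small constant $c'$, for example $c'=1/26$. Then $e^{-13\eps k}=n^{-1/2}$, and the success probability is at most $\exp(-\tfrac12\lfloor n/2\rfloor n^{-1/2})$, which is at most $1/2$ for large $n$. It follows that
\[
\E[\rho]\le(1-\eps)+\eps\Prb[\rho\ge1-\eps]\le1-\eps/2,
\]
so $1-R\ge\eps/2=\Omega(\ln n/k)$.

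The main obstacle is the parameter range required by \cref{prop:like-efk-lower}: $\eps\le1/4$ and $\eps k\ge4$. With $\eps=c'\ln n/k$, the condition $\eps k\ge4$ means $c'\ln n\ge4$, which holds for all sufficiently large $n$. The condition $\eps\le1/4$ means $k\ge4c'\ln n$, which is guaranteed by $k\ge\ln n$ once $c'\le1/4$. The choice $c'=1/26$ satisfies both, so for large $n$ and $k\ge\ln n$ both constraints hold simultaneously. No adjustment of $\eps$ is needed in this range, but one must verify that $k\ge\ln n$ suffices for the chosen constants. This completes the $\Theta$ statement.
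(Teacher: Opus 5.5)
Your proposal is correct and follows essentially the same route as the paper: you integrate the tail bound of \cref{thm:efk} split at $2\ln(n(n-1))/k$, and you apply \cref{prop:like-efk-lower} with $\eps=\ln n/(26k)$, checking the same two parameter conditions. The only cosmetic difference is that you bound the success probability by $1/2$ to get $1-R\ge\eps/2$. The paper instead keeps the explicit factor $1-\exp\left(-\lfloor n/2\rfloor/(2\sqrt n)\right)$ and notes that it tends to one.
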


\begin{proof}
Integrating \eqref{eq:efk-probability} gives, for every input,
\[
 \E[1-\rho_{\text{EF}k}(\Like,I)]
 \le\int_0^1\min\{1,n(n-1)e^{-\eps k/2}\}\,d\eps
 \le\frac{2(\ln(n(n-1))+1)}k.
\]
For the last inequality, split the integral at $2\ln(n(n-1))/k$ if this point is below one;
otherwise the displayed upper bound is at least one. Taking the infimum of the expected
ratio over inputs proves \eqref{eq:efk-expected}; the argument for $\Rand$ is identical.

For the reverse estimate in \eqref{eq:efk-expected-order}, set $\eps=\ln n/(26k)$ in
\cref{prop:like-efk-lower}. When $k\ge\ln n$ and $n$ is sufficiently large, its conditions
hold. The resulting input satisfies
\[
 \E[\rho_{\text{EF}k}(\Like,I)]
 \le1-\frac{\ln n}{26k}
       \left(1-\exp\left(-\frac{\lfloor n/2\rfloor}{2\sqrt n}\right)\right).
\]
The parenthesized factor tends to one, proving the matching order.
\end{proof}

\section{Conclusion}
\label{sec:conclusion}

Against an oblivious adversary, online allocation can preserve almost all of an agent's
fractional share apart from an additive loss of $O_\eps(\log\log(n/\delta))$ times her largest
single-good value. Deficit Allocation achieves this by retaining fractional amounts that have not yet been
accounted for, rather than treating an agent as satisfied after a single allocation. The
resulting PROP1 approximation is $\Omega(1/\log\log(n/\delta))$ with probability at least
$1-\delta$, without supplying $\delta$ to the algorithm, and near-proportionality follows when
no good is too large. Alternatively, $(1-\eps)$-PROP$k$ is achievable on every input for
$k=O_\eps(\log\log(n/\delta))$. This follows by accounting separately for the larger goods and
bounding the remaining loss in terms of the $k$th largest value.
These guarantees contrast with EF1, for which every positive approximation factor can have
exponentially small success probability, and with MMS, for which this probability is bounded
away from one uniformly over positive factors. For EFX, only $n+1$ goods suffice to make
the success probability at most $1/n!$, a bound that is optimal within a linear factor.
An exponentially small bound persists even with three values and known maximum values. Allowing more
removals also gives a positive result for envy-freeness: $\Like$ achieves $(1-\eps)$-EF$k$ with
$k=O(\eps^{-1}\log(n/\delta))$, and its worst-case expected EF$k$ ratio is
$1-\Theta(\log n/k)$ for $k\ge\log n$.

The main remaining question is whether a constant worst-case expected realized PROP1 guarantee is achievable on unrestricted inputs, or whether some dependence on $n$ is necessary.
It is also open whether the doubly logarithmic guarantee can coexist with exact ex-ante proportionality, which \cref{prop:not-ex-ante} rules out for $\DA$. 

\bibliographystyle{plainnat}
\bibliography{bib}
\clearpage
\appendix

\section{Further Related Work}
\label[appendix]{app:further-related-work}

\paragraph{Ex-ante and realized fairness offline.}
The tension between ex-ante and realized fairness is also central to offline ``best of both
worlds'' allocation. \citet{azizEtAl2024best} give an allocation that is ex-ante envy-free and
EF1 in every realization, and \citet{babaioffEtAl2022best} obtain ex-ante proportionality
together with PROP1 and $1/2$-MMS in every realization. Offline randomized rounding can also
preserve fractional utilities in expectation while limiting each agent's realized loss to the
value of one good \citep{budishEtAl2013designing,babaioffEtAl2022best}. These algorithms,
however, use all valuations before allocating any good. Our results quantify the simultaneous
realized guarantees that remain attainable when each good must be allocated before future goods
are known.

\paragraph{Restricted valuations and stochastic arrivals.}
Additional structure permits stronger envy guarantees. \citet{aleksandrovEtAl2015online}
introduce $\Like$ and Balanced $\Like$; for binary valuations, Balanced $\Like$ achieves EF1 in every
realization. \citet{wangWei2026online} give EF1 and MMS guarantees for restricted valuation
classes, including exact guarantees for binary valuations, and rule out positive deterministic
approximations with three positive values. Under independent draws from a known distribution
over a fixed finite set of valuation vectors, \citet{benadeEtAl2024fair} obtain ex-post Pareto
efficiency together with pairwise EF1 or envy-freeness with high probability as the number of
goods grows, and \citet{gaoEtAl2021market} obtain asymptotic proportionality and envy-freeness
under independent draws from a fixed distribution, without knowing that distribution. Our
guarantees, in contrast, apply to every fixed input with unrestricted nonnegative additive
valuations, and our lower bounds quantify the success probability of randomized algorithms on
inputs fixed in advance.

\paragraph{Welfare, maximin shares, and competitive analysis.}
Another line of work studies online welfare maximization. \citet{cohenAgmon2024near} consider
indivisible goods arriving in uniformly random order, and \citet{banerjeeEtAl2022predictions}
study Nash social welfare for divisible goods using predictions of total values. Such
guarantees compare with an optimal offline allocation, whereas our benchmarks concern each
agent's fair share and every pairwise envy comparison. \citet{chenTan2026competitive} study
online fairness through competitive ratios relative to the best offline fairness factor. For
MMS, \citet{zhouEtAl2023multiagent} study deterministic online allocation and give a two-agent
impossibility without advance knowledge of total values. Our two-agent bound on the success
probability follows by choosing one of the inputs in their construction uniformly at random
before the run; the new part is the bound for arbitrary $n$. With identical valuations, expected
realized MMS is exactly the expected competitive factor for online machine covering, so an
$O(n^{-1/2})$ upper bound already follows from \citet{azarEpstein1998covering}. Our bound on
the success probability instead applies even when the positive approximation factor is
arbitrarily small, and our small-good result complements these bounds by giving
near-proportionality when each good is small relative to the proportional share.

\paragraph{Correlated selection and scheduling.}
Our selection analysis is related to online correlated selection, introduced for online matching
by \citet{fahrbachEtAl2022matching} and extended to selection among more than two elements by
\citet{gaoEtAl2021improved} and \citet{blancCharikar2021multiway}. These works bound the
probability that an element is not selected during specified rounds. We instead bound the
deficit that remains after repeated selections: each received good accounts for a bounded
fractional amount, while the remainder continues to affect later allocations. The induction and
concavity argument for selection outside a set follows \citet[Theorem~9]{gaoEtAl2021improved};
we additionally account for selection within the set, since a selected element may retain a
positive deficit (\cref{app:deficits}). Related probability bounds arise in processor
scheduling. \citet{benderEtAl2019backlog} obtain doubly exponential probability bounds in a
model where fractional amounts are added each round and one accumulated total may then be
reduced. Their algorithm may choose any total to reduce, whereas ours can reduce only the
recipient's deficit at the scale of the current good, so their result does not directly imply
our guarantee.

\paragraph{Other sequential models in social choice.}
In temporal voting, a fixed group of voters makes a collective decision in each of several rounds
\citep{aloufHeffetzEtAl2022better,lackner2020perpetual,bulteauEtAl2021justified,chandakEtAl2024proportional,elkindEtAl2025verifying,elkindEtAl2025chores,phillipsEtAl2026strengthening,teh2026price,zechEtAl2024aversion}; see
\citet{elkindEtAl2024multiwinner} for a unified framework. Typically, one candidate is selected
in each round, and every voter who approves it benefits. Each decision thus affects all voters,
as in fair public decision making \citep{conitzerEtAl2017public}, whereas in our model each good
benefits only the agent who receives it. Fairness in temporal voting is usually captured by proportional
representation axioms from multiwinner voting, such as justified representation, which concern
groups of voters with common approvals rather than individual agents. Some works also consider
proportionality for individual agents, which is closer to proportionality in fair division
\citep{elkindEtAl2022slot,elkindEtAl2024elections}.

Closer to fair division, several works allocate the same items to the same agents over multiple
rounds. \citet{igarashiEtAl2024repeated} study when envy-freeness or proportionality over all
rounds is compatible with Pareto optimality. When the items are matched to the agents in each
round, \citet{caragiannisNarang2024repeatedly} study EF1 over all rounds, where an agent's value
for an item may depend on how many times she has received it before;
\citet{micheelWilczynski2024repeated} study envy-based fairness over time with ordinal
preferences; and \citet{limEtAl2026repeated} maximize the smallest total utility of an agent.
\citet{cooksonEtAl2025temporal} allow a different set of goods on each day and seek allocations
that are EF1 not only on each day but also up to each day. These works assume that the
preferences for all rounds are known in advance, whereas in our model the values for each good
are revealed only when it arrives. Repeated matching has also been studied with preferences that
change over time, both with ordinal preferences \citep{hosseiniEtAl2015matching} and in
two-sided markets \citep{gollapudiEtAl2020almost}; in the latter, EF1 is achievable for symmetric
binary valuations without using future valuations.

\section{Joint Deficit Bounds for the Proofs of \texorpdfstring{\cref{thm:main,thm:near-proportional}}{Theorems \ref{thm:main} and \ref{thm:near-proportional}}}
\label[appendix]{app:deficits}

This appendix proves the probability bound used in \cref{thm:main} and its extension for
\cref{thm:near-proportional}. We state the argument for a finite set $\calE$ of elements. In
$\DA$, an element is an agent--scale pair; in the refined algorithm, it is one of the copies
of such a pair. The elements and their fractional amounts are fixed by the input, even though
the algorithm need not know them before they occur. The induction and concavity argument
for selection outside a set follows \citet[Theorem~9]{gaoEtAl2021improved}. The additional
argument handles selection within the set, since a selected element can retain a positive
deficit.

In round $t$, a vector $(x_e^t)_{e\in\calE}$ of nonnegative fractional amounts is revealed,
with $\sum_e x_e^t=1$. The sequence of these vectors is fixed in advance. Start with
$D_e^0=0$ for every $e$, and select an element $J_t$ with probability proportional to
$x_e^t w(D_e^{t-1})$. Update
\begin{equation}
 D_e^t=\max\{0,D_e^{t-1}+x_e^t-c\one\{J_t=e\}\}.
 \label{eq:general-update}
\end{equation}
Only an element with a positive fractional amount can be selected. The positive constant $c$
is the decrease applied on selection.

\begin{lemma}
\label[lemma]{lem:joint-deficit}
Suppose $0\le x_e^t\le\xi\le1$ in every round, where $\xi>0$, and let $c>\xi$.
For parameters $\gamma>0$, $\beta>1$, and $\theta\in(0,1)$, let $H(y)=\gamma(\beta^y-1)$,
$w(y)=e^{\theta H(y)}$, $C=1/(1-\theta)$, and $\tau=\exp(-(1-\theta)\gamma(\beta^{c-\xi}-1))$.
Assume that
\begin{equation}
 \beta-1\le\theta,\quad
 (1-\theta)\beta^{c-\xi}\ge1,\quad
 \gamma(\beta-1)\le1-\ln C-C\tau.
 \label{eq:deficit-conditions}
\end{equation}
Then, for every round $t\ge0$, every $S\subseteq\calE$, and all thresholds $r_e\ge0$,
\begin{equation}
 \Prb[D_e^t\ge r_e\text{ for all }e\in S]
 \le\exp\left(-\sum_{e\in S}H(r_e)\right).
 \label{eq:joint-deficit}
\end{equation}
\end{lemma}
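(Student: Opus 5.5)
We prove \eqref{eq:joint-deficit} by induction on $t$, simultaneously for all $S\subseteq\calE$ and all thresholds $(r_e)_{e\in S}$. The case $t=0$ is immediate, since all deficits are zero, $H(0)=0$, and $H(r_e)>0$ for $r_e>0$. For the inductive step, fix $S$ and $(r_e)$ and condition on the history up to round $t-1$. Put $r'_e=\max\{0,r_e-x_e^t\}$ and $X=\sum_{e\in S}x_e^t$.

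By \eqref{eq:general-update}, the event $\{D_e^t\ge r_e\ \forall e\in S\}$ is contained in the union of two kinds of events.
\begin{enumerate}
\item $J_t\notin S$ and $D_e^{t-1}\ge r'_e$ for all $e\in S$.
\item $J_t=f\in S$, $D_f^{t-1}\ge r_f+c-x_f^t\ge r_f+c-\xi$, and $D_e^{t-1}\ge r'_e$ for $e\in S\setminus\{f\}$.
\end{enumerate}
(If $r_e=0$ the corresponding constraint is vacuous, so we may drop such elements and assume $r_e>0$.)

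For the first kind, each element outside $S$ has weight at least $w(0)=1$, so the normalizer is at least $(1-X)+\sum_{e\in S}x_e^tw(D_e^{t-1})$. Hence
\[
 \Prb[J_t\notin S\mid\text{history}]\le\frac{1-X}{1-X+\sum_{e\in S}x_e^tw(D_e^{t-1})}.
\]
This is a nonincreasing function of the deficits in $S$. For the second kind, we use $\Prb[J_t=f\mid\text{history}]\le1$ for all but a bounded fraction of the cases. More precisely, we use that this probability is at most $x_f^tw(D_f^{t-1})$ divided by the normalizer, and then simply apply the induction hypothesis at threshold $r_f+c-\xi$ for $f$.

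To take expectations of the nonincreasing factor $g(D)\one\{D\ge r'\}$, we write it via the layer-cake formula as an integral of indicators of events $\{D_e^{t-1}\ge s_e\ \forall e\}$ with $s_e\ge r'_e$. We bound each of these by the induction hypothesis, applied to all threshold vectors. This is exactly where the joint form (all $S$, all thresholds) of the statement is needed, and it mirrors the induction-and-concavity argument of \citet[Theorem~9]{gaoEtAl2021improved}. Convexity of $e^{-\sum H}$ in the relevant direction then lets us replace the integral by its value at the worst case, which is the threshold vector $r'$ itself.

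After this reduction, everything comes down to one deterministic inequality in the variables $r'_e$, $x_e^t$, and $X$:
\[
 e^{-\sum_{e\in S}H(r'_e)}\,\frac{1-X}{1-X+\sum_{e\in S}x_e^te^{\theta H(r'_e)}}
 +\sum_{f\in S}x_f^t\,C\,e^{-H(r_f+c-\xi)-\sum_{e\ne f}H(r'_e)}
 \le e^{-\sum_{e\in S}H(r_e)}.
\]
Here the factor $C=1/(1-\theta)$ comes from bounding $x_f^tw(D_f)$ against the induction tail, via $\int\theta H'e^{-(1-\theta)H}$.

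To verify it, we proceed in three steps.
\begin{itemize}
\item First, bound the increments. Since $x_e^t\le1$ and $\beta-1\le\theta$, we have $H(r_e)-H(r'_e)\le\gamma\beta^{r'_e}(\beta-1)x_e^t$, which is at most $x_e^t\bigl(\theta H(r'_e)+\gamma(\beta-1)\bigr)$.
\item Second, use $(1-\theta)\beta^{c-\xi}\ge1$ to get $H(r_f+c-\xi)\ge H(r_f)/(1-\theta)+(1-\theta)\gamma(\beta^{c-\xi}-1)$. The second summand produces the factor $\tau$, making each within-$S$ selection term at most $C\tau x_f^te^{-\sum H(r_e)}$.
\item Third, divide through by $e^{-\sum H(r_e)}$ and use $1/(1+u)\le e^{-u+\dots}$ together with $\ln(1+u)$ concavity. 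The remaining scalar inequality should then follow exactly from the third condition $\gamma(\beta-1)\le1-\ln C-C\tau$ in \eqref{eq:deficit-conditions}.
\end{itemize}

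I expect this last scalar verification, and in particular the handling of selection within $S$ (the new ingredient relative to Gao et al.), to be the main obstacle. I would first try it for $|S|=1$ and then extend to general $S$ by the convexity of $u\mapsto(1-X)/(1-X+u)$.
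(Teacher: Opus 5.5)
\textbf{There is a genuine gap.} Your bound for selection outside $S$ points the wrong way. This case, not the within-$S$ case you flag, is the crux of the lemma.

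\emph{The outside-selection bound is reversed.} Write $W_S=\sum_{e\in S}x_e^tw(D_e^{t-1})$ and $B=\sum_{j\notin S}x_j^tw(D_j^{t-1})$. The conditional probability of $J_t\notin S$ is $B/(B+W_S)$, which is \emph{increasing} in $B$. Since $B\ge1-X$, you only get $B/(B+W_S)\ge(1-X)/(1-X+W_S)$, which is a lower bound. Elements outside $S$ with large deficits make selection outside $S$ likely, and that is exactly when every deficit in $S$ grows. Their deficits are also correlated with those in $S$, so no monotonicity or worst-case-at-$r'$ argument over the coordinates of $S$ can handle them.

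\emph{The missing idea.} Control the outside weights jointly with the event. Let $q=\sum_{e\in S}x_e^t$ (your $X$), $z=\sum_{e\in S}x_e^tw(r'_e)$, $P=e^{-\sum_{e\in S}H(r'_e)}$, and $\chi=\one\{D_e^{t-1}\ge r'_e\ \forall e\in S\}$. For each $j\notin S$, the paper applies the induction hypothesis to the \emph{larger} set $S\cup\{j\}$ and integrates against $w'$, giving $\E[\chi\, w(D_j^{t-1})]\le CP$. This, not a layer-cake over thresholds within $S$, is why the statement must be proved jointly for all sets. Put $Y=\chi\sum_{j\notin S}x_j^tw(D_j^{t-1})$. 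The probability of the event together with $J_t\notin S$ is at most $\E[f_z(\chi,Y)]$, where $f_z(a,b)=ab/(za+b)$ is jointly concave and nondecreasing in each coordinate. Jensen then gives
\[
 \E[f_z(\chi,Y)]\le f_z(P,C(1-q)P)=P\,\frac{C(1-q)}{z+C(1-q)}.
\]
The factor $C$ on $1-q$ is absent from the first term of your scalar inequality.

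\emph{The within-$S$ term also needs repair.}
\begin{enumerate}
\item The truncated tail integral gives $\E[w(D_f^{t-1})\one\{D_f^{t-1}\ge v,\dots\}]\le Ce^{-(1-\theta)H(v)-\cdots}$. So the exponent is $(1-\theta)H(r_f+c-\xi)$, not $H(r_f+c-\xi)$.
\item Your second bullet then yields $C\tau x_f^te^{-H(r_f)-\sum_{e\ne f}H(r'_e)}$, not $C\tau x_f^te^{-\sum_{e\in S}H(r_e)}$, because the other coordinates keep the lower thresholds $r'_e$.
\item Your scalar inequality bounds the normalizer below only by one. The excess factor $\exp(\sum_{e\ne f}(H(r_e)-H(r'_e)))$ is then unbounded; take $S=\{e_1,e_2\}$, $x_{e_1}^t=x_{e_2}^t=1/2$, and large equal thresholds. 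The induction does not close for $|S|\ge2$.
\end{enumerate}
The paper keeps the normalizer. On this event the weight in $S$ is at least $z$ and the weight outside is at least $1-q$, so each term is at most $x_f^tCP\tau/(z+1-q)$. Summing gives $P\,C^2\tau q/(z+C(1-q))$.

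The two pieces combine into $P(1-q+C\tau q)/(1-q+z/C)$. The induction closes using three facts together with $\beta-1\le\theta$ and the third condition:
\begin{enumerate}
\item the weighted AM--GM bound $\ln(1-q+z/C)\ge\sum_{e\in S}x_e^t(\theta H(r'_e)-\ln C)$;
\item $-\ln(1-q+C\tau q)\ge(1-C\tau)q$;
\item $H(r_e)-H(r'_e)\le(\beta-1)x_e^t(H(r'_e)+\gamma)$.
\end{enumerate}
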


The proof considers separately whether the selected element belongs to $S$. If it does not,
a concavity argument controls the weights outside $S$. If it does, its previous deficit must
be larger by at least $c-\xi$ than the threshold needed without selection. The rapid decrease
of the right-hand side of \eqref{eq:joint-deficit} makes this second possibility small enough
to preserve the bound. Choosing $w=e^{\theta H}$ with $\theta<1$ also keeps the relevant
weighted expectations finite.

\begin{proof}
We induct on $t$, with the induction hypothesis covering every set and every threshold vector.
The claim is immediate at $t=0$. Suppose it holds after round $t-1$, and fix $S$ and the
thresholds at round $t$. Thresholds equal to zero may be omitted, so assume all $r_e>0$.
Let $x_e=x_e^t$ and $D_e=D_e^{t-1}$, and set
\[
 a_e=\max\{0,r_e-x_e\},\quad
 q=\sum_{e\in S}x_e,\quad
 z=\sum_{e\in S}x_e w(a_e),\quad
 P=\exp\left(-\sum_{e\in S}H(a_e)\right).
\]
These quantities are deterministic. If $q=0$, no element in $S$ changes, and the claim follows
directly from induction. We may therefore assume that $q>0$, so $z>0$.

Let $X=\one\{D_e\ge a_e\text{ for every }e\in S\}$ and $Y=X\sum_{j\notin S}x_jw(D_j)$.
The event in \eqref{eq:joint-deficit} requires $X=1$. The induction hypothesis gives
$\E X\le P$. For any $j\notin S$, applying it to $S\cup\{j\}$ and integrating yields
\begin{align}
 \E[Xw(D_j)]
 &=\Prb[X=1]+\int_0^\infty
    w'(u)\Prb[X=1,D_j>u]\,du\notag\\
 &\le P\left(1+\int_0^\infty w'(u)e^{-H(u)}\,du\right)
 =CP.
 \label{eq:weighted-joint-tail}
\end{align}
The last equality uses $w'(u)=\theta H'(u)e^{\theta H(u)}$ and the substitution $H(u)$;
the integral equals $\theta/(1-\theta)$. Consequently,
\begin{equation}
 \E Y\le C(1-q)P.
 \label{eq:outside-expectation}
\end{equation}

\paragraph{Selection outside $S$.}
On $X=1$, the total weight in $S$ is at least $z$. Thus the conditional probability of both
the event in \eqref{eq:joint-deficit} and selection outside $S$ is at most $f_z(X,Y)$, where
\[
 f_z(a,b)=\frac{ab}{za+b}\quad\text{for }(a,b)\ne(0,0),
 \quad f_z(0,0)=0.
\]
The function is nondecreasing in each coordinate and jointly concave on the nonnegative
quadrant. Indeed, its partial derivatives are $b^2/(za+b)^2$ and $za^2/(za+b)^2$, and its
Hessian away from the origin is
\[
 \frac{2z}{(za+b)^3}
 \begin{pmatrix}-b^2&ab\\ab&-a^2\end{pmatrix},
\]
which is negative semidefinite. Continuity at the origin follows from
$0\le f_z(a,b)\le\min\{a,b/z\}$. Jensen's inequality, \eqref{eq:outside-expectation}, and
$\E X\le P$ therefore give
\begin{equation}
 \Prb[J_t\notin S\text{ and }D_e^t\ge r_e\text{ for all }e\in S]
 \le P\frac{C(1-q)}{z+C(1-q)}.
 \label{eq:selection-outside}
\end{equation}

\paragraph{Selection in $S$.}
Fix $e\in S$. If $J_t=e$ and $D_e^t\ge r_e$, then, because $r_e>0$,
\[
 D_e\ge r_e+c-x_e\ge a_e+c-\xi.
\]
The other elements $j\in S\setminus\{e\}$ must satisfy $D_j\ge a_j$. On this event, the
denominator of the selection probabilities is at least $z+1-q$, since $w\ge1$.

A truncated form of the integral in \eqref{eq:weighted-joint-tail} gives
\begin{align}
 &\E\left[w(D_e)\one\{D_e\ge a_e+c-\xi\text{ and }D_j\ge a_j
                     \text{ for all }j\in S\setminus\{e\}\}\right]\notag\\
 &\quad\le C\exp\left(-\sum_{j\in S\setminus\{e\}}H(a_j)
                      -(1-\theta)H(a_e+c-\xi)\right)\notag\\
 &\quad=CP\exp\left(H(a_e)-(1-\theta)H(a_e+c-\xi)\right)
 \le CP\tau.
 \label{eq:selected-truncated-tail}
\end{align}
To justify the truncation, for any threshold $v\ge0$ the corresponding expectation is the
boundary term $w(v)\Prb[D_e\ge v\text{ and }D_j\ge a_j\text{ for all }j\in S\setminus\{e\}]$ plus the integral from $v$ to
infinity. Their sum is at most
$C\exp(-\sum_{j\in S\setminus\{e\}}H(a_j)-(1-\theta)H(v))$ by induction.
For the last inequality in \eqref{eq:selected-truncated-tail}, use
\[
 H(a+c-\xi)=\beta^{c-\xi}H(a)+\gamma(\beta^{c-\xi}-1)
\]
and the second condition in \eqref{eq:deficit-conditions}.

The probability with $J_t=e$ is therefore at most $x_eCP\tau/(z+1-q)$. Summing over $e\in S$
and using $z+C(1-q)\le C(z+1-q)$ gives
\begin{equation}
 \Prb[J_t\in S\text{ and }D_e^t\ge r_e\text{ for all }e\in S]
 \le P\frac{C^2\tau q}{z+C(1-q)}.
 \label{eq:selection-inside}
\end{equation}
Combining \eqref{eq:selection-outside} and \eqref{eq:selection-inside}, we obtain
\begin{equation}
 \Prb[D_e^t\ge r_e\text{ for every }e\in S]
 \le P\frac{1-q+C\tau q}{1-q+z/C}.
 \label{eq:one-round-ratio}
\end{equation}

\paragraph{Completing the induction.}
The weighted arithmetic--geometric mean inequality, with weights $1-q$ and $(x_e)_{e\in S}$, gives
\[
 \ln(1-q+z/C)
 \ge\sum_{e\in S}x_e(\theta H(a_e)-\ln C).
\]
The last condition in \eqref{eq:deficit-conditions} implies $C\tau<1$, so
$-\ln(1-q+C\tau q)\ge(1-C\tau)q$. Hence
\begin{align*}
 \ln\frac{1-q+z/C}{1-q+C\tau q}
 \ge\theta\sum_{e\in S}x_eH(a_e)
       +q(1-\ln C-C\tau) \ge(\beta-1)\sum_{e\in S}x_e(H(a_e)+\gamma).
\end{align*}
Since $0\le x_e\le1$, convexity gives $\beta^{x_e}-1\le(\beta-1)x_e$. Also
$r_e\le a_e+x_e$. Therefore
\[
 H(r_e)-H(a_e)
 \le H(a_e+x_e)-H(a_e)
 \le(\beta-1)x_e(H(a_e)+\gamma).
\]
Substituting this bound into \eqref{eq:one-round-ratio} proves
\eqref{eq:joint-deficit} and completes the induction.
\end{proof}

\subsection{Parameters for Deficit Allocation in the Proof of \texorpdfstring{\cref{thm:main}}{Theorem \ref{thm:main}}}
\label{app:simple-parameters}

For $\DA$, apply \cref{lem:joint-deficit} with
\[
 \gamma=1,\quad\beta=4/3,\quad\theta=1/3,\quad c=6,\quad\xi=1.
\]
The first two conditions in \eqref{eq:deficit-conditions} are immediate. For the third,
$C=3/2$ and
\[
 \tau=\exp\left(-\frac23((4/3)^5-1)\right)<e^{-2}<\frac17.
\]
Using $\ln(3/2)<5/12$, we get
\[
 1-\ln C-C\tau
 >1-\frac5{12}-\frac3{14}
 =\frac{31}{84}>\frac13=\gamma(\beta-1).
\]
Thus all three conditions hold. After omitting all-zero arrivals, the fractional amounts of
$\DA$ sum to one in each round. They are fixed by the input, so \cref{lem:joint-deficit}
applies to the finite set of encountered agent--scale pairs. Its singleton case proves
\eqref{eq:marginal-deficit}.

\section{Proof of \texorpdfstring{\cref{thm:near-proportional}}{Theorem \ref{thm:near-proportional}}}
\label[appendix]{app:near-proportional}

Fix $\eps\in(0,1)$, and consider the algorithm $\calA_\eps$ described in
\cref{subsec:near-proportional}. We first verify that the same joint deficit analysis applies
when each decrease is only $1+\eps/2$. We then sum the resulting bounds over scales and copies.

\paragraph{The deficit bound.}
Let $L=\lceil8/\eps\rceil$, as in the algorithm. For \cref{lem:joint-deficit}, set
\[
 \theta=\frac{\eps^2}{512},\quad
 \beta=1+\theta,\quad
 \gamma=\frac{64}{\eps},\quad
 c=1+\frac\eps2,\quad
 \xi=\frac1L.
\]
Write $H_\eps(y)=\gamma(\beta^y-1)$. Then $e^{\theta H_\eps(y)}=w_\eps(y)$, exactly the
weight in the algorithm. Each copy receives at most $\xi\le\eps/8$ in a round, and
all fractional amounts across copies sum to one.

The first condition in \eqref{eq:deficit-conditions} holds with equality. To check the
other two, write $s=c-\xi$, so
\[
 1+\frac{3\eps}{8}\le s\le1+\frac\eps2\le\frac32.
\]
In particular, $s\theta\le\eps/8$. Bernoulli's inequality gives us
\[
 (1-\theta)\beta^s
 \ge(1-\theta)(1+s\theta)
 =1+\theta(s-1-s\theta)\ge1.
\]
The same inequalities imply
\[
 (1-\theta)\gamma(\beta^s-1)
 \ge\gamma\theta(1-\theta)s
 \ge\frac\eps8\left(1+\frac\eps4\right).
\]
Let $z=(\eps/8)(1+\eps/4)$; then $\tau\le e^{-z}$ and $z\le3\eps/16$. Using
$C\le1+2\theta$, $\ln C\le2\theta$, and $e^{-z}\le1-z+z^2/2$, we obtain
\begin{align*}
 1-\ln C-C\tau
 \ge1-e^{-z}-4\theta \ge z-z^2/2-4\theta
 \ge\frac\eps8+\frac{\eps^2}{32}
            -\frac{9\eps^2}{512}-\frac{\eps^2}{128} =\frac\eps8+\frac{3\eps^2}{512}
 \ge\gamma(\beta-1).
\end{align*}
All conditions of \cref{lem:joint-deficit} hold. In particular, every copy has the marginal
bound $\Prb[D_{i,d,\ell}\ge y]\le e^{-H_\eps(y)}$.

\paragraph{The utility bound.}
For each copy, let $Y_{i,d,\ell}$ be its total fractional amount and $N_{i,d,\ell}$ its number
of selections. The updates give
\[
 Y_{i,d,\ell}\le(1+\eps/2)N_{i,d,\ell}+D_{i,d,\ell}.
\]
The fractional amounts sum over copies to those of the original benchmark. Moreover, every
selection corresponds to a distinct good given to the relevant agent, with value at least
$(1+\eps/2)^d$. Therefore
\begin{align}
 \mu_i
 \le(1+\eps/2)\sum_{d,\ell}(1+\eps/2)^dY_{i,d,\ell} \le(1+\eps/2)^2v_i(A_i)
    +(1+\eps/2)\sum_{d,\ell}(1+\eps/2)^dD_{i,d,\ell}.
 \label{eq:refined-accounting}
\end{align}
For $\nu_i>0$, the sum of the coefficients over all encountered scales and copies is at most
\[
 L\sum_{d:\,(1+\eps/2)^d\le\nu_i}(1+\eps/2)^d
 \le\frac{2L(1+\eps/2)}{\eps}\nu_i.
\]
Define
\[
 B^{(\eps)}_{n,\delta}
 =\frac{\ln\left(1+\frac\eps{32}\ln(2n/\delta)\right)}
        {\ln(1+\eps^2/512)},
\]
so that $H_\eps(B^{(\eps)}_{n,\delta})=2\ln(2n/\delta)$. The argument in
\eqref{eq:deficit-moment}--\eqref{eq:weighted-deficit-tail} uses only the marginal deficit
bound and convexity of $e^{H(y)/2}$. Applying it with $H_\eps$ and the coefficient sum above
gives, except with probability $\delta/n$,
\[
 \sum_{d,\ell}(1+\eps/2)^dD_{i,d,\ell}
 \le\frac{2L(1+\eps/2)}{\eps}B^{(\eps)}_{n,\delta}\nu_i.
\]
An agent with $\nu_i=0$ satisfies the desired inequality deterministically. Taking a union
bound over agents and substituting into \eqref{eq:refined-accounting}, we obtain, with
probability at least $1-\delta$,
\begin{equation}
 v_i(A_i)\ge\frac{\mu_i}{(1+\eps/2)^2}
       -\frac{2L}{\eps}B^{(\eps)}_{n,\delta}\nu_i
 \quad\text{for all }i\in[n].
 \label{eq:refined-explicit}
\end{equation}
Finally,
\[
 (1+\eps/2)^{-2}\ge1-\eps,
 \quad
 B^{(\eps)}_{n,\delta}
 \le\frac{\ln(4/3)}{\ln(1+\eps^2/512)}B_{n,\delta},
\]
where $B_{n,\delta}$ is as in the proof of \cref{thm:main}. Therefore, with $C_\eps=\frac{2\lceil8/\eps\rceil\ln(4/3)}
              {\eps\ln(1+\eps^2/512)}$, \eqref{eq:refined-explicit} implies that, with probability at least $1-\delta$, every agent $i$
satisfies
\begin{equation}
 v_i(A_i)\ge(1-\eps)\mu_i-C_\eps B_{n,\delta}\nu_i.
 \label{eq:near-prop-explicit}
\end{equation}
Since $C_\eps$ depends only on $\eps$ and $B_{n,\delta}\le11\log\log(n/\delta)$ by the proof of
\cref{thm:main}, this proves \cref{thm:near-proportional}.

\subsection{The refined guarantee in the proof of \texorpdfstring{\cref{thm:propk}}{Theorem \ref{thm:propk}}}
\label{app:propk-refined}

We verify the comparison used for $\calA_{\eps/2}$ in the proof of \cref{thm:propk}.
First consider $\calA_\eps$ for any $\eps\in(0,1)$, retaining $L$, $H_\eps$, and
$B^{(\eps)}_{n,\delta}$ from this appendix. Fix an agent who values at least $k$ goods
positively. Let $a_i$ be her $k$th largest value, and let $G_i$ consist of the goods that she
values at a scale $d$ with $(1+\eps/2)^d>a_i$. As in the main proof, $|G_i|<k$ and both
\eqref{eq:propk-large-goods} and \eqref{eq:propk-kth-value} apply.

Restrict \eqref{eq:refined-accounting} to the scales $d$ with $(1+\eps/2)^d\le a_i$. Their
total coefficient, including the $L$ copies, is at most
\[
 \sum_{d:\,(1+\eps/2)^d\le a_i}\ \sum_{\ell\in[L]}(1+\eps/2)^d
 \le\frac{2L(1+\eps/2)}{\eps}a_i.
\]
The convexity and Markov argument preceding \eqref{eq:refined-explicit} therefore gives,
with probability at least $1-\delta/n$,
\[
 \sum_{d:\,(1+\eps/2)^d\le a_i}\ \sum_{\ell\in[L]}(1+\eps/2)^dD_{i,d,\ell}
 \le\frac{2L(1+\eps/2)}{\eps}B^{(\eps)}_{n,\delta}a_i.
\]
Substituting into the restricted accounting inequality and dividing by $(1+\eps/2)^2$ gives us
\begin{align*}
 v_i(A_i\setminus G_i)
 \ge\frac{1}{(1+\eps/2)^2}
       \sum_{t:g_t\notin G_i}x_{it}v_i(g_t)
       -\frac{2L}{\eps}B^{(\eps)}_{n,\delta}a_i \ge(1-\eps)\sum_{t:g_t\notin G_i}x_{it}v_i(g_t)
       -C_\eps B_{n,\delta}a_i.
\end{align*}
Here we used the same inequalities and the same choice of $C_\eps$ as in
\eqref{eq:near-prop-explicit}. Adding \eqref{eq:propk-large-goods} in the form
$v_i(A_i\cap G_i)+b_i^{(k)}\ge\sum_{t:g_t\in G_i}x_{it}v_i(g_t)$ gives
\[
 v_i(A_i)+b_i^{(k)}\ge(1-\eps)\mu_i-C_\eps B_{n,\delta}a_i.
\]
By \eqref{eq:propk-kth-value}, this implies
\begin{equation}
 \left(1+\frac{C_\eps B_{n,\delta}}k\right)(v_i(A_i)+b_i^{(k)})
 \ge(1-\eps)\mu_i.
 \label{eq:refined-propk-comparison}
\end{equation}
An agent with fewer than $k$ positively valued goods satisfies this inequality
deterministically. A union bound proves it for every agent with probability at least
$1-\delta$. Applying it with accuracy $\eps/2$ gives the comparison used in
\cref{thm:propk}, without changing the actual allocation or removing any arrival from the run.

\section{Probability Estimates in the Proof of \texorpdfstring{\cref{thm:ef1-confidence}}{Theorem \ref{thm:ef1-confidence}}}
\label[appendix]{app:ef1-estimates}

We use the notation of the proof of \cref{thm:ef1-confidence}. The allocation of the unit
goods is fixed by a deterministic rule, $r=8q$, and $n\ge1024q$.

\paragraph{The number of targets without unit goods.}
The set $E_0$ has size $r$ and is fixed before the uniform target set $S$ is revealed. Expanding
$2^{|S\cap E_0|}$ over subsets of $E_0$ gives
\[
 \E[2^{k_0}]
 =\sum_{T\subseteq E_0}\Prb[T\subseteq S]
 \le\sum_{T\subseteq E_0}(r/n)^{|T|}
 =(1+r/n)^r\le e^{r^2/n}\le e^{q/16}.
\]
The inequality for each subset follows by choosing its members without replacement from a
uniform $r$-element set. Markov's inequality now yields
\[
 \Prb[k_0>q/8]
 \le\exp\left(-\frac{2\ln2-1}{16}q\right)
 \le e^{-q/128},
\]
which proves \eqref{eq:ef1-overlap}.

\paragraph{Groups containing no agent in $E\setminus S$.}
Suppose the unit goods have distinct recipients, and fix $S$ with $k=|S\cap E|\le q/8$.
The $r-k\ge63q/8$ agents in $E\setminus S$ have independent uniform labels in $[q]$.
If $b\ge q/16$, there is a set of $q/16$ groups containing none of these agents. For any
fixed set of that size, the probability is $(15/16)^{r-k}$. A union bound therefore gives
\begin{align*}
 \Prb[b\ge q/16\mid S]
 &\le\binom q{q/16}(15/16)^{63q/8}\\
 &\le\exp\left(q\left[
       \frac{\ln(16e)}{16}+\frac{63}{8}\ln\frac{15}{16}
       \right]\right)
 \le e^{-q/8}.
\end{align*}
For the last inequality, $\ln2\le3/4$ and $\ln(1-1/16)\le-1/16$ make the coefficient
in brackets at most $-31/128<-1/8$. The estimate is uniform over all the fixed sets $S$
under consideration.

\paragraph{The number of active goods.}
Since $s=\sum_{j=1}^qX_j$ and the bits are independent and fair, Hoeffding's inequality gives
\[
 \Prb[s\le7q/16]
 =\Prb[s-q/2\le-q/16]\le e^{-q/128}.
\]
For a direct derivation, apply Markov's inequality to $\exp(\lambda(q/2-s))$ and use
\[
 \E[\exp(\lambda(q/2-s))]=\cosh(\lambda/2)^q\le e^{q\lambda^2/8}.
\]
Taking $\lambda=1/4$ gives the stated bound.

\section{Proof of \texorpdfstring{\cref{prop:like-efk-lower}}{Proposition \ref{prop:like-efk-lower}}}
\label[appendix]{app:efk-like-lower}

\begin{proof}
Partition $2\lfloor n/2\rfloor$ agents into disjoint pairs. For each pair, present
$m_0=2\lceil k/\eps\rceil$ goods worth one to its two members and zero to everyone else.
Any remaining agent values all goods at zero. These choices specify a fixed input, and
$\Like$ allocates the goods independently within each pair.

For one pair, let $X$ be the number of goods given to its first agent. Under
$P=\operatorname{Bin}(m_0,1/2)$, that agent violates $(1-\eps)$-EF$k$ when
\begin{equation}
 X<\frac{1-\eps}{2-\eps}(m_0-k).
 \label{eq:efk-binomial-threshold}
\end{equation}
Since $m_0\ge2k/\eps$, the right-hand side is at least $(1-\eps)m_0/2$.
We show that \eqref{eq:efk-binomial-threshold} holds with probability at least
$\tfrac12e^{-13\eps k}$.

For this calculation, compare $P$ with the distribution
$Q=\operatorname{Bin}(m_0,1/2-\eps)$ and consider the event
\[
 \calE=\left\{\left|X-(1/2-\eps)m_0\right|\le\eps m_0/4\right\}.
\]
Its upper endpoint is $(1/2-3\eps/4)m_0$, strictly below the threshold in
\eqref{eq:efk-binomial-threshold}. Also, $\eps^2m_0\ge2\eps k\ge8$, so Chebyshev's
inequality gives us
\[
 Q(\calE)\ge1-\frac{4}{\eps^2m_0}\ge\frac12.
\]
For any integer $x\in\calE$, comparing the binomial probabilities and using
$\ln u\le u-1$ gives
\begin{align*}
 \ln\frac{P(X=x)}{Q(X=x)}
 &=-x\ln(1-2\eps)-(m_0-x)\ln(1+2\eps)\\
 &\ge4\eps(x-m_0/2)\\
 &\ge-5\eps^2m_0\ge-13\eps k.
\end{align*}
The final inequality follows from $m_0<2k/\eps+2$, $\eps\le1/4$, and $\eps k\ge4$.
Summing over the event yields
$P(\calE)\ge e^{-13\eps k}Q(\calE)\ge\tfrac12e^{-13\eps k}$.

A $(1-\eps)$-EF$k$ allocation must satisfy the comparison within every pair. The
allocations of distinct pairs are independent, so its probability is at most
\[
 \left(1-\tfrac12e^{-13\eps k}\right)^{\lfloor n/2\rfloor}
 \le\exp\left(-\tfrac12\lfloor n/2\rfloor e^{-13\eps k}\right),
\]
which proves \eqref{eq:efk-like-lower}. If this upper bound is at least $1-\delta$, then
\[
 \tfrac12\lfloor n/2\rfloor e^{-13\eps k}
 \le-\ln(1-\delta)\le2\delta
\]
for $\delta\le1/2$. Rearranging gives the necessary bound on $k$.
\end{proof}

\section{Proof of \texorpdfstring{\cref{thm:efx-known-max}}{Theorem \ref{thm:efx-known-max}}}
\label[appendix]{app:efx-known-max}

\begin{proof}
Let $q=\lfloor n/2\rfloor$. Before the run, choose a uniform $q$-element subset
$S\subseteq[n]$, and present the following $n+1$ goods in the indicated order.
\begin{center}
\begin{tabular}{@{}lccc@{}}
\toprule
Goods & Number & Value to $i\in S$ & Value to $i\notin S$\\
\midrule
Initial goods & $q$ & $\alpha/4$ & $\alpha/4$\\
Middle goods & $n-q$ & $\alpha^2/16$ & $1$\\
Final good & $1$ & $1$ & $1$\\
\bottomrule
\end{tabular}
\end{center}
The final good makes every maximum value equal to one. Neither the initial valuation
vectors nor the announced maxima reveal $S$.

Fix a deterministic rule, and let $C$ be the set of recipients of the first $q$ goods.
This set is independent of $S$ and has size at most $q$. We show that success requires
\begin{equation}
 |S\setminus C|\le1.
 \label{eq:efx-known-max-necessary}
\end{equation}
Suppose instead that at least two members of $S$ receive no initial good. At most one
receives the final good, so choose another such agent $i$. As in the proof of
\cref{thm:efx-confidence}, every $\alpha$-EFX allocation of these $n+1$ strictly positive
goods has one two-good bundle and $n-1$ singleton bundles. All goods that $i$ can receive
are middle goods, so
\[
 v_i(A_i)\le\frac{\alpha^2}{8}<\frac{\alpha^2}{4}.
\]
Every initial good must remain alone in its recipient's bundle: removing any accompanying
good would leave value $\alpha/4$ to $i$, contradicting $\alpha$-EFX. The final good must
also remain alone, since $i$ values it at one and does not receive it.

In particular, the initial recipients are distinct, so $|C|=q$. Because $S\setminus C$
is nonempty and $|S|=|C|$, choose an agent $a\in C\setminus S$. Her final utility is
$v_a(A_a)=\alpha/4$. The $n-q$ middle goods must all go outside $C$ and outside the final
good's recipient. There are only $n-q-1$ such agents, so some bundle must contain two
middle goods. Agent $a$ values each at one, contradicting $v_a(A_a)=\alpha/4<\alpha$.
This proves \eqref{eq:efx-known-max-necessary}.

If $|C|=q$, exactly $1+q(n-q)$ sets $S$ satisfy
\eqref{eq:efx-known-max-necessary}: either $S=C$, or one member of $C$ is replaced by
one agent outside $C$. If $|C|=q-1$, there are only $n-q+1$ possibilities, and if
$|C|\le q-2$, there are none. In every case, at most $1+q(n-q)$ of the $\binom nq$
possible sets can permit success. Dividing by $\binom nq$, averaging over the random choices of $\calA$,
and choosing a fixed input proves \eqref{eq:efx-known-max}. The argument allows any
allocation of the remaining goods after the first $q$ arrivals, even if all their values
are revealed at that point.
\end{proof}

\end{document}